\theoremstyle{plain}
\newtheorem{theorem}{Theorem}
\newtheorem{proposition}[theorem]{Proposition}
\newtheorem{lemma}[theorem]{Lemma}
\newtheorem{corollary}[theorem]{Corollary}
\theoremstyle{definition}
\newtheorem{example}[theorem]{Example}
\newtheorem{problemenvironment}[theorem]{Problem}
\newcommand{\problemdef}[3]{
\begin{problemenvironment}[#1]
{\ }
\begin{itemize}
\item[] \textbf{Instance:} {#2}
\item[] \textbf{Question:} {#3}
\end{itemize}
\end{problemenvironment}
}
\newcommand{\lrangle}[1]{\langle #1 \rangle}
\newcommand{\ceil}[1]{\left\lceil #1 \right\rceil}
\newcommand{\floor}[1]{\left\lfloor #1 \right\rfloor}
\newcommand{\ul}[1]{\underline{#1}}
\newcommand{\prg}[1]{\noindent\tbf{{#1}}}
\newcommand{\commentout}[1]{}
\newcommand{\mbb}[1]{\mathbb{#1}}
\newcommand{\mrm}[1]{\mathrm{#1}}
\newcommand{\msf}[1]{\mathsf{#1}}
\newcommand{\tbf}[1]{\textbf{#1}}
\newcommand{\vecp}[1]{\vec{#1}^{\,\prime}}
\newcommand{\vecpp}[1]{\vec{#1}^{\,\prime\prime}}
\newcommand{\OPT}{\mathsf{ts}}
\newcommand{\POPT}{\mathsf{rt}}
\newcommand{\SOL}{\mathsf{TS}}
\newcommand{\PSOL}{\mathsf{RT}}
\newcommand{\PathAlg}{\mathsf{AP}}
\newcommand{\OES}{\textsf{\OE}}
\newcommand{\SAT}{Sep-SAT}
\newcommand{\tabResults}[1]{}
\newcommand{\figDMTSP}[1]{}
\newcommand{\figSATPTSPA}[1]{}
\newcommand{\figSATPTSPB}[1]{}
\title{The Time Complexity of Permutation Routing via Matching, Token Swapping and a Variant}
\author{%
Jun Kawahara
\\
Graduate School of Information Science, \\ Nara Institute of Science and Technology, Japan
 \and
Toshiki Saitoh
\\
Faculty of Computer Science and Systems Engineering,
\\ Kyushu Institute of Technology, Japan
 \and
Ryo Yoshinaka
\\
Graduate School of Information Sciences, Tohoku University, Japan
} 
\begin{document}
\maketitle              

\begin{abstract}
The problems of Permutation Routing via Matching and Token Swapping are reconfiguration problems on graphs. 
This paper is concerned with the complexity of those problems and a colored variant.
For a given graph where each vertex has a unique token on it, those problems require to find a shortest way to modify a token placement into another by swapping tokens on adjacent vertices.
While all pairs of tokens on a matching can be exchanged at once in Permutation Routing via Matching, 
Token Swapping allows only one pair of tokens can be swapped.
In the colored version, vertices and tokens are colored and the goal is to relocate tokens so that each vertex has a token of the same color.
We investigate the time complexity of several restricted cases of those problems and show when those problems become tractable and remain intractable.
\end{abstract}

\section{Introduction}
Alon et al.~\cite{AlonCG94} have proposed a problem called \emph{Permutation Routing via Matching} as a simple variant of routing problems.%
\footnote{In the preliminary version~\cite{KawaharaSY17} of this paper, this old problem was called Parallel Token Swapping due to the ignorance of the authors.}
Suppose that we have a simple graph where each vertex is assigned a token.
Each token is labeled with its unique goal vertex, which may be different from where the token is currently placed.
We want to relocate every misplaced token to its goal vertex.
What we can do in one step is to pick a matching and swap the two tokens on the ends of each edge in the matching.
The problem is to decide how many steps are needed to realize the goal token placement.
The bottom half of Figure~\ref{fig:TSP} illustrates a problem instance and a solution.
The graph has 4 vertices $1,2,3,4$ and 4 edges $\{1,2\},\{1,3\},\{2,4\},\{3,4\}$.
Each token $i$ is initially put on the vertex $5-i$.
By swapping the tokens on the edges in the matchings $\{\{1,2\},\{3,4\}\}$ and $\{\{1,3\},\{2,4\}\}$ in this order, we achieve the goal.
The original paper of Alon et al.~\cite{AlonCG94} and following papers are mostly interested in the maximum number of steps, denoted $\POPT(G)$, needed to realize the goal configuration from any initial configuration for an input graph $G$.
For example, Alon et al.\ have shown $\POPT(K_n) = 2$ for complete graphs $K_n$, Zhang~\cite{Zhang99} has shown $\POPT(T)=3n/2+(\log n)$ for trees $T$ of $n$ vertices, and Li et al.~\cite{LiLY10} have shown $\POPT(K_{m,n}) \in \floor{3m/2n}+O(1)$ for bipartite graphs $K_{m,n}$ with $m \ge n$ and $\POPT(C_n)=n-1$ for $n \ge 3$ for cycles $C_n$.
This paper is concerned with the problem where an initial configuration also constitutes an input and discusses its computational complexity.
We will show the following results, which were independently obtained by Banerjee and Richards~\cite{BanerjeeR17}.
\begin{itemize}
	\item Permutation routing via matching is NP-complete even to decide whether an instance admits a 3 step solution (Theorem~\ref{thm:3PTSPNPhard}).
	\item To decide whether 2 step solution exists can be answered in polynomial-time (Theorem~\ref{thm:2PTSPPoly}).
\end{itemize}
In addition, we present a polynomial-time algorithm that approximately solves Permutation Routing via Matching on paths whose output is at most one larger than that of the exact answer (Theorem~\ref{thm:PTSPPath}).
\begin{figure}[t]
\begin{center}
\newcommand{\sq}[4]{
\node (1) at (0,5) [vertex] {$1$};
\node (2) at (0.9,5) [vertex] {$2$};
\node (3) at (0,4.1) [vertex] {$3$};
\node (4) at (0.9,4.1) [vertex] {$4$};
\draw (1) -- (2) -- (4) -- (3) -- (1);
\node at (-0.3,5.3) [token] {#1};
\node at (1.2,5.3) [token] {#2};
\node at (-0.3,3.8) [token] {#3};
\node at (1.2,3.8) [token] {#4};
}
\newcommand{\sr}[5]{
\sq{#1}{#2}{#3}{#4}
\draw[->,semithick] (1.4,4.55) -- node[auto] {#5} (2.2,4.55);
\pgftransformxshift{3cm}
}
\tikzstyle{vertex}=[circle,draw,inner sep=0pt,minimum size=4mm]
\tikzstyle{token}=[rectangle,draw,fill=white,inner sep=0pt,minimum size=3mm]
\begin{tikzpicture}[scale=0.8]\small
\sr{$4$}{$3$}{$2$}{$1$}{$\{3,4\}$}
\sr{$4$}{$3$}{$1$}{$2$}{$\{1,3\}$}
\sr{$1$}{$3$}{$4$}{$2$}{$\{2,4\}$}
\sr{$1$}{$2$}{$4$}{$3$}{$\{3,4\}$}
\sq{$1$}{$2$}{$3$}{$4$}
\pgftransformxshift{-6cm}
\pgftransformyshift{-2.5cm}
\sq{$3$}{$4$}{$1$}{$2$}
\pgftransformxshift{-6cm}
\pgftransformyshift{2.5cm}
\draw[->,ultra thick] (0.45,3.6) -- (0.45,2.1) -- node[auto] {$\{\{1,2\},\{3,4\}\}$} (5.3,2.1);
\draw[<-,ultra thick] (12.45,3.6) -- (12.45,2.1) -- node[auto,swap] {$\{\{1,3\},\{2,4\}\}$} (7.5,2.1);
\end{tikzpicture}
\end{center}
\caption{\label{fig:TSP}%
Vertices and tokens are shown by circles and squares, respectively.
Optimal solutions for Token Swapping and Permutation Routing via Matching are shown by small and big arrows, respectively.}
\end{figure}

\emph{Token swapping}, introduced by Yamanaka et al.~\cite{YamanakaDIKKOSSUU14}, can be seen as permutation routing via ``edges''.
In this setting we can swap only two tokens on an edge at each step.
Figure~\ref{fig:TSP} shows that we require 4 steps in Token Swapping to realize the goal configuration, while 2 steps are enough in Routing via Matching.
Yamanaka et al.\ have presented several positive results on this problem in addition to classical results which can be seen as special cases~\cite{Jerrum85}.
Namely, graph classes for which Token Swapping can be solved in polynomial-time are paths, cycles, complete graphs and complete bipartite graphs.
They showed that Token Swapping for general graphs belongs to NP.
The NP-hardness is recently shown in preliminary versions~\cite{KawaharaSY16a,KawaharaSY17} of this paper and by Miltzow et al.~\cite{MiltzowNORTU16} and Bonnet et al.~\cite{BonnetMR16} independently.
On the other hand, some polynomial-time approximation algorithms are known for different classes of graphs including the general case~\cite{HeathV03,YamanakaDIKKOSSUU14,MiltzowNORTU16}.
Our NP-hardness result is tight with respect to the degree bound, as the problem can be solved in polynomial-time if input graphs have vertex degree at most 2.
\begin{itemize}
\item Token swapping is NP-complete even when graphs are restricted to bipartite graphs where every vertex has degree at most 3 (Theorem~\ref{thm:TSPNPhard}).
\end{itemize}
Moreover, we present two polynomial-time solvable subcases of Token Swapping.
One is of lollipop graphs, which are combinations of a complete graph and a path.
The other is the class of graphs which are combinations of a star and a path.

A variant of Token Swapping is \emph{$c$-Colored Token Swapping}.
Tokens and vertices in this setting are colored by one of $c$ admissible colors.
We decide how many swaps are required to relocate the tokens so that each vertex has a token of the same color. 
Yamanaka et al.~\cite{YamanakaHKOSUU15} have shown that $3$-Colored Token Swapping is NP-complete while $2$-Colored Token Swapping is solvable in polynomial time.
This problem and a further generalization are also studied in~\cite{BonnetMR16}.
In this paper we consider the colored version of Routing via Matching and show that it is also NP-complete.
\begin{itemize}
	\item $2$-Coloring Routing via Matching is NP-complete even to decide whether an instance admits a $3$-step solution (Theorem~\ref{thm:2PCTSP}).
	\item $3$-Coloring Routing via Matching is NP-complete even to decide whether an instance admits a $2$-step solution (Theorem~\ref{thm:3PCTSP2}).
\end{itemize}
The former result contrasts the fact that the $2$-Colored Token Swapping is solvable in polynomial-time~\cite{YamanakaHKOSUU15}.
The latter contrasts that to decide whether $2$-step solution exists for Permutation Routing is in P (Theorem~\ref{thm:2PTSPPoly}).
In addition, we present another contrastive result.
\begin{itemize}
	\item It is decidable in polynomial-time whether a 2-step solution exists in $2$-Coloring Routing via Matching (Theorem~\ref{thm:22CRM}).
\end{itemize}

One may consider permutation routing on graphs as a special case of the \emph{Minimum Generator Sequence Problem}~\cite{EvenG81}.
The problem is to determine whether one can obtain a permutation $f$ on a finite set $X$ by multiplying at most $k$ permutations from a finite permutation set $\Pi$,
where all of $X$, $f$, $k$ and $\Pi$ are input.
The problem is known to be PSPACE-complete if $k$ is specified in binary notation~\cite{Jerrum85}, while it becomes NP-complete if $k$ is given in unary representation~\cite{EvenG81}.
In our settings, permutation sets $\Pi$ are restricted to the ones that have a graph representation.
However, this does not necessarily mean that the NP-hardness of Permutation Routing via Matching implies the hardness of the Minimum Generator Sequence Problem, since the description size of all the admissible parallel swaps on a graph is exponential in the graph size.

\tabResults{
Known and new results on the TSP and the variants are summarized in Table~\ref{tab:results}.
\begin{table}\small
\begin{tabular}{|c|c|c|}
\hline
Graph class	&	TSP	&	PTSP
\\ \hline
General (bipartite)	&	NP-hard [0]	&	NP-hard [0]
\\ \hline
Complete	&	P~\cite{Jerrum85}	&	P [0]
\\ \hline
Complete bipartite	&	P~\cite{YamanakaDIKKOSSUU14}	&	---
\\ \hline
Tree	&	\scalebox{0.9}[1]{PTime $2$-approx.}~\cite{YamanakaDIKKOSSUU14}	&	---
\\ \hline
Square-path	&	\scalebox{0.9}[1]{PTime $2$-approx.}~\cite{HeathV03}	&	---
\\ \hline
Lollipop	&	P [0]	&	---
\\ \hline
Star-path	&	P [0]	&	---
\\ \hline
Cycle	&	P~\cite{Jerrum85}	&	---
\\ \hline
Path	&	P~\cite{Jerrum85}	&	\scalebox{0.9}[1]{PTime $(+ 1)$-approx.} [0]
\\ \hline
\end{tabular}
\begin{tabular}{|c|c|c|}
\hline
$k$	&	$k$-CTSP	&	$k$-PCTSP
\\ \hline
$\ge 3$ 	&	NP-hard~\cite{YamanakaHKOSUU15}	&	NP-hard [0]
\\ \hline
$= 2$ 	&	P~\cite{YamanakaHKOSUU15}	&	NP-hard [0]
\\ \hline
\end{tabular}
\caption{\label{tab:results}Complexity of the TSP and variants on different graph classes}
\end{table}
}

\section{Time Complexity of Token Swapping}
We denote by $G=(V,E)$ an undirected graph whose vertex set is $V$ and edge set is $E$.
More precisely, elements of $E$ are subsets of $V$ consisting of exactly two distinct elements.
A \emph{configuration} $f$ (on $G$) is a permutation on $V$, i.e., bijection from $V$ to $V$.
By $[u]_f$ we denote the orbit $\{\, f^{i}(u) \mid i \in \mbb{N} \,\}$ of $u \in V$ under $f$.
We call each element of $V$ a \emph{token} when we emphasize the fact that it is in the range of $f$.
We say that a token $v$ \emph{is on a vertex $u$ in $f$} if $v=f(u)$.
A \emph{swap} on $G$ is a synonym for an edge of $G$, which behaves as a transposition.
For a configuration $f$ and a swap $e \in E$, the configuration obtained by applying $e$ to $f$, which we denote by $fe$, is defined by 
\[
fe(u) = \begin{cases}
f(v) & \text{if $e=\{u,v\}$,}
\\
f(u)	& \text{otherwise.}
\end{cases}
\]
For a sequence $\vec{e}=\lrangle{e_1,\dots,e_m}$ of swaps, the length $m$ is denoted by $|\vec{e}|$.
For $i \le m$, by $\vec{e}_{|\le i}$ we denote the prefix $\lrangle{e_1,\dots,e_i}$.
The configuration $f \vec{e}$ obtained by applying $\vec{e}$ to $f$ is $(\dots((f e_1)e_2)\dots)e_m$.
We say that the token $f(u)$ on $u$ \emph{is moved to} $v$ by $\vec{e}$ if $f\vec{e}(v)=f(u)$.
We count the total moves of each token $u \in V$ in the application as
\[
	\msf{move}(f,\vec{e},u) = |\{\, i \in \{1,\dots,m\} \mid (f\vec{e}_{|\le i-1})^{-1}(u) \neq (f\vec{e}_{|\le i})^{-1}(u) \,\}|\,.
\]
Clearly $\msf{move}(f,\vec{e},u) \ge \msf{dist}(f^{-1}(u),(f\vec{e})^{-1}(u))$, where $\msf{dist}(u_1,u_2)$ denotes the length of a shortest path between $u_1$ and $u_2$,
and $\sum_{u \in V}\msf{move}(f,\vec{e},u) = 2|\vec{e}|$. 

We denote the set of \emph{solutions} for a configuration $f$ by
\[
	\SOL(G,f) = \{\, \vec{e} \mid \vec{e} \text{ is a swap sequence on $G$ such that } f \vec{e} \text{ is the identity}\,\}\,.
\]
A solution $\vec{e}_0 \in \SOL(G,f)$ is said to be \emph{optimal} if $|\vec{e}_0| =  \min \{\,|\vec{e}| \mid \vec{e} \in \SOL(G,f)\,\}$.
The length of an optimal solution is denoted by $\OPT(G,f)$.

\problemdef{Token Swapping}%
{A connected graph $G$, a configuration $f$ on $G$ and a natural number $k$.}%
{$\OPT(G,f) \le k$?}

\subsection{Token Swapping Is NP-complete}\label{sec:TSP-NPhard}
This subsection proves the NP-hardness of Token Swapping by a reduction from the 3DM, which is known to be NP-complete~\cite{Karp72}.

\problemdef{Three dimensional matching problem, 3DM}%
{Three disjoint sets $A_1,A_2,A_3$ such that $|A_1|=|A_2|=|A_3|$ and a set $T \subseteq A_1 \times A_2 \times A_3$.}%
{Is there $M \subseteq T$ such that $|M|=|A_1|$ and every element of $A_1 \cup A_2 \cup A_3$ occurs just once in $M$?} 

An instance of the 3DM is denoted by $(A,T)$ where $A = A_1 \cup A_2 \cup A_3$ assuming that the partition is understood.
Let $A_k=\{a_{k,1},\dots,a_{k,n}\}$ for $k \in \{1,2,3\}$ and $T = \{t_1,\dots,t_m\}$.
For notational convenience we write $a \in t$ if $a \in A$ occurs in $t \in T$ by identifying $t$ with the set of the elements of $t$.
We construct an instance $(G_T,f)$ of Token Swapping as follows.
The vertex set of $G_T$ is $V_A \cup V_T$ with
\begin{align*}
	V_A &= \{\,u_{k,i},u_{k,i}'\mid k \in \{1,2,3\} \text{ and } i \in \{1,\dots,n\} \}\,,
\\	V_T &= \{\,v_{j,k}, v_{j,k}' \mid j \in \{1,\dots,m\} \text{ and } k \in \{1,2,3\} \}\,.
\end{align*}
The edge set $E_T$ is given by
\begin{align*}
	E_T ={} & \{\, \{u_{k,i},v_{j,k}'\},\{u_{k,i}',v_{j,k}\} \mid \text{$a_{k,i} \in A_k$ occurs in $t_j \in T$} \,\}
	\\ &{} \cup \{\, \{v_{j,k},v_{j,l}'\} \subseteq V_T \mid j \in \{1,\dots,m\} \text{ and } k \neq l \,\}\,.
\end{align*}
We call the subgraph induced by $\{v_{j,1},v_{j,2}',v_{j,3},v_{j,1}',v_{j,2},v_{j,3}'\}$ the \emph{$t_j$-cycle}.
The initial configuration $f$ is defined by
\begin{align*}
	f(u_{k,i}) &= u_{k,i}' \text{ and } f(u_{k,i}') = u_{k,i} \text{ for all $a_{k,i} \in A_k$ and $k \in \{1,2,3\}$}\,,
\\	
	f(v_{j,k}) &= v_{j,k} \text{ and } f(v_{j,k}') = v_{j,k}' \text{ for all $t_j \in T$ and $k \in \{1,2,3\}$\,.}
\end{align*}
\begin{figure}[h]
\begin{center}
\tikzstyle{vertex}=[circle,draw,inner sep=0pt,minimum size=6mm]
\tikzstyle{token}=[rectangle,draw,fill=white,inner sep=0pt,minimum size=5.5mm]
\begin{tikzpicture}[scale=0.85]\small
\node (t1-) at (2.2,6.5) [vertex] {$v_{1,1}'$};
\node (t2) at (3.1,7) [vertex] {$v_{1,2}$};
\node (t3-) at (4,6.5) [vertex] {$v_{1,3}'$};
\node (t1) at (4,5.5) [vertex] {$v_{1,1}$};
\node (t2-) at (3.1,5) [vertex] {$v_{1,2}'$};
\node (t3) at (2.2,5.5) [vertex] {$v_{1,3}$};
\node (u1-) at (6.1,6.5) [vertex] {$v_{2,1}'$};
\node (u2) at (7,7) [vertex] {$v_{2,2}$};
\node (u3-) at (7.9,6.5) [vertex] {$v_{2,3}'$};
\node (u1) at (7.9,5.5) [vertex] {$v_{2,1}$};
\node (u2-) at (7,5) [vertex] {$v_{2,2'}$};
\node (u3) at (6.1,5.5) [vertex] {$v_{2,3}$};
\node (v1-) at (10,6.5) [vertex] {$v_{3,1}'$};
\node (v2) at (10.9,7) [vertex] {$v_{3,2}$};
\node (v3-) at (11.8,6.5) [vertex] {$v_{3,3}'$};
\node (v1) at (11.8,5.5) [vertex] {$v_{3,1}$};
\node (v2-) at (10.9,5) [vertex] {$v_{3,2}'$};
\node (v3) at (10,5.5) [vertex] {$v_{3,3}$};
\draw [-] (t1-) -- (t2) -- (t3-) -- (t1) -- (t2-) -- (t3) -- (t1-);
\draw [-] (u1-) -- (u2) -- (u3-) -- (u1) -- (u2-) -- (u3) -- (u1-);
\draw [-] (v1-) -- (v2) -- (v3-) -- (v1) -- (v2-) -- (v3) -- (v1-);
\node (a1) at (2.5,8.4) [vertex] {$u_{1,1}$};
\node (a2) at (4.3,8.4) [vertex] {$u_{1,2}$};
\node (b1-) at (6.1,8.4) [vertex] {$u_{2,1}'$};
\node (b2-) at (7.9,8.4) [vertex] {$u_{2,2}'$};
\node (c1) at (9.7,8.4) [vertex] {$u_{3,1}$};
\node (c2) at (11.5,8.4) [vertex] {$u_{3,2}$};
\node (a1-) at (9.7,3.6) [vertex] {$u_{1,1}'$};
\node (a2-) at (11.5,3.6) [vertex] {$u_{1,2}'$};
\node (b1) at (6.1,3.6) [vertex] {$u_{2,1}$};
\node (b2) at (7.9,3.6) [vertex] {$u_{2,2}$};
\node (c1-) at (2.5,3.6) [vertex] {$u_{3,1}'$};
\node (c2-) at (4.3,3.6) [vertex] {$u_{3,2}'$};
%
\draw [-] (t1) --  (a1-);
\draw [-] (t1-) -- (a1);
\draw [-] (t2) -- (b1-);
\draw [-] (t2-) -- (b1);
\draw [-] (t3) -- (c1-);
\draw [-] (t3-) -- (c1);
%
\draw [-] (u1) -- (a1-);
\draw [-] (u1-) -- (a1);
\draw [-] (u2) -- (b1-);
\draw [-] (u2-) -- (b1);
\draw [-] (u3) -- (c2-);
\draw [-] (u3-) -- (c2);
\draw [-] (v1) -- (a2-);
\draw [-] (v1-) -- (a2);
\draw [-] (v2) -- (b2-);
\draw [-] (v2-) -- (b2);
\draw [-] (v3) -- (c2-);
\draw [-] (v3-) -- (c2);
\node at (2.5,9.0) [token] {$u_{1,1}'$};
\node at (4.3,9.0) [token] {$u_{1,2}'$};
\node at (6.1,9.0) [token] {$u_{2,1}$};
\node at (7.9,9.0) [token] {$u_{2,2}$};
\node at (9.7,9.0) [token] {$u_{3,1}'$};
\node at (11.5,9.0) [token] {$u_{3,2}'$};
\node at (2.5,3.0) [token] {$u_{3,1}$};
\node at (4.3,3.0) [token] {$u_{3,2}$};
\node at (6.1,3.0) [token] {$u_{2,1}'$};
\node at (7.9,3.0) [token] {$u_{2,2}'$};
\node at (9.7,3.0) [token] {$u_{1,1}$};
\node at (11.5,3.0) [token] {$u_{1,2}$};
\end{tikzpicture}
\end{center}
\caption{\label{fig:3DM-TSP}
The graph and initial configuration of Token Swapping reduced from the 3DM instance in Example~\ref{ex:3dm-TSP}.
Vertices and tokens are denoted by circles and squares, respectively. The tokens which are already on the goal vertices in the initial configuration are omitted.}
\end{figure}
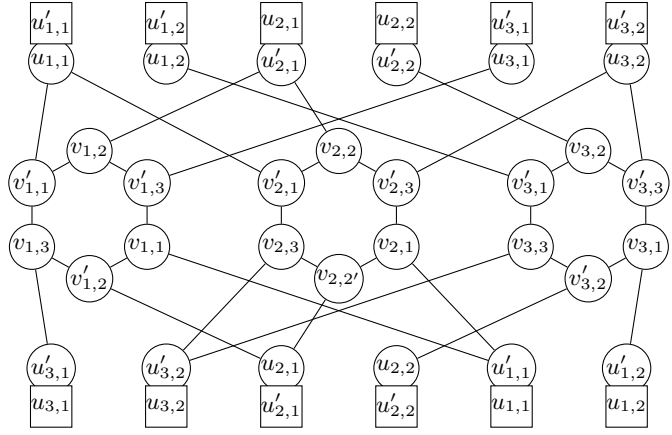

In the initial configuration $f$, all and only the tokens in $V_A$ are misplaced.
Each token $u_{k,i} \in V_A$ on the vertex $u_{k,i}'$ must be moved to $u_{k,i}$ via (a part of) $t_j$-cycle for some $t_j \in T$ in which $a_{k,i}$ occurs.
To design a short solution for $(G_T,f)$, it is desirable to have swaps at which both of the swapped tokens get closer to the destination.
If $(A,T)$ admits a solution, then one can find an optimal solution for $(G_T,f)$ of length $21n$, where $9n$ of the swaps satisfy this property as we will see in Lemma~\ref{lem:TSP-NPhard_upperbound}.
On the other hand, such an ``efficient'' solution is possible only when $(A,T)$ admits a solution as shown in Lemma~\ref{lem:TSP-NPhard_lowerbound}.
\begin{example}\label{ex:3dm-TSP}
Let $A = A_1 \cup A_2 \cup A_3$ and $T=\{t_1,t_2,t_3\}$ 
where $A_k=\{a_{k,1},a_{k,2}\}$ for $k \in \{1,2,3\}$, $t_1=\{a_{1,1},a_{2,1},a_{3,1}\}$, $t_2=\{a_{1,1},a_{2,1},a_{3,2}\}$ and $t_3= \{a_{1,2},a_{2,2},a_{3,2}\}$.
Figure~\ref{fig:3DM-TSP} shows the graph and initial configuration reduced from the 3DM instance $(A,T)$.
This instance $(A,T)$ has a solution $M=\{t_1,t_3\}$.
The proof of Lemma~\ref{lem:TSP-NPhard_upperbound} will give how to find an optimal solution for the reduced TSP instance corresponding to $M$.
A part of the solution is illustrated in Figure~\ref{fig:3DM-TSP2}.
\end{example}
\begin{figure}
\begin{center}
\tikzstyle{token}=[rectangle,draw,fill=white,inner sep=0pt,minimum size=5mm]
\newcommand{\configA}[8]{
\node (x) at (90:2.5) [] {#8 (part)};
\node (t1-) at (150:1) [token] {$v_{1,1}'$};
\node (t2) at (90:1) [token] {$v_{1,2}$};
\node (t3-) at (30:1) [token] {$v_{1,3}'$};
\node (t1) at (-30:1) [token] {$v_{1,1}$};
\node (t2-) at (-90:1) [token] {$v_{1,2}'$};
\node (t3) at (-150:1) [token] {$v_{1,3}$};
\draw [-] (t1-) -- (t2);
\draw [-] (t2) -- (t3-);
\draw [-] (t3-) -- (t1);
\draw [-] (t1) -- (t2-);
\draw [-] (t2-) -- (t3);
\draw [-] (t3) -- (t1-);
\node (a1) at (135:2.5) [token] {#1};
\node (b1-) at (90:2) [token] {#2};
\node (c1) at (45:2.5) [token] {#3};
\node (a1-) at (-45:2.5) [token] {#4};
\node (b1) at (-90:2) [token] {#5};
\node (c1-) at (-135:2.5) [token] {#6};
%
\draw [#7] (t1) --  (a1-);
\draw [#7] (t1-) -- (a1);
\draw [#7] (t2) -- (b1-);
\draw [#7] (t2-) -- (b1);
\draw [#7] (t3) -- (c1-);
\draw [#7] (t3-) -- (c1);
}
\newcommand{\configB}[9]{
\node (t1-) at (150:1) [token] {#1};
\node (t2) at (90:1) [token] {#2};
\node (t3-) at (30:1) [token] {#3};
\node (t1) at (-30:1) [token] {#4};
\node (t2-) at (-90:1) [token] {#5};
\node (t3) at (-150:1) [token] {#6};
\draw [#8] (t1-) -- (t2);
\draw [#9] (t2) -- (t3-);
\draw [#8] (t3-) -- (t1);
\draw [#9] (t1) -- (t2-);
\draw [#8] (t2-) -- (t3);
\draw [#9] (t3) -- (t1-);
\node (a1) at (120:2.0) [token] {$v_{1,1}'$};
\node (b1-) at (90:1.8) [token] {$v_{1,2}$};
\node (c1) at (60:2.0) [token] {$v_{1,3}'$};
\node (a1-) at (-60:2.0) [token] {$v_{1,1}$};
\node (b1) at (-90:1.8) [token] {$v_{1,2}'$};
\node (c1-) at (-120:2.0) [token] {$v_{1,3}$};
%
\draw [#7] (t1) --  (a1-);
\draw [#7] (t1-) -- (a1);
\draw [#7] (t2) -- (b1-);
\draw [#7] (t2-) -- (b1);
\draw [#7] (t3) -- (c1-);
\draw [#7] (t3-) -- (c1);
}
\begin{tikzpicture}[scale=0.90]\small
\configA{$u_{1,1}'$}{$u_{2,1}$}{$u_{3,1}'$}{$u_{1,1}$}{$u_{2,1}'$}{$u_{3,1}$}{ultra thick}{initial configuration}
\draw [thick, ->] (-110:2.2) -- node[auto] {6} (-110:3.4);
\pgftransformxshift{-1.5cm}
\pgftransformyshift{-5.5cm}
\configB{$u_{1,1}'$}{$u_{2,1}$}{$u_{3,1}'$}{$u_{1,1}$}{$u_{2,1}'$}{$u_{3,1}$}{-}{-}{ultra thick}
\draw [thick, ->] (0:1.4) -- node[auto] {3} (0:2.1);
\pgftransformxshift{3.5cm}
\configB{$u_{3,1}$}{$u_{3,1}'$}{$u_{2,1}$}{$u_{2,1}'$}{$u_{1,1}$}{$u_{1,1}'$}{-}{ultra thick}{-}
\draw [thick, ->] (0:1.4) -- node[auto] {3} (0:2.1);
\pgftransformxshift{3.5cm}
\configB{$u_{3,1}'$}{$u_{3,1}$}{$u_{2,1}'$}{$u_{2,1}$}{$u_{1,1}'$}{$u_{1,1}$}{-}{-}{ultra thick}
\draw [thick, ->] (0:1.4) -- node[auto] {3} (0:2.1);
\pgftransformxshift{3.5cm}
\configB{$u_{1,1}$}{$u_{2,1}'$}{$u_{3,1}$}{$u_{1,1}'$}{$u_{2,1}$}{$u_{3,1}'$}{ultra thick}{-}{-}
\pgftransformyshift{5.5cm}
\pgftransformxshift{-1.5cm}
\draw [thick, ->] (-70:3.4) -- node[auto] {6} (-70:2.2);
\configA{$u_{1,1}$}{$u_{2,1}'$}{$u_{3,1}$}{$u_{1,1}'$}{$u_{2,1}$}{$u_{3,1}'$}{-}{goal configuration}
\end{tikzpicture}
\end{center}
\caption{\label{fig:3DM-TSP2}
The 3DM instance $(A,T)$ of Example~\ref{ex:3dm-TSP} has a solution $M=\{t_1,t_3\}$.
The optimal solution given in the proof of Lemma~\ref{lem:TSP-NPhard_upperbound} that exchanges $u_{k,1}$ and $u_{k,1}'$ for all $k \in \{1,2,3\}$ via the $t_1$-cycle is illustrated here, where we suppress vertex names.
By swapping the tokens on the bold edges in each configuration, we obtain the succeeding one pointed by an arrow.
The number by each arrow shows the number of swaps. The swap sequence consists of 21 swaps in total.
By doing the same on $t_3$-cycle with respect to $u_{1,2},u_{2,2},u_{3,2},u_{1,2}',u_{2,2}',u_{3,2}'$, we obtain the goal configuration.}
\end{figure}

\begin{lemma}\label{lem:TSP-NPhard_upperbound}
	If\/ $(A,T)$ has a solution then $\OPT(G_T,f) \le 21n$ with $n = |A_1|$.
\end{lemma}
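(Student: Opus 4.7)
The plan is to exhibit an explicit swap sequence of length $21n$. Let $M = \{t_{j_1}, \dots, t_{j_n}\} \subseteq T$ be a 3DM solution, and for each $t_{j_\ell} \in M$ write $t_{j_\ell} = \{a_{1,i_1^\ell}, a_{2,i_2^\ell}, a_{3,i_3^\ell}\}$. I would build $\vec{e} = \vec{e}^{(1)}\vec{e}^{(2)}\cdots\vec{e}^{(n)}$ by concatenating $n$ blocks, one per $t_{j_\ell} \in M$, where each block $\vec{e}^{(\ell)}$ has length exactly $21$ and acts only on the six vertices of the $t_{j_\ell}$-cycle together with the six external $u$-vertices attached to it. Since $M$ is a 3DM solution, distinct blocks act on disjoint vertex sets and every misplaced pair $(u_{k,i}, u_{k,i}')$ is handled by exactly one block, so the verification reduces to checking what a single block does on its local gadget.

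Each block $\vec{e}^{(\ell)}$ is organized into three phases. In the \emph{push-in} phase (6 swaps) one applies in turn the six external edges $\{u_{k,i_k^\ell}, v_{j_\ell,k}'\}$ and $\{u_{k,i_k^\ell}', v_{j_\ell,k}\}$ for $k=1,2,3$. This deposits each $V_A$-token one step inside the cycle, placing $u_{k,i_k^\ell}$ onto $v_{j_\ell,k}$ and $u_{k,i_k^\ell}'$ onto $v_{j_\ell,k}'$, while pushing the native tokens $v_{j_\ell,k}, v_{j_\ell,k}'$ out onto the corresponding external $u$-vertices. In the \emph{rotation} phase (9 swaps) one performs a half-turn of the 6-cycle, which is exactly the permutation that exchanges each antipodal pair $v_{j_\ell,k} \leftrightarrow v_{j_\ell,k}'$. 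Finally, the \emph{push-out} phase (6 swaps) re-applies the same six external edges; since after rotation the tokens sitting on $v_{j_\ell,k}$ and $v_{j_\ell,k}'$ are $u_{k,i_k^\ell}'$ and $u_{k,i_k^\ell}$ respectively, the push-out delivers $u_{k,i_k^\ell}'$ to vertex $u_{k,i_k^\ell}'$ and $u_{k,i_k^\ell}$ to vertex $u_{k,i_k^\ell}$, while simultaneously restoring every $v_{j_\ell,k}, v_{j_\ell,k}'$ token to its home vertex.

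The only step that is not pure bookkeeping is the claim that a half-turn of a 6-cycle can be realized with 9 transpositions along its edges. I would verify this by direct calculation: labeling the cycle vertices cyclically as $c_1, c_2, \dots, c_6$, applying in order the three matchings $\{c_1c_2, c_3c_4, c_5c_6\}$, $\{c_2c_3, c_4c_5, c_6c_1\}$, $\{c_1c_2, c_3c_4, c_5c_6\}$ produces a sequence of nine transpositions whose composed effect sends each $c_i$ to $c_{i+3 \bmod 6}$, precisely the antipodal exchange required. Putting the three phases together gives $|\vec{e}^{(\ell)}| = 21$, and summing over $\ell = 1, \dots, n$ yields a swap sequence $\vec{e}$ of length $21n$ with $f\vec{e}$ equal to the identity, so $\OPT(G_T,f) \le 21n$.
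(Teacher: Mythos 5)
Your proposal is correct and follows the same three-phase construction the paper uses (push the misplaced $V_A$-tokens one step into the $t_j$-cycle with 6 external swaps, exchange antipodal pairs of the 6-cycle with 9 internal swaps, then pop everything back out with the same 6 external edges), summing to $21$ per selected triple and $21n$ overall. The only thing you add is the explicit 9-transposition half-turn, which the paper leaves to Figure~\ref{fig:3DM-TSP2}; your bookkeeping of the pushed-out $V_T$-tokens being restored in the third phase is also correct.
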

\begin{proof}
	We show in the next paragraph that for each $t_j \in T$, there is a sequence $\sigma_j$ of $21$ swaps such that
	$g\sigma_j$ is identical to $g$ except $(g\sigma_j) (u_{k,i}) = g(u_{k,i}')$ and $(g\sigma_j)(u_{k,i}')=g(u_{k,i})$ if $a_{k,i}$ occurs in $t_{j}$ for any configuration $g$.
	If $M \subseteq T$ is a solution, by collecting $\sigma_j$ for all $t_j \in M$, we obtain a swap sequence $\sigma_M$ of length $21n$
	such that $f\sigma_M$ is the identity.

	Let $t_j = \{a_{1,i_1},a_{2,i_2},a_{3,i_3}\}$.
	We first move each of the tokens $u_{k,i_k}$ on the vertex $u_{k,i_k}'$ to the vertex $v_{j,k}$
	and the tokens $u_{k,i_k}'$ on $u_{k,i_k}$ to $v_{j,k}'$.
	We then move the tokens $u_{k,i_k}$ on $v_{j,k}$ to the opposite vertex $v_{j,k}'$ of the $t_j$-cycle for each $k \in \{1,2,3\}$ 
	while moving $u_{k,i_k}'$ on $v_{j,k}'$ to $v_{j,k}$ in the opposite direction simultaneously.
	At last we make swaps on the same $6$ edges we used in the first phase. 
	The above procedure consists of $21$ swaps and gives the desired configuration.
\end{proof}
\begin{lemma}\label{lem:TSP-NPhard_lowerbound}
	If\/ $\OPT(G_T,f) \le 21n$ with $n = |A_1|$ then $(A,T)$ has a solution.
\end{lemma}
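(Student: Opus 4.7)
The plan is to establish a matching lower bound on $\OPT(G_T,f)$ that reaches $21n$ exactly when $(A,T)$ has a 3DM solution, and is strictly larger otherwise.

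Classify each swap in a solution $\vec{e}$ by the unique $t_j$-cycle it touches: type-1 swaps live strictly inside the cycle, and type-2 boundary swaps have their $V_T$-endpoint in the cycle. Let $C_j$ count these, so $\sum_j C_j=|\vec{e}|\le 21n$. For each misplaced pair $(u_{k,i},u_{k,i}')$, both tokens must traverse some $t_j$-cycle with $a_{k,i}\in t_j$, since $V_A$ and $V_T$ are connected only through the boundary edges. Call the pair \emph{hosted at $t_j$} when both of its tokens use the same such cycle, and \emph{split} otherwise. Write $p_j$ for the number of pairs hosted at $t_j$; since only the three pairs indexed by the elements of $t_j$ can be hosted there, $p_j\le 3$.

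The central technical step is to prove $C_j\ge 7 p_j$ for every used cycle, with strict inequality whenever $1\le p_j\le 2$, and that each split pair contributes at least $14$ to $\sum_j C_j$. Each of the $2 p_j$ hosted tokens travels at least $5=\msf{dist}(u_{k,i}',u_{k,i})$ hops inside or on the boundary of $t_j$, giving $\ge 10 p_j$ $u$-hops in cycle $t_j$. A careful look at the $6$-cycle geometry then shows that the $v$-tokens in $t_j$ add at least $4 p_j$ further hops: the $2 p_j$ $v$-tokens at the boundary positions occupied by entering $u$-tokens must be pushed out and returned, and when $p_j<3$ the remaining in-cycle $v$-tokens must additionally step aside to let the $u$-tokens cross antipodally inside the $6$-cycle. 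Since each swap accounts for exactly two hops, $C_j\ge (10 p_j+4 p_j)/2=7 p_j$, with strictness when $p_j\in\{1,2\}$. The same analysis applied to a single token traversing a cycle alone shows that such a solo traversal costs $\ge 7$ swaps, hence $\ge 14$ per split pair.

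Combining, let $S$ and $H=3n-S$ be the numbers of split and hosted pairs. Then
\[
21 n \;\ge\; \sum_j C_j \;\ge\; 14 S + \sum_{j:\,p_j\ge 1} 7 p_j \;=\; 14 S + 7 H \;=\; 21 n + 7 S ,
\]
forcing $S=0$ and equality throughout. The strict inequality in the per-cycle bound rules out $p_j\in\{1,2\}$, so every $p_j\in\{0,3\}$. With $\sum_j p_j=3n$ this leaves exactly $n$ cycles with $p_j=3$, and those $n$ triples from $T$ partition $A$ into disjoint triples, i.e., a 3DM solution for $(A,T)$. The main obstacle is carrying out the geometric case analysis behind the strict $v$-hop bound when $p_j<3$ and the $\ge 7$ bound for a solo token traversal: both require understanding the possible schedules by which tokens can physically cross inside the $6$-cycle without leaving the $v$-tokens permanently shuffled.
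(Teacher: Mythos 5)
Your plan takes a fundamentally different route from the paper's. Rather than partitioning swaps by cycle and classifying pairs as hosted or split, the paper charges moves to individual \emph{tokens} via a simple injective map. For each $x\in V_A$ it has $\msf{move}(f,\sigma,x)\ge\msf{dist}(x,f^{-1}(x))=5$, and it defines $\tau(x)\in V_T$ as the vertex reached by the first swap involving the vertex $f^{-1}(x)$; since the token $\tau(x)$ must have left its home vertex by that moment and eventually return, $\msf{move}(f,\sigma,\tau(x))\ge 2$. Because $\tau$ is injective on the $6n$ tokens of $V_A$, summing these two bounds gives $42n$ total moves and hence $21n$ swaps, with no cycle-by-cycle geometry at all. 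Equality then forces every $V_A$-token to follow a length-$5$ shortest path through a single $t_j$-cycle and forces the only moving $V_T$-tokens to be the $\tau$-images, from which a $3$DM solution is read off directly.

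Your version puts essentially all of its weight on three unproven geometric claims whose verification would be markedly more involved than the paper's whole proof: the per-cycle bound $C_j\ge 7p_j$, its strictness for $p_j\in\{1,2\}$, and the $\ge 14$-swap cost of a split pair. These are not routine; for instance the claimed $4p_j$ bound on $V_T$-hops requires a sweep argument around the $6$-cycle, and showing a solo traversal forces $\ge 9$ extra $V_T$-hops (not merely $4$ from the two boundary tokens) needs a genuine schedule analysis. Beyond what you flag, two further concerns arise. First, the ``hosted at $t_j$'' classification is clean only once one knows each $V_A$-token crosses exactly one cycle along a geodesic, which is part of what must be proved; your ``split'' catch-all must therefore also absorb tokens that wander through two or more cycles, and the $\ge 14$ bound has to be argued for those cases as well. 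Second, the chain $\sum_j C_j\ge 14S+\sum_{j}7p_j$ tacitly assumes that the hop charges supporting the hosted bound and the split bound never double-count a swap when both kinds of token occupy the same $t_j$-cycle; this disjointness is plausible but is neither obvious nor argued. The paper sidesteps all of this simultaneously by charging moves to tokens rather than swaps to cycles and exhibiting the injection $\tau$.
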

\begin{proof}
	We first show that $21n$ is a lower bound on $\OPT(G_T,f)$.
	Suppose that $f \sigma$ is the identity.
	For each token $u_{k,i} \in V_A$, we have 
	\[
	\msf{move}(f,\sigma,u_{k,i}) \ge \msf{dist}(u_{k,i},f^{-1}(u_{k,i})) = \msf{dist}(u_{k,i},u_{k,i}') = 5\,.
	\]
	The adjacent vertices of the vertex $u_{k,i}'$ are $v_{j,k}$ such that $a_{k,i} \in t_j$.
	Among those, let $\tau(u_{k,i}) \in V_T$ be the vertex to which $u_{k,i}$ goes for its first step, i.e., the first occurrence of $u_{k,i}'$ in $\sigma$ is as $\{u_{k,i}',\tau(u_{k,i})\}$.
	This means that $\msf{move}(f,\sigma,\tau(u_{k,i})) \ge 2$, since the token $\tau(u_{k,i})$ must once leave from and later come back to the vertex $\tau(u_{k,i})$.
	The symmetric discussion holds for all tokens $u_{k,i}'$.
	Therefore, noting that $\tau$ is an injection, we obtain
	\begin{align*}
		|\sigma| &= \frac{1}{2} \sum_{x \in V_A \cup V_T}\msf{move}(f,\sigma,x) 
		\ge \frac{1}{2} \sum_{x \in V_A} \big(\msf{move}(f,\sigma,x)+\msf{move}(f,\sigma,\tau(x))\big)
		  \ge 21n\,.
	\end{align*}
	This has shown that if $f \sigma $ is the identity and $|\sigma| \le 21n$, then 
	\begin{enumerate}
		\item[(1)] $\msf{move}(f,\sigma,x)=5$ for all $x \in V_A$,
		\item[(2)] $\msf{move}(f,\sigma,y) \neq 0$ for $y \in V_T$ if and only if $y = \tau(x)$ for some $x \in V_A$.
	\end{enumerate}	
	Let \(	M_\sigma = \{\, y \in V_T \mid \msf{move}(f,\sigma,y) \neq 0 \,\} = \{\, \tau(x) \in V_T \mid x \in V_A\,\} \).
	We are now going to prove that if $v_{j,1} \in M_\sigma$ then 
	$\{v_{j,2},v_{j,3},v_{j,1}',v_{j,2}',v_{j,3}'\} \subseteq M_\sigma$, which implies that 
\(
		\widetilde{M}_\sigma = \{\, t_j \in T \mid v_{j,1} \in M_\sigma \,\} 
\)
 is a solution for $(A,T)$.
	
	Suppose $v_{j,1} \in M_\sigma$ and let $t_j \cap A_1 = \{a_{1,i}\}$.
	This means that $\tau(u_{1,i})=v_{j,1}$ and $u_{1,i}$ goes from $u_{1,i}'$ to $u_{1,i}$ through $(u_{1,i}',v_{j,1},v_{j,2}',v_{j,3},v_{j,1}',u_{1,i})$ or $(u_{1,i}',v_{j,1},v_{j,3}',\linebreak[0]v_{j,2},v_{j,1}',u_{1,i})$ by ({2}) and ({1}).
	In either case, $v_{j,1}' \in M_\sigma$.
	Suppose that $u_{1,i}$ takes the former $(u_{1,i}',v_{j,1},v_{j,2}',v_{j,3},v_{j,1}',u_{1,i})$.
	Then $v_{j,2}',v_{j,3} \in M_\sigma$.
	Just like $v_{j,1} \in M_\sigma $ implies $v_{j,1}' \in M_\sigma$, we now see $v_{j,2},v_{j,3}' \in M_\sigma$.
\end{proof}
It is known that the 3DM is still NP-complete if each $a \in A$ occurs at most three times in $T$~\cite{GareyJ79}.
Assuming that $T$ satisfies this constraint, it is easy to see that $G_T$ is a bipartite graph with maximum vertex degree 3.
\begin{theorem}\label{thm:TSPNPhard}
Token swapping is NP-complete even on bipartite graphs with maximum vertex degree 3.
\end{theorem}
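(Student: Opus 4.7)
The plan is to assemble the theorem from three ingredients already in place: the NP-membership of Token Swapping noted in the introduction (following Yamanaka et al.), the matching upper and lower bounds from Lemmas~\ref{lem:TSP-NPhard_upperbound} and~\ref{lem:TSP-NPhard_lowerbound}, and a direct verification that the graph $G_T$ produced by the reduction meets the claimed structural restrictions.

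For hardness I would invoke the Garey--Johnson sharpening of 3DM cited just before the theorem, so that we may assume each $a \in A$ occurs in at most three triples of $T$. Mapping such an instance $(A,T)$ to the Token Swapping instance $(G_T,f,21n)$ is polynomial-time computable from the explicit descriptions of $V_A$, $V_T$ and $E_T$. Combining the two lemmas gives the equivalence: $(A,T)$ admits a three-dimensional matching if and only if $\OPT(G_T,f) \le 21n$. Together with NP-membership (a swap sequence of length at most $k$ is a polynomial-size certificate, since $k$ is given in unary), this yields NP-completeness on the graph class produced by the reduction.

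It remains to check the structural claims about $G_T$. For bipartiteness I would two-color the vertex set by separating the ``primed'' vertices $\{u_{k,i}',v_{j,k}'\}$ from the ``unprimed'' ones $\{u_{k,i},v_{j,k}\}$; inspection of the three edge families in $E_T$ confirms that every edge crosses this partition. For the degree bound, each $u_{k,i}$ and $u_{k,i}'$ has degree equal to the number of $t_j \in T$ containing $a_{k,i}$, which is at most three under the restricted 3DM; each $v_{j,k}$ has exactly one neighbor in $V_A$ (namely the $u_{k,i}'$ with $a_{k,i}$ the unique element of $A_k$ in $t_j$) plus exactly two neighbors $v_{j,l}'$ with $l \neq k$ from the $t_j$-cycle, so its degree is exactly three, and symmetrically for $v_{j,k}'$.

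The substantive work sits in the two lemmas; what remains for the theorem itself is essentially packaging. The only point to watch is that the Garey--Johnson restriction is doing genuine work here, since without it the $u$-vertices would have unbounded degree; so it is precisely the combination of the restricted 3DM with the explicit structure of $G_T$ that delivers the degree-$3$ bipartite claim.
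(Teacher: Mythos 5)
Your proposal is correct and follows essentially the same route as the paper: reduce from the Garey--Johnson restriction of 3DM (each element in at most three triples), apply Lemmas~\ref{lem:TSP-NPhard_upperbound} and~\ref{lem:TSP-NPhard_lowerbound} for the threshold-$21n$ equivalence, and observe directly from $E_T$ that the primed/unprimed bipartition works and that the degree bounds are $3$ (exactly, for $v_{j,k}$ and $v_{j,k}'$; at most, for $u_{k,i}$ and $u_{k,i}'$). The paper states the bipartiteness and degree check only as ``easy to see,'' so your explicit verification is just a fuller writing-out of the same argument.
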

The NP-hardness of Token Swapping was independently proven by Miltzow et al.~\cite{MiltzowNORTU16} and by Bonnet et al.~\cite{BonnetMR16}.
The graphs obtained by the reduction of Miltzow et al.\ have a degree bound but it is not as small as our constraint.
Our bound $3$ is tight, as Token Swapping on graphs with degree at most 2, i.e., paths and cycles, is solvable in polynomial-time.
Bonnet et al.~\cite{BonnetMR16} have given no degree constraint but their graphs have tree-width 2 and diameter 6.
Therefore, their and our results are incomparable.

\subsection{PTIME Subcases of Token Swapping}
In this subsection, we present two graph classes on which Token Swapping can be solved in polynomial time.
One is that of \emph{lollipop graphs}, which are obtained by connecting a path and a complete graph with a bridge.
That is, a {lollipop graph} is $L_{m,n} = (V,E)$ where
\(	V = \{\, -m,\dots,-1,0,1,\dots,n\,\} \) and
\begin{align*}
	E &= \{\, \{i,j\} \subseteq V \mid i < j \le 0 \text{ or } j = i+1 > 0 \,\}\,.
\end{align*}
The other class consists of graphs obtained by connecting a path and a star.
A \emph{star-path graph} is $Q_{m,n} = (V,E)$ such that
\( V = \{\, -m,\dots,-1,0,1,\dots,n\,\}\) and 
\begin{align*}
	E &= \{\, \{i,0\} \subseteq V \mid i < 0 \,\} \cup \{\, \{i,i+1\} \subseteq V \mid i \ge 0 \,\}\,.
\end{align*}
Algorithms~\ref{alg:lollipop} and~\ref{alg:gerbera} give optimal solutions for Token Swapping on lollipop and star-path graphs in polynomial time, respectively.
Proofs of the correctness are found in Appendices~\ref{sec:lollipop} and~\ref{sec:gerbera}.
\begin{algorithm}\caption{Algorithm for Token Swapping on Lollipop Graphs\label{alg:lollipop}}
\begin{algorithmic}
  \STATE \textbf{Input}: A lollipop graph $L_{m,n}$ and a configuration $f$ on $L_{m,n}$
  \FOR{$k = n,\dots,1,0,-1,\dots,-m$}
  	\STATE Move the token $k$ to the vertex $k$ directly;
  \ENDFOR
   \end{algorithmic}
\end{algorithm}
\begin{algorithm}\caption{Algorithm for Token Swapping on Star-Path Graphs\label{alg:gerbera}}
\begin{algorithmic}
  \STATE \textbf{Input}: A star-path graph $Q_{m,n}$ and a configuration $f$ on $Q_{m,n}$
  \FOR{$k = n,\dots,1,0,-1,\dots,-m$}
  	\WHILE{the token on the vertex $0$ has an index less than $0$}
		\STATE Move the token on the vertex $0$ to its goal vertex;
	\ENDWHILE
  	\STATE Move the token $k$ to the vertex $k$;
  \ENDFOR
   \end{algorithmic}
\end{algorithm}

\section{Permutation Routing via Matching}
\emph{Permutation routing via matching} can be seen as the parallel version of Token Swapping.
Definitions and notation for Token Swapping are straightforwardly generalized as follows. 
A \emph{parallel swap} $S$ on $G$ is a synonym for an involution which is a subset of $E$, or for a matching of $G$, i.e., $S \subseteq E$ such that $\{u,v_1\},\{u,v_2\} \in S$ implies $v_1=v_2$.
For a configuration $f$ and a parallel swap $S \subseteq E$, the configuration obtained by applying $S$ to $f$ is defined by 
$fS(u) = f(v)$ if $\{u,v\} \in S$ and $fS(u)=f(u)$ if $u \notin \bigcup S$.
Let
\begin{align*}
	\PSOL(G,f) &= \{\, \vec{S} \mid \vec{S} \text{ is a parallel swap sequence s.t.\ } f \vec{S} \text{ is the identity}\,\}
\\	\POPT(G,f) &= \min\{\, |\vec{S}| \mid \vec{S} \in \PSOL(G,f)\,\}\,.
\end{align*}
\problemdef{Permutation Routing via Matching}%
{A connected graph $G$, a configuration $f$ on $G$ and a natural number $k$.}%
{$\POPT(G,f) \le k$?}

It is trivial that $\POPT(G,f) \le \OPT(G,f) \le \POPT(G,f)|V|/2$, since any parallel swap $S$ consists of at most $|V|/2$ (single) swaps.
Since $ \OPT(G,f) \le |V|(|V|-1)/2$ holds~\cite{YamanakaDIKKOSSUU14}, Permutation Routing via Matching belongs to NP.


We use the following easy result in many places in this section.
\begin{lemma}\label{lem:trivial}
Let $P_n$ denote the path graph with $n$ vertices, i.e., $P_n=(\{1,\dots,n\},\{\, \{i,i+1\} \mid 1 \le i< n\,\})$,
and a configuration $f$ be the identity configuration on $P_n$  except $f(1)=n$ and $f(n)=1$.
Then \[
\POPT(P_n,f) = \begin{cases}
n-1	& \text{ if $n$ is even,}
\\
n	& \text{ if $n$ is odd.}
\end{cases}
\]
\end{lemma}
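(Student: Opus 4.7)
The plan is to prove $\POPT(P_n,f)\ge n-1$ via a distance argument, sharpen it to $\POPT(P_n,f)\ge n$ for odd $n$ by a positional argument, and match both bounds with explicit palindromic parallel swap sequences. The easy lower bound is that token $n$ must travel $n-1$ edges from vertex $1$ to vertex $n$, while a single parallel swap shifts any given token by at most one edge. To sharpen this to $n$ when $n$ is odd, I would suppose for contradiction that a solution $\vec{S}$ of length $n-1$ exists and let $p_t, q_t$ denote the positions of tokens $n$ and $1$ after step $t$. Then $\sum_{t=1}^{n-1}(p_t-p_{t-1})=n-1$ with each summand in $\{-1,0,+1\}$, which forces every $p_t-p_{t-1}$ to equal $+1$; symmetrically every $q_t-q_{t-1}$ equals $-1$. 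Hence $p_t=1+t$ and $q_t=n-t$, and substituting the integer $t=(n-1)/2$ forces $p_t=q_t=(n+1)/2$, contradicting the fact that distinct tokens occupy distinct vertices.

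For the upper bound I would exhibit a palindromic parallel swap sequence $\vec{S}=\langle S_1,\dots,S_{|\vec{S}|}\rangle$. For even $n$ of length $n-1$, take $S_t=\{\{t,t+1\},\{n-t,n-t+1\}\}$ for $1\le t\le n/2-1$, let $S_{n/2}=\{\{n/2,n/2+1\}\}$ be the edge at which tokens $n$ and $1$ cross, and set $S_{n-t}=S_t$ for the second half. For odd $n$ of length $n$, introduce a single padding step at each end with $S_1=S_n=\{\{n-1,n\}\}$, define $S_t=S_{n+1-t}=\{\{t-1,t\},\{n-t,n-t+1\}\}$ for $2\le t\le (n-1)/2$, and set $S_{(n+1)/2}=\{\{(n-1)/2,(n+1)/2\}\}$ for the crossing.

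The main obstacle will be to certify that every intermediate token $i\in\{2,\dots,n-1\}$ returns to its home vertex. I expect this to follow from the palindromic structure of $\vec{S}$: each such token is moved by exactly one edge in the first half and by its mirror in the second half, producing a round trip, with a direct case check around the crossing step handling the middle of the schedule. Combining the explicit constructions with the two lower bounds then yields $\POPT(P_n,f)=n-1$ for even $n$ and $\POPT(P_n,f)=n$ for odd $n$.
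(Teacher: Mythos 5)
Your proposal is correct and takes essentially the same route as the paper: the distance bound $\POPT(P_n,f)\ge n-1$, a forcing argument showing a crossing of tokens $1$ and $n$ is impossible within $n-1$ steps when $n$ is odd, and explicit palindromic parallel swap sequences for the matching upper bounds. The only superficial differences are that you phrase the odd-$n$ lower bound via the two token trajectories colliding at vertex $(n{+}1)/2$ rather than via two adjacent edges being forced into the matching $S_{\lceil n/2\rceil}$, and your constructions pad at the opposite end of the path from the paper's; these are the same ideas.
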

\begin{proof}
	We have $\POPT(P_n,f) \ge \msf{dist}(1,f^{-1}(1)) = n-1$.
	If $n$ is even, it is easy to check that $\lrangle{S_1,\dots,S_{n-1}} \in \PSOL(P_n,f)$ where $S_i = \{\, \{i,i+1\}, \{n-i,n-i+1\}\,\}$ for all $i \in \{1,\dots,n-1\}$.
	Suppose that $n$ is odd.
	Then $\lrangle{S_1,\dots,S_n} \in \PSOL(P_n,f)$ where $S_1=S_n=\{1,2\}$ and $S_i=\{\, \{i,i+1\}, \{n-i+1,n-i+2\}\,\}$ for all $i \in \{2,\dots,n-1\}$.
	This shows $\POPT(P_n,f) \le n$.
	To derive a contradiction, suppose there is $\lrangle{S_1,\dots,S_{n-1}} \in \PSOL(P_n,f)$.
	To move the token $1$ on the vertex $n$ to the goal $1$ within $n-1$ steps, we must have $\{n-i,n-i+1\} \in S_i$ for $i=1,\dots,n-1$.
	To move the token $n$ on the vertex $1$ to the goal $n$ within $n-1$ steps, we must have $\{i,i+1\} \in S_i$ for $i=1,\dots,n-1$.
	However, this means $\{ \ceil{n/2}-1,\ceil{n/2}\},\{ \ceil{n/2},\ceil{n/2}+1\} \in S_{\ceil{n/2}}$, which is impossible.
\end{proof}

\subsection{Routing Permutations via Matching Is NP-complete}\label{sec:PTSPNPhard}
We show that routing permutations via matching is NP-hard by a reduction from a restricted kind of the satisfiability problem, which we call \emph{PPN-Separable 3SAT} (\emph{\SAT} for short).
For a set $X$ of \emph{(Boolean) variables}, $\neg X$ denotes the set of their negative literals.
A \emph{3-clause} is a subset of $X \cup \neg X$ whose cardinality is at most 3.
An instance of \SAT{} is a finite collection ${F}$ of $3$-clauses,
which can be partitioned into three subsets ${F}_1, {F}_2,{F}_3 \subseteq {F}$
such that for each variable $x \in X$, the positive literal $x$ occurs just once in each of ${F}_1,{F}_2$ and never in $F_3$, and the negative literal $\neg x$ occurs just once in ${F}_3$ and never in $F_1$ nor $F_2$.
Note that one can find a partition $\{F_1 , F_2, F_3\}$ of a \SAT{} instance $F$ in linear time.

\begin{theorem}\label{thm:SAT}
\SAT{} is NP-complete.
\end{theorem}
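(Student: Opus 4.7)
Membership in NP is clear: a satisfying assignment is a polynomial-size certificate, the partition $\{F_1,F_2,F_3\}$ can be read off in linear time as the paper already notes, and each clause is checked in polynomial time.

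For NP-hardness I would reduce from 3-SAT restricted so that each variable has at most three occurrences (Tovey's $(3,3)$-SAT). The target structure is rigid: every clause must be monochromatic in polarity, every variable must appear exactly twice positively (once in $F_1$ and once in $F_2$) and exactly once negatively in $F_3$. The core gadget encodes ``$y_1 \ne y_2$'' within this regime: the positive $2$-clause $(y_1 \vee y_2)$ placed in $F_1$ (or $F_2$), together with the negative $2$-clause $(\neg y_1 \vee \neg y_2)$ placed in $F_3$, forces $y_1+y_2=1$ while using one positive slot and one negative slot of each variable. Composing two such inequality gadgets through a fresh intermediate variable yields an equality gadget; by alternating the $F_1$/$F_2$ placement of positive halves and threading equivalences through enough fresh variables, a value can be propagated to any number of copies of either polarity while respecting the ``one $F_3$-slot per variable'' budget.

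The reduction then attaches to each variable $x$ of $\phi$ a collection of fresh copy variables, one per occurrence of $x$ in $\phi$, connected by gadgets so that each positive-occurrence copy evaluates to $x$ and each negative-occurrence copy evaluates to $\neg x$. Every clause of $\phi$ is rewritten as a purely positive clause on the appropriate copies (replacing each negative literal $\neg x$ by its dedicated copy) and placed in whichever of $F_1,F_2$ still has an open slot at each of its variables. Dummy variables together with trivially-satisfied monotone padding clauses are added at the end to top up any remaining slots so that the resulting instance precisely obeys the PPN-separability constraint. Correctness is then routine: the gadgets propagate values deterministically, so any satisfying assignment of $\phi$ extends uniquely to one of $F$, and conversely any satisfying assignment of $F$ restricts to one of $\phi$ on the original variables.

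The principal obstacle is the slot bookkeeping. Since every variable has at most one slot in $F_3$, a naive linear chain of inequality gadgets would place its interior variables into two distinct negative-forcing clauses, overflowing this budget; the gadget must therefore branch or be lengthened so that only ``leaf'' variables appear in the two forcing halves, and the $F_1$/$F_2$ placements must be alternated so that no interior variable overuses either positive class. Designing the copy gadget so that these three occurrence counts come out exactly right, while still faithfully transmitting the truth value of $x$ to all of its copies, is where the bulk of the technical work lies.
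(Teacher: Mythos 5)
You correctly pin down the slot bookkeeping as the crux, but the obstacle you identify cannot be escaped by ``branching or lengthening'' chains of your inequality gadget, and the proposal does not supply a concrete gadget that escapes it. The inequality gadget you describe places \emph{two} clauses on the same pair of variables: a positive clause (in $F_1$ or $F_2$) and a negative clause (in $F_3$). Since every variable, fresh or not, has exactly one $F_3$-slot, any chain or tree of such gadgets in which an intermediate variable is shared between two gadgets spends that variable's $F_3$-slot twice; and if you instead drop the $F_3$ clause on one of its two incident edges, that edge degenerates into a bare ``at least one is true'' constraint, which no longer forces inequality. Propagation therefore stalls, and nothing in your sketch resolves this.

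The paper sidesteps the problem by abandoning the two-clause gadget and spending only \emph{one} clause per edge, threaded around a single even cycle. Starting from unrestricted 3-SAT rather than $(3,3)$-SAT, it first pads $F$ so that each $x_i$ has positive and negative occurrence counts both equal to some $n_i$ (appending tautological clauses $\{\neg x_i, y, \neg y\}$ with fresh $y$), then replaces the $j$-th positive and negative occurrences of $x_i$ with fresh copies $x_{i,j}$ and $\bar{x}_{i,j}$ to form $F_1$, and lays a cycle on the copies whose edges alternate between $\{x_{i,j},\bar{x}_{i,j}\}\in F_2$ (``at least one true'') and $\{\neg x_{i,j},\neg\bar{x}_{i,j+1}\}\in F_3$ (``not both true''), indices cyclic modulo $n_i$. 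Each copy then occupies exactly one $F_1$-slot, one $F_2$-slot, and one $F_3$-slot, and the even alternating cycle forces $\phi'(x_{i,j})=1-\phi'(\bar{x}_{i,j})=\phi'(x_{i,1})$ for all $j$. This is exactly the halving your plan lacks: one clause per edge instead of two keeps every copy within its PPN budget and leaves all of $F_1$ free to carry the translated formula.
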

\begin{proof}
	See Appendix~\ref{sec:SAT}.
\end{proof}
We give a reduction from \SAT{} to Permutation Routing via Matching.
For a given instance ${F} = \{C_1,\dots,C_n\}$ over a variable set $X = \{x_1,\dots,x_m\}$ of \SAT{}, we define a graph $G_{F} = (V_{F},E_{F})$ in the following manner.
Let ${F}$ be partitioned into ${F}_1,{F}_2,{F}_3$ where each of ${F}_1$ and ${F}_2$ has just one occurrence of each variable as a positive literal and ${F}_3$ has just one occurrence of each negative literal.
Define
\begin{align*}
	V_F = {}& \{\, u_i,u_i',u_{i,1},u_{i,2},u_{i,3},u_{i,4} \mid 1 \le i \le m\,\}
	\cup \{\, v_j,v_j' \mid 1 \le j \le n \,\} 
\,.\end{align*}
The edge set $E_{F}$ is the least set that makes $G_{F}$ contain the following paths of length $2$ and $3$:
\begin{gather*}
	(u_i,u_{i,1},u_{i,2},u_{i}')	
	\text{ and }
	(u_i,u_{i,3},u_{i,4},u_{i}')	 \text{ for each $i \in \{1,\dots,m\}$,}
\\
	(v_j,u_{i,k},v_{j}')	 \text{ if $x_i \in C_j \in {F}_k$ or $\neg x_i \in C_j \in {F}_k$}
\,.\end{gather*}
The fact that $G_{F}$ is bipartite can be seen by partitioning $V_{F}$ into 
\[
\{\, u_i,u_{i,2},u_{i,4} \mid 1 \le i \le m \,\} \cup \{\, v_j,v_j' \mid C_j \in F_1 \cup F_3 \,\}
\]
 and the rest.
Vertices $v_j$ and $v_j'$ have degree at most 3 for $j \in \{1,\dots,n\}$, while $u_{i,k}$ has degree 4 for $i \in \{1,\dots,m\}$ and $k \in \{1,2,3\}$.
The initial configuration $f$ is defined to be the identity except
\begin{gather*}
	f(u_i) = u_i',\
	f(u_i') = u_i,\
	f(v_j) = v_j',\
	f(v_j') = v_j,
\end{gather*}
for each $i \in \{1,\dots,m\}$ and $j \in \{1,\dots, n\}$.
\begin{example}\label{ex:SAT-PTSP}
For $X = \{x_1,x_2,x_3\}$, let ${F}$ consist of  $C_1 = \{x_1,x_2\}$, $C_2 = \{x_3\}$, $C_3 = \{x_1\}$, $C_4 = \{x_2,x_3\}$ and $C_5=\{\neg x_1,\neg x_2,\neg x_3\}$.
Then ${F}$ is partitioned into ${F}_1 = \{C_1,C_2\}$, ${F}_2=\{C_3,C_4\}$ and ${F}_3=\{C_5\}$, where each variable occurs just once in each ${F}_k$ with $k \in \{1,2,3\}$.
Moreover, ${F}_1$ and ${F}_2$ have only positive literals and ${F}_3$ has only negative literals.
Therefore, $F$ is a \SAT{} instance.
Figure~\ref{fig:SAT-PTSP} shows the reduction from $F$.
The formula ${F}$ is satisfied by assigning $1$ to $x_1,x_3$ and $0$ to $x_2$.
Corresponding to this assignment, by moving misplaced tokens along the bold edges in Figure~\ref{fig:SAT-PTSP}, the goal configuration is realized in 3 steps.
\end{example}

\begin{figure}
\begin{center}
\tikzstyle{vertex}=[circle,draw,inner sep=0pt,minimum size=6mm]
\tikzstyle{token}=[rectangle,draw,fill=white,inner sep=0pt,minimum size=4.5mm]
\begin{tikzpicture}[scale=0.9]\small
	\node (u1) at (0,0) [vertex] {$u_1$};
	\node (u11) at (0.7,0.9) [vertex] {$u_{1,1}$};
	\node (u12) at (1.7,0.9) [vertex] {$u_{1,2}$};
	\node (u13) at (0.7,-0.9) [vertex] {$u_{1,3}$};
	\node (u14) at (1.7,-0.9) [vertex] {$u_{1,4}$};
	\node (u1-) at (2.4,0) [vertex] {$u_1'$};
	\pgftransformxshift{5cm}
	\node (u2) at (0,0) [vertex] {$u_2$};
	\node (u21) at (0.7,0.9) [vertex] {$u_{2,1}$};
	\node (u22) at (1.7,0.9) [vertex] {$u_{2,2}$};
	\node (u23) at (0.7,-0.9) [vertex] {$u_{2,3}$};
	\node (u24) at (1.7,-0.9) [vertex] {$u_{2,4}$};
	\node (u2-) at (2.4,0) [vertex] {$u_2'$};
	\pgftransformxshift{5cm}
	\node (u3) at (0,0) [vertex] {$u_3$};
	\node (u31) at (0.7,0.9) [vertex] {$u_{3,1}$};
	\node (u32) at (1.7,0.9) [vertex] {$u_{3,2}$};
	\node (u33) at (0.7,-0.9) [vertex] {$u_{3,3}$};
	\node (u34) at (1.7,-0.9) [vertex] {$u_{3,4}$};
	\node (u3-) at (2.4,0) [vertex] {$u_3'$};
	\pgftransformxshift{-10cm}
	\draw (u1) -- (u11) -- (u12) -- (u1-);
	\draw[very thick]  (u1-) -- (u14) -- (u13) -- (u1);
	\draw[very thick]  (u2) -- (u21) -- (u22) -- (u2-);
	\draw (u2-) -- (u24) -- (u23) -- (u2);
	\draw (u3) -- (u31) -- (u32) -- (u3-);
	\draw[very thick] (u3-) -- (u34) -- (u33) -- (u3);
	\node (v1) at (3.2,4) [vertex] {$v_1$};
	\node (v1') at (3.2,3.2) [vertex] {$v_1'$};
	\draw[very thick]  (v1) .. controls +(180:2) and +(90:2) .. (u11) .. controls +(75:1) and +(180:1) .. (v1');
	\draw (v1) .. controls +(0:2) and +(90:2) .. (u21) .. controls +(105:1) and +(0:1) ..  (v1');
	\node (v2) at (9.7,2) [vertex] {$v_2$};
	\node (v2') at (8.7,1.1) [vertex] {$v_2'$};
	\draw[very thick] (v2) -- (u31) -- (v2');
	\node (v3) at (2.7,2) [vertex] {$v_3$};
	\node (v3') at (3.7,1.1) [vertex] {$v_3'$};
	\draw[very thick] (v3) -- (u12) -- (v3');
	\pgftransformxshift{6cm}
	\node (v4) at (3.2,4) [vertex] {$v_4$};
	\node (v4') at (3.2,3.2) [vertex] {$v_4'$};
	\draw (v4) .. controls +(180:2) and +(90:2) ..  (u22) .. controls +(75:1) and +(180:1) .. (v4');
	\draw[very thick]  (v4) .. controls +(0:2) and +(90:2) .. (u32) .. controls +(105:1) and +(0:1) ..  (v4');
	\pgftransformxshift{-6cm}
	\node (v5) at (7,-3) [vertex] {$v_5$};
	\node (v5') at (4.7,-1.8) [vertex] {$v_5'$};
	\draw (v5) .. controls +(180:2) and +(-60:2) .. (u13) .. controls +(-45:1) and +(180:2) .. (v5');
	\draw[very thick] (v5) -- (u23) -- (v5');
	\draw (v5) .. controls +(15:1) and +(240:2) ..  (u33) .. controls +(225:1) and +(0:2) ..  (v5');
	\node at (-0.5,0.1) [token] {$u_1'$};
	\node at (2.9,0.1) [token] {$u_1$};
	\node at (4.5,-0.1) [token] {$u_2'$};
	\node at (7.9,-0.1) [token] {$u_2$};
	\node at (9.5,0.1) [token] {$u_3'$};
	\node at (12.9,0.1) [token] {$u_3$};
	\node at (3.1,4.5) [token] {$v_1'$};
	\node at (3.1,2.7) [token] {$v_1$};
	\node at (9.3,4.5) [token] {$v_4'$};
	\node at (9.3,2.7) [token] {$v_4$};
	\node at (9.2,1.7) [token] {$v_2'$};
	\node at (8.2,0.9) [token] {$v_2$};
	\node at (3.2,1.7) [token] {$v_3'$};
	\node at (4.2,0.9) [token] {$v_3$};
	\node at (7.1,-2.5) [token] {$v_5'$};
	\node at (4.8,-2.3) [token] {$v_5$};
\end{tikzpicture}
\end{center}
\caption{\label{fig:SAT-PTSP}
The instance of Permutation Routing via Matching obtained from the \SAT{} instance $F$ of Example~\ref{ex:SAT-PTSP}.
By moving misplaced tokens along the bold edges, the goal configuration is realized in 3 steps.
The reduction graph described in the proof for Theorem~\ref{thm:PTSPNPhard2} has essentially the same shape.
}
\end{figure}
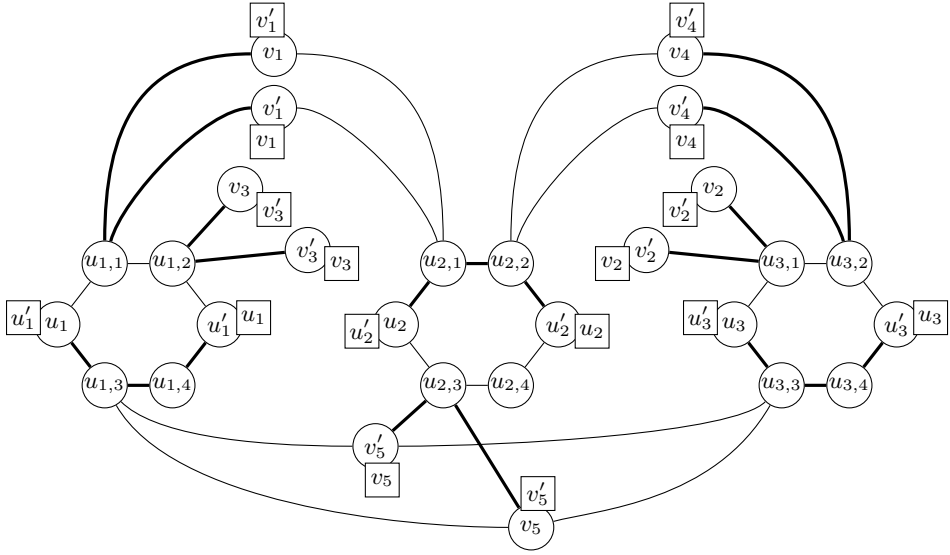
Since $\max\{\,\msf{dist}(w,f^{-1}(w)) \mid w \in V_F \,\} = 3$, $\POPT(G_{F},f) \ge 3$.
We will show that ${F}$ is satisfiable if and only if this lower bound is achieved.
Here we describe an intuition behind the reduction by giving the following observation between a $3$-step solution for $(G_{F},f)$ and  a solution for ${F}$: 
\begin{itemize}
\item tokens $u_i$ and $u_i'$ pass vertices $u_{i,1}$ and $u_{i,2}$ iff $x_i$ should be assigned 0,
	while they pass over $u_{i,3}$ and $u_{i,4}$ iff $x_i$ should be assigned 1,
\item if tokens $v_j$ and $v_j'$ pass a vertex $u_{i,k}$ for some $k \in \{1,2\}$ then $C_j \in {F}_k$ is satisfied thanks to $x_i$, 
	while if they pass over $u_{i,3}$ then $C_j \in {F}_3$ is satisfied thanks to $\neg x_i$.
\end{itemize}
Of course it is contradictory that a clause $C_j \in {F}_1$ is satisfied by $x_i \in C_j$ which is assigned $0$.
This impossibility corresponds to the fact that there are no $i,j$ such that both $u_{i}$ and $v_{j}$ with $C_j \in {F}_1$ go to their respective goals via $u_{i,1}$ in a 3-step solution.
\begin{lemma}\label{lem:3PTSPNPhard}
The formula ${F}$ is satisfiable if and only if\/ $\POPT(G_{F},f) = 3$.
\end{lemma}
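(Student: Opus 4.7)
The lower bound $\POPT(G_F,f)\ge 3$ has already been noted using $\msf{dist}(u_i,u_i')=3$, so the task reduces to showing that a $3$-step solution exists if and only if $F$ is satisfiable.

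For the direction $(\Rightarrow)$, given a satisfying assignment $\alpha\colon X\to\{0,1\}$, I would build $\lrangle{S_1,S_2,S_3}$ gadget-by-gadget. For each variable $x_i$ I select the length-$3$ path through $u_{i,1},u_{i,2}$ when $\alpha(x_i)=0$ and the one through $u_{i,3},u_{i,4}$ when $\alpha(x_i)=1$, and install the three edges that realise the swap of the endpoints of this $P_4$ (in the standard pattern of Lemma~\ref{lem:trivial}) into $S_1,S_2,S_3$. For each clause $C_j$ I pick a literal $\ell\in C_j$ satisfied by $\alpha$ and route the swap of $v_j,v_j'$ through the associated intermediary $u_{i,k}$ by adding $\{v_j,u_{i,k}\},\{u_{i,k},v_j'\},\{v_j,u_{i,k}\}$ to $S_1,S_2,S_3$ respectively. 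I would then verify that each $S_r$ is a matching (by the choice of $\alpha$, the $u_{i,k}$ used by a clause is not occupied by its own variable gadget, and the partition $\{F_1,F_2,F_3\}$ ensures different clauses use different $u_{i,k}$'s) and that $fS_1S_2S_3$ is the identity, which is a local check inside each gadget.

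For the direction $(\Leftarrow)$, fix $\lrangle{S_1,S_2,S_3}\in\PSOL(G_F,f)$. Since $\msf{dist}(u_i,u_i')=3$ in $G_F$, each of the tokens $u_i$ and $u_i'$ must move at every step along a shortest path, and an inspection of $G_F$ shows that only two such shortest paths exist, namely the ones through $\{u_{i,1},u_{i,2}\}$ and through $\{u_{i,3},u_{i,4}\}$. I would argue that both tokens must take the \emph{same} path: otherwise (say) the token initially at $u_{i,1}$ is swapped to $u_{i,2}$ in step~2, but the only edge at $u_{i,2}$ that could carry it back to $u_{i,1}$ in step~3 is $\{u_{i,1},u_{i,2}\}$, which is forbidden because $\{u_{i,1},u_i\}\in S_3$ is already forced (to deliver $u_i$ home). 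Setting $\alpha(x_i)=0$ when the used path is via $\{u_{i,1},u_{i,2}\}$ and $\alpha(x_i)=1$ otherwise then pins down the full variable-gadget edge pattern and, in particular, keeps $u_{i,1},u_{i,2}$ (respectively $u_{i,3},u_{i,4}$) busy in all three matchings while leaving the opposite pair free.

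It remains to prove that every clause is satisfied by $\alpha$. Because $G_F$ is bipartite with $v_j$ and $v_j'$ on the same side and $\msf{dist}(v_j,v_j')=2$, a parity argument forces each of the tokens $v_j$ and $v_j'$ to make exactly two moves along a length-$2$ walk through some common neighbour of $v_j$ and $v_j'$, i.e.\ some $u_{i,k}$ with $k\in\{1,2,3\}$. I would then show that both tokens must use the same intermediary: if they went through distinct $u_{i,k}\ne u_{i',k'}$, the tokens initially at those two vertices end up permuted, and repairing them within the remaining step budget would require the edge $\{u_{i,k},u_{i',k'}\}$ to lie in $E_F$, which holds only when $i=i'$ and $\{k,k'\}=\{1,2\}$; but this forces $C_j\in F_1\cap F_2=\emptyset$, a contradiction. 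Finally, the common intermediary $u_{i,k}$ cannot coincide with a busy vertex of the variable gadget of $x_i$: if $k\in\{1,2\}$ then $\alpha(x_i)=1$ and the positive literal $x_i\in C_j$ is true under $\alpha$, and if $k=3$ then $\alpha(x_i)=0$ and the negative literal $\neg x_i\in C_j$ is true; hence every clause is satisfied and $\alpha$ witnesses the satisfiability of $F$. The most delicate part of the plan is this clause-side impossibility argument, which proceeds by a short case analysis over the possible orderings of the two moves of $v_j$ and $v_j'$; the bipartite parity is the main technical lever, pinning down each intermediary token's walk to be either trivial or an out-and-back along a single edge.
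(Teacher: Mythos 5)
Your proposal is correct, and the forward direction is essentially the paper's construction (including the reuse $S_3=S_1$); but your reverse direction follows a genuinely different and heavier route than the paper's. The published proof extracts the assignment directly from $S_2$: since the token $u_i'$ must traverse a length-$3$ geodesic, the middle edge ($\{u_{i,1},u_{i,2}\}$ or $\{u_{i,3},u_{i,4}\}$) of its path lies in $S_2$, so one may simply set $\phi(x_i)=0$ iff $\{u_{i,1},u_{i,2}\}\in S_2$, well-defined with no need to know which path the other token $u_i$ takes. The clause step is then immediate: for $C_j\in F_k$, the token on $v_j$ passes through some $u_{i,k}$, which forces one of $\{v_j,u_{i,k}\},\{v_j',u_{i,k}\}$ into $S_2$, and the matching property of $S_2$ expels the variable-gadget middle edge at $u_{i,k}$ from $S_2$, so $\phi$ satisfies $C_j$. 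Your route instead establishes three auxiliary facts — both tokens $u_i,u_i'$ use the same $P_4$; both tokens $v_j,v_j'$ use the same intermediary $u_{i,k}$; and the intermediary cannot be a vertex the variable gadget keeps busy in all three rounds — each requiring a case split over the two "disagreement" patterns and over the timings $\{1,2\},\{1,3\},\{2,3\}$ at which a distance-$2$ token can make its two moves. I checked that these cases all close, so your plan is sound, but it is strictly more work, and one detail is slightly misattributed: bipartite parity only forces the $v_j$-walk to have length $2$, it does \emph{not} by itself rule out the $\{1,3\}$ timing; that exclusion (which both proofs need and neither states) actually comes from the displaced $u_{i,k}$-token being unable to return home in that scenario. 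What the paper's approach buys is that the whole analysis collapses to a single matching constraint inside $S_2$, sidestepping the same-path and same-intermediary lemmas altogether.
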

\begin{proof}
Suppose that there is $\phi:X \to \{0,1\}$ satisfying ${F}$.
Then each clause must have a literal to which $\phi$ assigns 1.
Let $\psi:{F} \to X$ be such that
 $\psi(C_j) \in C_j$ and $\phi(\psi(C_j))=1$ if $C_j \in {F}_1 \cup {F}_2$,
and $\neg \psi(C_j) \in C_j$ and $\phi(\psi(C_j))=0$ if $C_j \in {F}_3$.
Define
\begin{align*}
S_1 ={}& \{\, \{u_i,u_{i,1}\},\{u_i',u_{i,2}\} \mid \phi(x_i) = 0 \,\}
 \cup  \{\, \{u_i,u_{i,3}\},\{u_i',u_{i,4}\} \mid \phi(x_i) = 1 \,\}
\\	& \cup  \{\, \{v_j,u_{i,k}\} \mid \psi(C_j)=x_i \text{ and }C_j \in {F}_k \,\}\,,
\\
S_2 ={}& \{\, \{u_{i,1},u_{i,2}\} \mid \phi(x_i) = 0 \,\}
	 \cup  \{\, \{u_{i,3},u_{i,4}\} \mid \phi(x_i) = 1 \,\}
\\	& \cup \{\, \{v_{j}',u_{i,k}\} \mid \psi(C_j)=x_i\text{ and }C_j \in {F}_k \,\}\,.
\end{align*}
It is not hard to see that $\lrangle{S_1,S_2,S_1}$ is a solution for $(G_{F},f)$.

Conversely, suppose that  $(G_{F},f)$ admits a solution $\lrangle{S_1,S_2,S_3}$.
Since the token on $u_i$ is moved to $u_i'$ by the three steps, the path that $u_i'$ takes should be either $({u_i,u_{i,1},u_{i,2},u_i'})$ or $({u_i,u_{i,3},u_{i,4},u_i'})$.
In other words, $S_2$ contains at least one of $\{u_{i,1},u_{i,2}\}$ and $\{u_{i,3},u_{i,4}\}$.
We prove that ${F}$ is satisfied by the assignment $\phi:X \to \{0,1\}$ defined as
\[
	\phi(x_i) = \begin{cases}
		0	& \text{ if $\{u_{i,1},u_{i,2}\} \in S_2$\,,}
\\		1	& \text{ otherwise.}
	\end{cases}
\]
For each $C_j \in {F}_1$, the token on $v_j$ must be moved to $v_j'$ via $u_{i,1}$ for some $i$ such that $x_i \in C_j$.
That is, either $\{v_{j},u_{i,1}\} \in S_2$ or $\{v_{j}',u_{i,1}\} \in S_2$.
Since $S_2$ is a parallel swap, $\{u_{i,1},u_{i,2}\} \notin S_2$ in this case, which means $\phi(x_i)=1$.
Hence $C_j$ is satisfied by $\phi$.
Almost the same arguments show that clauses in ${F}_2$ and ${F}_3$ are also satisfied by $\phi$.
\end{proof}
\begin{theorem}\label{thm:3PTSPNPhard}
For any fixed $p \ge 3$, to decide whether\/ $\POPT(G,f) \le p$ is NP-complete even when $G$ is restricted to be a bipartite graph with maximum vertex degree $4$.
\end{theorem}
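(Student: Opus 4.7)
NP-membership for each fixed $p$ is immediate from the earlier remark that $\POPT(G,f)\le |V|(|V|-1)/2$: a candidate sequence of at most $p$ parallel swaps has size polynomial in the input, and verifying that it transforms $f$ into the identity is in polynomial time.

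For NP-hardness at $p=3$, Lemma~\ref{lem:3PTSPNPhard} supplies everything needed. In the construction $(G_F,f)$ every token $u_i$ satisfies $f(u_i)=u_i'\neq u_i$ with $\msf{dist}(u_i,u_i')=3$, hence $\POPT(G_F,f)\ge 3$; combined with the lemma this gives $\POPT(G_F,f)\le 3$ iff $F$ is satisfiable. The graph $G_F$ is bipartite with maximum degree $4$ as already observed in the construction.

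For a fixed $p\ge 4$ the plan is to extend the reduction by inflating each variable gadget. Given a \SAT{} instance $F$, define $G_F^{(p)}$ from $G_F$ by subdividing each of the two edges $\{u_{i,1},u_{i,2}\}$ and $\{u_{i,3},u_{i,4}\}$ with $p-3$ fresh degree-$2$ vertices, so that both branches from $u_i$ to $u_i'$ have length $p$; leave all clause edges $\{v_j,u_{i,k}\}$ and $\{u_{i,k},v_j'\}$ as well as the initial configuration $f$ on the unchanged vertices exactly as before, and place identity tokens on the new subdivision vertices. Subdivision preserves bipartiteness and keeps the maximum degree at $4$ (only degree-$2$ vertices are added). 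Since now $\msf{dist}(u_i,u_i')=p$, we have $\POPT(G_F^{(p)},f)\ge p$ unconditionally. The analogue of Lemma~\ref{lem:3PTSPNPhard} to prove is that $\POPT(G_F^{(p)},f)= p$ iff $F$ is satisfiable: the \emph{if}-direction generalizes the explicit three-step schedule $\lrangle{S_1,S_2,S_1}$ of the lemma into a $p$-step schedule that walks each variable token along its chosen length-$p$ branch while interleaving the clause-token moves at appropriate time steps; the \emph{only-if}-direction carries over almost verbatim, because a $p$-step solution forces each $u_i$-token onto one of the two length-$p$ branches with no detour, inducing an assignment $\phi$, and the clause-conflict argument is unchanged since the clause attachments are unaffected by the subdivision.

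The main obstacle is a parity issue in the upper bound. By Lemma~\ref{lem:trivial}, end-to-end swapping along a path with $p+1$ vertices needs $p$ parallel steps when $p$ is odd but $p+1$ steps when $p$ is even, so the straightforward inflation realizes a $p$-step schedule only for odd $p$. For even $p$ a small patch is necessary---for instance, an asymmetric inflation where the two branches of each variable gadget have lengths $p$ and $p+1$ so the canonical schedule always selects the parity-compatible branch, or the insertion on each variable path of a small \emph{helper} misplaced pair whose single realigning swap can be absorbed inside the $p$-step window. The delicate point of the extension is verifying that any such patch preserves the unsatisfiability obstruction, i.e., that it does not open up a spurious $p$-step solution when $F$ is unsatisfiable.
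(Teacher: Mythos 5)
There is a genuine gap, and it is larger than the one you flagged. Your subdivision lengthens the variable cycles but leaves the clause paths $(v_j,u_{i,k},v_j')$ at length~$2$, so the token $v_j$ still needs only $3$ parallel steps out of~$p$. This slack destroys the ``only-if'' direction for all but small $p$. Consider $p\ge 7$: in any $p$-step schedule the token $u_i$ must advance one edge per step along its chosen length-$p$ branch, so if it takes the $u_{i,1}$-branch the vertex $u_{i,1}$ is touched precisely at steps $1,2,p-1,p$ and is untouched during steps $3,\dots,p-2$, which for $p\ge 7$ contains three consecutive free steps. One can then insert the three-step swap $\{v_j,u_{i,1}\},\{u_{i,1},v_j'\},\{v_j,u_{i,1}\}$ at steps $3,4,5$ without interfering with the $u_i$-schedule (the displaced intermediate token returns by step~$5$), so \emph{every} clause can be routed through a ``busy'' $u_{i,k}$ independently of which branch $u_i$ took. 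Thus $\POPT(G^{(p)}_F,f)=p$ holds regardless of satisfiability, and the reduction collapses. Your claim that ``the clause-conflict argument is unchanged since the clause attachments are unaffected'' is exactly the problem: keeping the clause attachments short is what introduces the fatal slack.

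The parity gap you do flag for even $p$ is also real, and your proposed patches cannot work as stated: by Lemma~\ref{lem:trivial}, swapping two end tokens along a bare path of $n$ vertices always costs an odd number of steps ($n{-}1$ for $n$ even, $n$ for $n$ odd), so no branch length makes a pure two-endpoint swap cost an even~$p$. The paper sidesteps both problems at once by not touching the inner edges of the variable gadget. Instead it attaches pendant paths $(u_i,\hat u_{i,1},\dots,\hat u_{i,h})$ and $(u_i',\hat u_{i,1}',\dots,\hat u_{i,h}')$ carrying a shifted (``conveyor-belt'') initial configuration, which makes the restriction of $f$ along a chosen branch a long cycle rather than a transposition and eliminates the parity obstruction, and it subdivides every clause edge $\{v_j,u_{i,k}\}$ into a path of length $h{+}1$, so that $\msf{dist}(v_j,v_j')=p-1$ and the clause token is forced to touch $u_{i,k}$ during step $p-1$, exactly when $S_{p-1}$ must contain $\{u_{i,1},u_{i,2}\}$ or $\{u_{i,3},u_{i,4}\}$. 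That synchronized tightness is what the only-if direction actually needs, and it is precisely what the naive inner-edge subdivision fails to provide.
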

\begin{proof}
Lemma~\ref{lem:3PTSPNPhard} proves the theorem for $p=3$.
We show the lemma for $p = 3 + h$ with $h > 0$ by inserting paths into appropriate places in the graph constructed above.
We add the following paths of length $h$:
\[
	(u_i,\hat{u}_{i,1},\dots,\hat{u}_{i,h}) \text{ and } (u_i',\hat{u}_{i,1}',\dots,\hat{u}_{i,h}')
\,.\]
The tokens on those paths in the initial configuration will be
\begin{gather*}
	f(u_i)=\hat{u}_{i,1},\ f(\hat{u}_{i,k}) = \hat{u}_{i,k+1} \text{ for } k=1,\dots,h-1,\ f(\hat{u}_{i,h}) = {u}_{i}' \,,
\\
	f(u_i')=\hat{u}_{i,1}',\ f(\hat{u}_{i,k}') = \hat{u}_{i,k+1}' \text{ for } k=1,\dots,h-1,\ f(\hat{u}_{i,h}') = {u}_{i} \,.
\end{gather*}
Then, concerning the vertices where $u_i$ and $u_i'$ are put in the initial and goal configurations, we have $\msf{dist}(u_i,f^{-1}(u_i)) = \msf{dist}(u_i',f^{-1}(u_i')) = p$.
Then either $\{u_{i,1},u_{i,2}\} \in S_{h+2}$ or $\{u_{i,3},u_{i,4}\} \in S_{h+2}$ if $\lrangle{S_1,\dots,S_p}$ is a solution.
Moreover, we replace every edge $\{v_j,u_{i,k}\}$ for $i \in \{1,\dots,m\}$ with $x_i \in C_j \in F_k$ or $\neg x_i \in C_j \in F_k$ by a path $(v_j,\hat{v}_{j,i,1},\dots,\hat{v}_{j,i,h},u_{i,k})$,
where those new vertices $\hat{v}_{j,i,1},\dots,\hat{v}_{j,i,h}$ have the right tokens in the initial configuration, while we keep $f(v_j)=v_j'$ and $f(v_j')=v_j$.
This realizes $\msf{dist}(v_j,f^{-1}(v_j))=\msf{dist}(v_j',f^{-1}(v_j'))=p-1$.
If the token $v_j'$ on the vertex $v_j$ goes to the vertex $v_j'$ via $u_{i,k}$ within $p$ steps, either $\{v_j',u_{i,k}\} \in S_{h+2}$ or $\{\hat{v}_{j,i,h},u_{i,k}\} \in S_{h+2}$ holds.
The same argument in the proof of Lemma~\ref{lem:3PTSPNPhard} works.
\end{proof}
\noindent
Banerjee and Richards~\cite{BanerjeeR17} have shown Theorem~\ref{thm:3PTSPNPhard} using a different reduction.
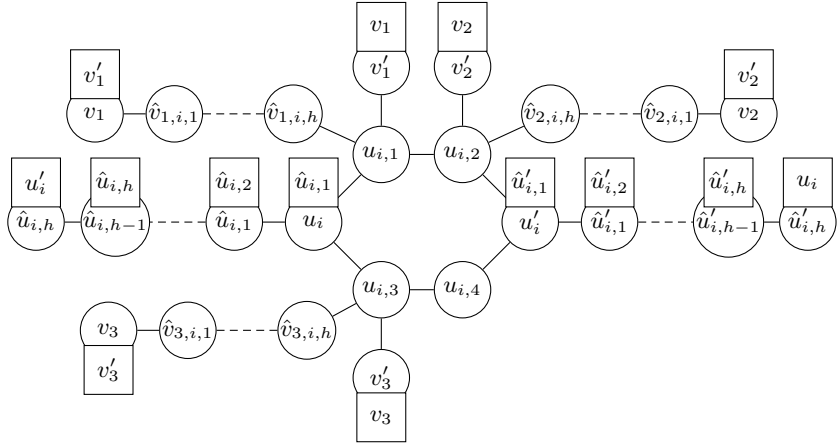
\begin{figure}
\begin{center}
\tikzstyle{vertex}=[circle,draw,inner sep=0pt,minimum size=7.5mm]
\tikzstyle{token}=[rectangle,draw,fill=white,inner sep=0pt,minimum size=6.5mm]
\tikzstyle{dummy}=[rectangle,fill=white,inner sep=0pt,minimum size=3.2mm]
\begin{tikzpicture}[scale=0.9]\small
	\node (u1) at (-0.2,0) [vertex] {$u_i$};
	\node (u11) at (0.8,1.0) [vertex] {$u_{i,1}$};
	\node (u12) at (2.0,1.0) [vertex] {$u_{i,2}$};
	\node (u13) at (0.8,-1.0) [vertex] {$u_{i,3}$};
	\node (u14) at (2.0,-1.0) [vertex] {$u_{i,4}$};
	\node (u1') at (3.0,0) [vertex] {$u_i'$};
	\draw (u1) -- (u11) -- (u12) -- (u1');
	\draw (u1') -- (u14) -- (u13) -- (u1);
	\node (v1) at (-0.5,1.6) [vertex] {$\hat{v}_{1,i,h}$};
	\node (v1') at (0.8,2.3) [vertex] {$v_1'$};
	\draw (v1) -- (u11) -- (v1');
	\node (v2) at (3.3,1.6) [vertex] {$\hat{v}_{2,i,h}$};
	\node (v2') at (2.0,2.3) [vertex] {$v_2'$};
	\draw (v2) -- (u12) -- (v2');
	\node (v3) at (-0.3,-1.6) [vertex] {$\hat{v}_{3,i,h}$};
	\node (v3') at (0.8,-2.3) [vertex] {$v_3'$};
	\draw (v3) -- (u13) -- (v3');
	\node (hu1) [vertex, left of = u1, node distance = 30] {$\hat{u}_{i,1}$};
	\node (hu2) [vertex, left of = hu1, node distance = 45] {$\hat{u}_{i,h-1}$};
	\node (hu3) [vertex, left of = hu2, node distance = 30] {$\hat{u}_{i,h}$};
	\draw (u1) -- (hu1);
	\draw [densely dashed] (hu1) -- (hu2);
	\draw (hu2) -- (hu3);
	\node (hu1') [vertex, right of = u1', node distance = 30] {$\hat{u}_{i,1}'$};
	\node (hu2') [vertex, right of = hu1', node distance = 45] {$\hat{u}_{i,h-1}'$};
	\node (hu3') [vertex, right of = hu2', node distance = 30] {$\hat{u}_{i,h}'$};
	\draw (u1') -- (hu1');
	\draw [densely dashed] (hu1') -- (hu2');
	\draw (hu2') -- (hu3');
	\node (hv11) [vertex, left of = v1, node distance = 45] {$\hat{v}_{1,i,1}$};
	\node (hv12) [vertex, left of = hv11, node distance = 30] {${v}_1$};
	\draw (v1) [densely dashed] -- (hv11);
	\draw (hv11)  -- (hv12);
	\node (hv21) [vertex, right of = v2, node distance = 45] {$\hat{v}_{2,i,1}$};
	\node (hv22) [vertex, right of = hv21, node distance = 30] {${v}_2$};
	\draw (v2) [densely dashed] -- (hv21);
	\draw (hv21)  -- (hv22);
	\node (hv31) [vertex, left of = v3, node distance = 45] {$\hat{v}_{3,i,1}$};
	\node (hv32) [vertex, left of = hv31, node distance = 30] {${v}_3$};
	\draw (v3) [densely dashed] -- (hv31);
	\draw (hv31)  -- (hv32);
	\node [token, above of = u1, node distance = 15] {$\hat{u}_{i,1}$};
	\node [token, above of = hu1, node distance = 15] {$\hat{u}_{i,2}$};
	\node [token, above of = hu2, node distance = 15] {$\hat{u}_{i,h}$};
	\node [token, above of = hu3, node distance = 15] {${u}_{i}'$};
	\node [token, above of = u1', node distance = 15] {$\hat{u}_{i,1}'$};
	\node [token, above of = hu1', node distance = 15] {$\hat{u}_{i,2}'$};
	\node [token, above of = hu2', node distance = 15] {$\hat{u}_{i,h}'$};
	\node [token, above of = hu3', node distance = 15] {${u}_{i}$};
	\node [token, above of = hv12, node distance = 15] {$v_1'$};
	\node [token, above of = v1', node distance = 15] {$v_1$};
	\node [token, above of = hv22, node distance = 15] {$v_2'$};
	\node [token, above of = v2', node distance = 15] {$v_2$};
	\node [token, below of = hv32, node distance = 15] {$v_3'$};
	\node [token, below of = v3', node distance = 15] {$v_3$};
\end{tikzpicture}
\end{center}
\caption{\label{fig:3CPTSP2}
Gadget used to show Theorem~\ref{thm:3PTSPNPhard}, where $x_i \in C_1 \in F_1$, $x_i \in C_2 \in F_2$ and $\neg x_i \in C_3 \in F_3$. Tokens on their goal vertices are omitted.}
\end{figure}

One can modify our reduction so that every vertex has degree at most 3 by dividing vertices $u_{i,k}$ into two vertices of degree at most 3.
Let
\begin{align*}
	V_{F} = {}& \{\, u_i,u_i',u_{i,1},u_{i,1}',u_{i,2},u_{i,2}',u_{i,3},u_{i,3}',u_{i,4},u_{i,4}' \mid 1 \le i \le m\,\}
\\
	& \cup  \{\, v_j,v_j' \mid 1 \le j \le n \,\}
	  \cup \{\, v_{j,i} \mid x_i \in C_j \text{ or } \neg x_i \in C_j \,\}
\,.\end{align*}
The new graph $G_{F}'$ contains the following paths of length $5$ and $4$:
\begin{align*}
	(u_i,u_{i,1}',u_{i,1},u_{i,2}',u_{i,2},u_{i}')	& \text{ and }
	(u_i,u_{i,3}',u_{i,3},u_{i,4}',u_{i,4},u_{i}')	\text{ for each $i \in \{1,\dots,m\}$,}
\\
	(v_j,u_{i,k},u_{i,k}',v_{j,i},v_{j}')	& \text{ if $x_i \in C_j \in {F}_k$ or $\neg x_i \in C_j \in {F}_k$}
\,.\end{align*}
The initial configuration $f$ is defined in the same manner as the previous construction.
It is identity except 
$f(u_i) = u_i'$, 
$f(u_i') = u_i$,
$f(v_j) = v_j'$, and
$f(v_j') = v_j$ for $i \in \{1,\dots,m\}$ and $j \in \{1,\dots, n\}$.
The formula ${F}$ is satisfiable if and only if\/ $\POPT(G_{F}',f) = 5$.
\begin{theorem}\label{thm:PTSPNPhard2}
For any fixed $k \ge 5$, to decide whether\/ $\POPT(G,f) \le k$ is NP-complete even when\/ $G$ is restricted to be a bipartite graph with maximum vertex degree $3$.
\end{theorem}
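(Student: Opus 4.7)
The plan is to mirror the proof of Theorem~\ref{thm:3PTSPNPhard}, reusing the \SAT{} reduction but now on the graph $G_F'$ with its longer gadget paths. The key claim to establish is that $F$ is satisfiable if and only if $\POPT(G_F', f) = 5$; since $\msf{dist}(u_i, f^{-1}(u_i)) = 5$ in $G_F'$, this immediately delivers NP-hardness for $k = 5$, while NP membership follows from the bound $\POPT \le |V|(|V|-1)/2$ noted earlier in the section. Bipartiteness and the degree-$3$ bound of $G_F'$ have already been verified in the construction.

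For the forward direction, given a satisfying assignment $\phi : X \to \{0,1\}$ and the associated literal-selector $\psi : F \to X$ used in the proof of Lemma~\ref{lem:3PTSPNPhard}, the plan is to construct a 5-step parallel swap sequence by rotating tokens along vertex-disjoint gadget paths. For each variable $x_i$, the pair $(u_i, u_i')$ is rotated along the six-vertex path $A_i = (u_i, u_{i,1}', u_{i,1}, u_{i,2}', u_{i,2}, u_i')$ if $\phi(x_i) = 0$, or along $B_i$ if $\phi(x_i) = 1$; Lemma~\ref{lem:trivial} with even $n = 6$ produces such a rotation in exactly $n - 1 = 5$ steps. For each $C_j \in F_k$, the pair $(v_j, v_j')$ is rotated along the five-vertex path $(v_j, u_{\psi(C_j), k}, u_{\psi(C_j), k}', v_{j, \psi(C_j)}, v_j')$; Lemma~\ref{lem:trivial} with odd $n = 5$ gives a rotation in exactly $n = 5$ steps. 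The \SAT{} partition together with the choice of $\psi$ makes these paths pairwise vertex-disjoint, exactly as in the proof of Lemma~\ref{lem:3PTSPNPhard}, so the individual 5-step rotations assemble into a single 5-step parallel swap solution.

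For the backward direction, fix a 5-step solution $\lrangle{S_1, \dots, S_5} \in \PSOL(G_F', f)$. Bipartiteness of $G_F'$ forces every walk between two vertices to have the same parity as their distance, so the token $u_i$ (distance $5$ from its goal) must traverse a walk of length exactly $5$, i.e., a shortest path, and similarly the token $v_j$ (distance $4$) must traverse a shortest path of length $4$. The only shortest $u_i'$-$u_i$ paths in $G_F'$ are $A_i$ and $B_i$, while the only shortest $v_j'$-$v_j$ paths are of the form $(v_j', v_{j,i}, u_{i,k}', u_{i,k}, v_j)$ indexed by literals of $C_j$. Define $\phi(x_i)$ according to whether the token $u_i$ follows $A_i$ or $B_i$. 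The main obstacle, and the hard part of the proof, is to show that this $\phi$ satisfies every clause: the plan is a matching-capacity analysis at the vertex $u_{i,k}$ shared between the $v_j$-path and the $u_i$-gadget, counting how many distinct tokens must occupy $u_{i,k}$ across the five time slots and at how many slots $u_{i,k}$ is already saturated by an $A_i$/$B_i$-swap. This is intended to force $u_i$ to take the gadget path avoiding $u_{i,k}$, and hence $\phi$ to assign the chosen literal of $C_j$ the value true. This replaces the clean middle-step conflict ``$\{u_{i,1}, u_{i,2}\} \in S_2$'' of Lemma~\ref{lem:3PTSPNPhard} with a finer step-by-step accounting, since the 5-step rotation of Lemma~\ref{lem:trivial} saturates several interior vertices of $A_i$ and $B_i$ at multiple steps.

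For arbitrary fixed $k \ge 5$, the construction extends by attaching length-$(k-5)$ paths at $u_i$ and $u_i'$ and subdividing each clause-gadget edge $\{v_j, u_{i,k}\}$ into a length-$(k-4)$ path, with initial tokens placed so that $\msf{dist}(u_i, f^{-1}(u_i)) = k$ and $\msf{dist}(v_j, f^{-1}(v_j)) = k-1$, exactly mirroring the padding used in Theorem~\ref{thm:3PTSPNPhard}. Bipartiteness and the degree-$3$ bound are preserved because the padding paths introduce only degree-$2$ internal vertices. The forward construction absorbs the padding by extending each rotation, and the backward conflict argument runs unchanged apart from a shift of the critical step, completing the proof.
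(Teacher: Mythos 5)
Your forward direction (rotating along vertex-disjoint gadget paths using Lemma~\ref{lem:trivial}) is sound, and you have correctly recognized that the backward direction is \emph{not} a straightforward adaptation of the clean middle-step conflict of Lemma~\ref{lem:3PTSPNPhard}. But the backward direction is only a sketch, and the vertex-capacity count you intend at $u_{i,k}$ will not produce the required contradiction as the gadget is stated: the clause path $(v_j,u_{i,1},u_{i,1}',v_{j,i},v_j')$ and the variable path $A_i=(u_i,u_{i,1}',u_{i,1},u_{i,2}',u_{i,2},u_i')$ traverse the shared edge $\{u_{i,1},u_{i,1}'\}$ in \emph{opposite} directions, so the two pairs of tokens can simply cross by being exchanged directly across that edge. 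Concretely, even when $u_i,u_i'$ use $A_i$ (so $\phi(x_i)=0$) and $v_j,v_j'$ route via $u_{i,1}$ for some $C_j\in F_1$ with $x_i\in C_j$ (the combination the reduction must forbid), the following $5$-step parallel swap sequence restores every affected vertex:
\begin{align*}
S_1 &= \bigl\{\{u_i,u_{i,1}'\},\{u_i',u_{i,2}\},\{v_j,u_{i,1}\},\{v_j',v_{j,i}\}\bigr\},\\
S_2 &= \bigl\{\{u_{i,1}',u_{i,1}\},\{u_{i,2},u_{i,2}'\}\bigr\},\\
S_3 &= \bigl\{\{u_{i,1},u_{i,2}'\},\{u_{i,1}',v_{j,i}\}\bigr\},\\
S_4 &= \bigl\{\{u_{i,1},u_{i,1}'\},\{u_{i,2}',u_{i,2}\},\{v_{j,i},v_j'\}\bigr\},\\
S_5 &= \bigl\{\{u_{i,2},u_i'\},\{u_{i,1}',u_i\},\{u_{i,1},v_j\}\bigr\}.
\end{align*}
Each $S_t$ is a matching of $G_F'$; at step $2$ the shared edge exchanges $u_i'$ with $v_j'$ and at step $4$ it exchanges $u_i$ with $v_j$, so all four moving pairs follow shortest paths and every stationary token returns home. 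A count of distinct tokens at $u_{i,1}$ over the six time slots gives $u_{i,1},v_j',u_i',u_i,v_j,u_{i,1}$, which is only five, so no pigeonhole applies and your accounting argument cannot rule this schedule out.

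The mechanism in Lemma~\ref{lem:3PTSPNPhard} (and in the degree-$4$ padded gadget of Theorem~\ref{thm:3PTSPNPhard}) works precisely because both clause tokens and both variable tokens want to cross the \emph{same} vertex at the \emph{same} middle step. To recreate that in the degree-$3$ gadget the clause path must be oriented so that $v_j$ attaches to the interior vertex at position $2$ of the relevant length-$5$ variable path, i.e.\ $(v_j,u_{i,k}',u_{i,k},v_{j,i},v_j')$ for $k\in\{1,3\}$; for $k=2$, whose shared edge is at positions $4$--$5$, the stated orientation is already correct by symmetry. With that orientation, position $2$ is occupied by $u_i'$ at time $1$ and by $u_i$ at time $4$, so $v_j'$ can reach it only at time $2$, but then position $3$ is blocked at time $3$ by $u_i$ and the single available rest step has already been spent, yielding the contradiction you were aiming for. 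Without this orientation correction the constraints imposed by $F_1$ and $F_3$ are not enforced at all, and the backward direction fails.
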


\subsection{PTIME Subcases}
In this subsection we discuss tractable subcases of Permutation Routing via Matching.
In contrast to Theorem~\ref{thm:3PTSPNPhard}, it is decidable in polynomial time whether an instance of Permutation Routing via Matching admits a 2-step solution.
In addition, we present an approximation algorithm for finding a solution for Permutation Routing via Matching on paths whose length can be at most one larger than that of an optimal solution.

\subsubsection{2-Step Permutation Routing via Matching}
It is well-known that any permutation can be expressed as a product of 2 involutions, which means that any problem instance of Permutation Routing via Matching on a complete graph has a 2-step solution.
Graphs we treat are not necessarily complete but the arguments by Petersen and Tenner~\cite[Lemma~2.3]{PetersenT13} on involution factorization lead to the following observation, which is useful to decide whether $\POPT(G,f) \le 2$ for general graphs $G$.
\begin{proposition}\label{prop:2step}
 $\lrangle{S,T} \in \PSOL(G,f)$ if and only if the set of orbits under $f$ is partitioned as 
$\{\{[u_1]_f,[v_1]_f\},\dots,\{[u_k]_f,[v_k]_f\}\}$ (possibly $[u_j]_f=[v_j]_f$ for some $j \in \{1,\dots,k\}$) so that for every $j \in \{1,\dots,k\}$,
\[
 \{f^i(u_j),f^{-i}(v_j)\} \in \check{S} \text{ and } \{f^{i+1}(u_j),f^{-i}(v_j)\} \in \check{T} \text{ for all $i \in \mbb{Z}$,}
\]
where $\check{S} = S \cup \{\, \{v\} \mid v \in V - \bigcup S\,\}$ for a parallel swap $S$.
\end{proposition}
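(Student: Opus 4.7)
The approach is to identify each parallel swap $S$ (a matching of $G$) with the involution $\pi_S : V \to V$ that swaps the endpoints of each edge in $S$ and fixes all other vertices, so that $\check{S}$ records exactly the $2$-cycles and singleton fixed points of $\pi_S$. Unfolding the definition of $fS$ gives $f\lrangle{S,T} = f \circ \pi_S \circ \pi_T$ as functions on $V$, so $\lrangle{S,T} \in \PSOL(G,f)$ is equivalent to the single algebraic identity $\pi_T \circ \pi_S = f$. The proposition then becomes the combinatorial description of two-involution factorizations of $f$ organized along its orbits, which is precisely the content of Petersen and Tenner's Lemma~2.3, and both directions are derived from this identity by a bookkeeping argument.

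For the forward direction, assuming $\pi_T \pi_S = f$, I first observe that $\pi_S f \pi_S = \pi_S \pi_T = (\pi_T \pi_S)^{-1} = f^{-1}$, so $\pi_S$ conjugates $f$ to $f^{-1}$. A straightforward induction on $|i|$ then gives $\pi_S(f^i(u)) = f^{-i}(\pi_S(u))$ for every $u \in V$ and every $i \in \mbb{Z}$, so $\pi_S$ sends each orbit $[u]_f$ bijectively onto $[\pi_S(u)]_f$ (possibly itself). Being an involution, $\pi_S$ thus induces an involution on the set of $f$-orbits whose orbits produce the required pairing $\{[u_1]_f,[v_1]_f\},\dots,\{[u_k]_f,[v_k]_f\}$; choosing $u_j$ in each pair and setting $v_j := \pi_S(u_j)$ makes the $S$-condition immediate from the identity above, while the $T$-condition follows from $\pi_T = f \pi_S$ (a direct consequence of $\pi_T \pi_S = f$ and $\pi_S^{-1}=\pi_S$) applied to $f^{i+1}(u_j)$. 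The backward direction is then a direct verification: given the partition and the two conditions, applying the $S$- then $T$-condition to any $x = f^i(u_j)$ yields $\pi_T(\pi_S(x)) = \pi_T(f^{-i}(v_j)) = f^{i+1}(u_j) = f(x)$, and the case $x \in [v_j]_f$ is symmetric, so $\pi_T\pi_S = f$ on all of $V$.

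The main subtlety, which I expect to be the hard part, is the self-paired case $[u_j]_f = [v_j]_f$. Here $v_j = f^\ell(u_j)$ for some $\ell$, and a pair $\{f^i(u_j),f^{-i}(v_j)\} = \{f^i(u_j),f^{\ell-i}(u_j)\}$ can collapse to a singleton when $2i \equiv \ell$ modulo the cycle length, forcing $\pi_S$ to fix that element. In the forward direction I must verify that the sets prescribed for $\check{S}$ and $\check{T}$ genuinely partition the self-paired orbit, with consistent singleton fixed points when this degeneracy occurs; in the backward direction I must check that such a prescription is realizable by actual matchings $S,T$ of $G$. Both reduce to the involution property of $\pi_S,\pi_T$ together with a short case analysis on the parity of the cycle length and the offset $\ell$, which is essentially the content of Petersen--Tenner's lemma specialized to the graph-matching setting.
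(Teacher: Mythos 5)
Your proof is correct. Note that the paper does not actually supply a proof of this proposition; it is stated as an observation attributed to the arguments of Petersen and Tenner on involution factorization, so there is no in-paper argument to compare line by line. Your reconstruction is the natural one: writing $\pi_S$ for the involution determined by the matching $S$, one has $fS = f\circ\pi_S$, hence $\lrangle{S,T} \in \PSOL(G,f)$ iff $f\circ\pi_S\circ\pi_T = \mathrm{id}$ iff $\pi_T\circ\pi_S = f$; the conjugation identity $\pi_S f \pi_S = f^{-1}$ then yields $\pi_S(f^i(u)) = f^{-i}(\pi_S(u))$, from which $\pi_S$ induces an involution on the $f$-orbits, and picking $v_j := \pi_S(u_j)$ gives the $\check{S}$-condition, while $\pi_T = f\pi_S$ gives the $\check{T}$-condition. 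The converse is the direct computation you give. All of this is sound.

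One remark: the worry you raise in your final paragraph does not in fact open a gap. In both directions of the equivalence, $S$ and $T$ are part of the given data (they are matchings of $G$, so $\pi_S$ and $\pi_T$ are already well-defined involutions, and $\check{S},\check{T}$ are already partitions of $V$ into edges and fixed points). You are never asked to \emph{prescribe} or \emph{realize} a matching. In the forward direction the identity $\pi_S(f^i(u_j)) = f^{-i}(v_j)$, derived from $\pi_S f\pi_S = f^{-1}$, by itself determines which of the prescribed pairs $\{f^i(u_j),f^{-i}(v_j)\}$ collapse to fixed points; consistency is automatic because $\pi_S$ is a function. In the backward direction you only need one valid representation $x = f^i(u_j)$ (or symmetrically $f^{-i}(v_j)$) to evaluate $\pi_T\pi_S(x)$, and $\pi_T\pi_S$ is well-defined regardless of which representation is chosen. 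The parity and offset bookkeeping you anticipate is relevant for \emph{counting} two-involution factorizations (which is what Petersen--Tenner's lemma is really about, and what Theorem~\ref{thm:2PTSPPoly}'s companion counting result uses), but it plays no role in establishing this biconditional.
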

\begin{theorem}\label{thm:2PTSPPoly}
It is decidable in polynomial time if\/ $\POPT(G,f) \le 2$ for any\/ $G$ and $f$.\footnote{Banerjee and Richards~\cite{BanerjeeR17} independently show this theorem by essentially the same proof.}
\end{theorem}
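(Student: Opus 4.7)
The plan is to invoke Proposition~\ref{prop:2step}, which reduces the existence of a 2-step solution to finding a suitable pairing of the $f$-orbits, and to recognize this pairing problem as one of matching in a small auxiliary graph.

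First I would compute the orbits $O_1,\dots,O_r$ of $f$ in linear time. By Proposition~\ref{prop:2step}, $\POPT(G,f)\le 2$ iff these orbits can be partitioned into unordered pairs $\{O_a,O_b\}$ (possibly with $O_a=O_b$, a ``self-pair'') such that each pair is \emph{compatible}: there exist representatives $u\in O_a$, $v\in O_b$ for which every $\{f^i(u),f^{-i}(v)\}$ and every $\{f^{i+1}(u),f^{-i}(v)\}$ ($i\in\mbb{Z}$) is either an edge of $G$ or a degenerate singleton $\{w\}$ arising when $f^i(u)=f^{-i}(v)$. Because different orbits span disjoint vertex sets, per-pair feasibility automatically combines into global parallel swaps $S,T$ without conflict.

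Next I would decide compatibility for every candidate pair. For distinct orbits of unequal size, some vertex of the smaller orbit necessarily ends up incident to several distinct edges in the required set (its partner cycles with a different period), violating the matching property; hence $|O_a|=|O_b|=\ell$ is necessary. When the sizes agree, the simultaneous shift $(u,v)\mapsto(f^c(u),f^{-c}(v))$ leaves both edge sets invariant, so only $\ell$ relative offsets between representatives are essentially distinct. Each offset can be tested by checking at most $2\ell$ candidate pairs against $E(G)$ (and recognising coincidences as singletons), giving an $O(\ell^2)$ per-pair test. An analogous $O(\ell^2)$ test handles self-compatibility of each single orbit. The complete compatibility table is thus produced in polynomial time.

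Finally I would form the auxiliary graph $H$ on orbit vertices, with an edge $\{O_a,O_b\}$ for each compatible distinct pair and a self-loop at each self-compatible orbit. A valid partition exists iff $H$ admits a perfect ``1-matching with loops,'' where a loop at $O_a$ saturates only $O_a$. To apply a standard algorithm, I would attach a fresh pendant $O_a^\dagger$ to each self-compatible orbit $O_a$ (replacing its loop by the edge $\{O_a,O_a^\dagger\}$), assign weight $2$ to every original $\{O_a,O_b\}$ edge and weight $1$ to every pendant edge, and compute a maximum-weight matching via Edmonds' algorithm. Since the weight of any such matching equals the number of orbit vertices it saturates, a 2-step solution exists iff this maximum attains $r$. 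The technically delicate step will be the compatibility analysis of the previous paragraph---verifying both the size-equality obstruction and the invariance of edge sets under simultaneous shifts---since once these are established the polynomial running time follows immediately.
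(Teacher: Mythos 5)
Your proof is correct and follows essentially the same route as the paper: compute the $f$-orbits, use Proposition~\ref{prop:2step} to decide pairwise compatibility of orbits in polynomial time, and then reduce to a perfect-matching-with-loops problem solved by Edmonds' algorithm. The only difference is that you spell out the compatibility test (the equal-size obstruction and shift invariance) and the pendant-vertex/weighting trick for handling self-loops, whereas the paper leaves these as ``can be computed in polynomial time'' and ``a very minor variant of perfect matching.''
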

\begin{proof}
Suppose $G$ and $f$ are given. One can compute in polynomial time all the orbits $[\cdot]_f$.
Let us denote the subgraph of $G$ induced by a vertex set $U \subseteq V$ by $G_{U}$
and the sub-configuration of $f$ restricted to $[u]_f \cup [v]_f$ by $f_{u,v}$.
The set
\begin{align*}
	\Gamma_f = \{\, \{[u]_f,[v]_f\} \mid \POPT(G_{[u]_f \cup [v]_f}, f_{u,v}) \le 2 \,\}
\end{align*}
 can be computed in polynomial time by Proposition~\ref{prop:2step}.
It is clear that $\POPT(G,f) \le 2$ if and only if there is a subset $\Gamma \subseteq \Gamma_f$ in which every orbit occurs exactly once.
This problem is a very minor variant of the problem of finding a perfect matching on a graph, which can be solved in polynomial time~\cite{Edmonds65}.
\end{proof}
One can calculate the number of 2-step solutions in $\PSOL(K_n,f)$ for any configuration $f$ on the complete graph $K_n$ using Petersen and Tenner's formula~\cite{PetersenT13}.
However, it is hard for general graphs.
\begin{theorem}
It is a \#{}P-complete problem to calculate the number of 2-step solutions in $\PSOL(G,f)$ for bipartite graphs $G$.
\end{theorem}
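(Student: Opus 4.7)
The plan is to establish membership in \#P directly and then give a Turing reduction from the problem of counting perfect matchings in bipartite graphs, which is \#P-complete by Valiant's classical theorem. Membership in \#P is immediate: a 2-step solution $\lrangle{S,T}$ has polynomial description size, and one can verify $fST = \mrm{id}$ in polynomial time.

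For hardness, given a balanced bipartite graph $H = (A \cup B, E_H)$ with $|A|=|B|=n$ whose perfect matchings we wish to count, I would construct $(G_H,f_H)$ as follows. For each $a \in A$ introduce vertices $a_1,a_2$ with $f_H(a_1)=a_2$ and $f_H(a_2)=a_1$; do the same for each $b \in B$. For every $\{a,b\} \in E_H$ put into $G_H$ the four edges $\{a_i,b_j\}$ with $i,j \in \{1,2\}$. Placing $\{a_1,a_2 \mid a \in A\}$ on one side and $\{b_1,b_2 \mid b \in B\}$ on the other exhibits $G_H$ as bipartite, since every edge crosses this partition.

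Now I would apply Proposition~\ref{prop:2step}. The orbits of $f_H$ are the length-two sets $[a]_{f_H}$ for $a \in A$ and $[b]_{f_H}$ for $b \in B$. Three observations capture the structure: (i) self-pairing an orbit $[a]_{f_H}$ forces the non-edge $\{a_1,a_2\}$ into $S$ or $T$, which is impossible; (ii) pairing two orbits of the same type ($A$--$A$ or $B$--$B$) requires edges inside one side of the bipartition of $G_H$, also impossible; (iii) pairing $[a]_{f_H}$ with $[b]_{f_H}$ satisfies the proposition's conditions iff all four cross-edges $\{a_i,b_j\}$ are present in $G_H$, i.e.\ iff $\{a,b\} \in E_H$. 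Consequently, the valid orbit pair-partitions are exactly the perfect matchings of $H$.

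It remains to count the $(S,T)$ pairs arising from each perfect matching $M$. For each $\{a,b\} \in M$, the condition of Proposition~\ref{prop:2step} admits exactly two local configurations of the $K_{2,2}$ gadget (either $S \supseteq \{\{a_1,b_1\},\{a_2,b_2\}\}$ with $T \supseteq \{\{a_1,b_2\},\{a_2,b_1\}\}$, or the swapped version), and these choices are independent across the $n$ pairs of $M$. The assignment (matching, local type choice) $\mapsto (S,T)$ is injective because $M$ can be recovered from $(S,T)$ as the set of gadgets all four of whose edges appear in $S \cup T$. Hence the number of 2-step solutions in $\PSOL(G_H,f_H)$ equals $2^n \cdot \mrm{pm}(H)$, where $\mrm{pm}(H)$ is the number of perfect matchings of $H$, and a single division recovers $\mrm{pm}(H)$ in polynomial time. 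The main obstacle I anticipate is the multiplicity bookkeeping in this last step: one must verify that the two local configurations exhaust all representative choices $(u,v) \in [a]_{f_H} \times [b]_{f_H}$ allowed by Proposition~\ref{prop:2step} and that they really yield distinct global $(S,T)$. This is a finite case analysis made tractable by the fact that every orbit has length two, so only four representative choices per pair need to be inspected.
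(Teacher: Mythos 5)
Your construction and argument are essentially identical to the paper's: the same doubling gadget $G_H$ with $f$ swapping the two copies of each vertex, the same reduction from counting perfect matchings in a bipartite graph, and the same correspondence between $2$-step solutions and perfect matchings of $H$ together with a binary orientation choice per matched edge. Your count $2^n\cdot\mrm{pm}(H)$ is in fact more careful than the paper's stated ``$2^m$ $2$-step solutions if $H$ has $m$ perfect matchings,'' which appears to be a slip (it would give $1$ rather than $0$ when $H$ has no perfect matching), so no changes are needed.
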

\begin{proof}
We show the theorem by a reduction from the problem of calculating the number of perfect matchings in a bipartite graph $H$, which is known to be \#{}P-complete~\cite{Valiant79}.
For a graph $H=(V,E)$, let the vertex set of $G$ be $V'=\{\, u_i \mid u \in V \text{ and } i \in \{1,2\}\,\}$ and the edge set $E'=\{\, \{u_i,v_j\} \mid \{u,v\} \in E \text{ and } i,j \in \{1,2\} \,\}$. The initial configuration is defined by $f(u_1)=u_2$ and $f(u_2)=u_1$ for all $u \in V$.
If $\lrangle{S,T} \in \PSOL(G,f)$, then for each $u \in V$ there is $v \in V$ such that $\{u,v\} \in E$ and
 either $\{u_1,v_1\},\{u_2,v_2\} \in S$ and $\{u_1,v_2\},\{u_2,v_1\} \in T$ or $\{u_1,v_2\},\{u_2,v_1\} \in S$ and $\{u_1,v_1\},\{u_2,v_2\} \in T$.
Then it is easy to see that $\PSOL(G,f)$ has $2^m$ 2-step solutions if $H$ has $m$ perfect matchings.
Note that if $H$ is bipartite, then so is $G$.
\end{proof}

\subsubsection{Approximation Algorithm for the Permutation Routing via Matching on Paths}
We present an approximation algorithm for the Permutation Routing via Matching on paths which outputs a parallel swap sequence whose length is no more than $\POPT(P_n,f)+1$, where
 $P_n = ( \{\,1,\dots,n\},\{\,\{i,i+1\} \mid 1 \le i < n \,\})$ and $f$ is a configuration on $P_n$.
We say that a swap $\{i,i+1\}$ is \emph{reasonable w.r.t.\ $f$} if $f(i) > f(i+1)$,
and moreover, a parallel swap sequence $\vec{S}=\lrangle{S_1,\dots,S_m}$ is \emph{reasonable w.r.t.\ $f$} if every $e \in S_j$ is reasonable w.r.t.\ $f\lrangle{S_1,\dots,S_{j-1}}$ for all $j \in \{1,\dots,m\}$.
The parallel swap sequence $\lrangle{S_1,\dots,S_m}$ output by Algorithm~\ref{alg:path} is reasonable and 
 satisfies the condition which we call the \emph{odd-even condition}:
 for each odd number $j$, all swaps in $S_j$ are of the form $\{2i-1,2i\}$ for some $i \ge 1$,
and for each even number $j$, all swaps in $S_j$ are of the form $\{2i,2i+1\}$ for some $i \ge 1$.
Our algorithm computes a reasonable odd-even parallel swap sequence in a greedy manner. 
\begin{lemma}\label{lem:maximize}
 Suppose that $g = fS$ for a reasonable parallel swap $S$ w.r.t.\ $f$.
 For any $\lrangle{S_1,\dots,S_m} \in \PSOL(P_n,f)$, there is $\lrangle{S_1',\dots,S_m'} \in \PSOL(P_n,g)$ such that $S_j' \subseteq S_j$ for all $j \in \{1,\dots,m\}$.
\end{lemma}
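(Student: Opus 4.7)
The plan is to reduce to the case $|S| = 1$ by induction on $|S|$, and prove that single-edge case directly using the linear structure of $P_n$. For the reduction, pick $e \in S$ and let $S_0 = S \setminus \{e\}$. Since $S$ is a matching, $S_0$ is a matching whose edges do not touch the endpoints of $e$, so $(fS_0)(i) = f(i)$ and $(fS_0)(i+1) = f(i+1)$, whence $\{e\}$ remains reasonable w.r.t.\ $fS_0$. Applying the inductive hypothesis to $S_0$ yields $\vec{S}^{(0)} \in \PSOL(P_n, fS_0)$ with $S_j^{(0)} \subseteq S_j$; then applying the single-edge case to $\{e\}$ with configuration $fS_0$ and solution $\vec{S}^{(0)}$ yields $\vec{S}^{(1)} \in \PSOL(P_n, (fS_0)\{e\}) = \PSOL(P_n, fS)$ with $S_j^{(1)} \subseteq S_j^{(0)} \subseteq S_j$, as required.

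For $|S| = 1$, write $S = \{\{i, i+1\}\}$, $a = f(i+1)$, and $b = f(i)$, so $a < b$ by reasonableness. The crucial property of $P_n$ is that two tokens can reverse their left-to-right order only when they occupy adjacent vertices and that very edge belongs to the current parallel swap. Since tokens $a$ and $b$ start under $f$ with $a$ right of $b$ but must end in their natural order at positions $a < b$, there is a well-defined first step $j^*$ at which they swap, necessarily across an edge $e^* = \{p, p+1\} \in S_{j^*}$ with $a$ at $p+1$ and $b$ at $p$ just before step $j^*$. Define $\vec{S}'$ by $S_{j^*}' = S_{j^*} \setminus \{e^*\}$ and $S_j' = S_j$ for $j \neq j^*$. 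I would then prove by induction on $j$ the invariant that for $j < j^*$ the configuration $g\lrangle{S_1',\dots,S_j'}$ agrees with $f\lrangle{S_1,\dots,S_j}$ except that the positions of $a$ and $b$ are transposed, and that $g\lrangle{S_1',\dots,S_{j^*}'} = f\lrangle{S_1,\dots,S_{j^*}}$; thereafter the two executions evolve identically, giving $g\vec{S}' = f\vec{S} = \mathrm{id}$.

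The main obstacle will be verifying the invariant for $j < j^*$. On the one hand, minimality of $j^*$ rules out that any edge of $S_j$ joins the current positions of $a$ and $b$, so neither token is swapped with the other at step $j$; on the other hand, at each endpoint incident to the position of $a$ or $b$ the matched vertex carries the same third token in both executions (since $a$ and $b$ are the only tokens whose positions differ), so the two executions effectively exchange the roles of $a$ and $b$ with respect to the common moves in $S_j$, preserving the invariant. At step $j^*$, deleting $e^*$ leaves $a$ at $p$ and $b$ at $p+1$ in the $g$-execution, precisely where the $f$-execution puts them via the deleted swap, so the configurations merge. The argument relies crucially on the path structure: on general graphs two tokens may reverse order without ever being swapped across a single edge, so the well-defined first-swap step $j^*$ would cease to exist.
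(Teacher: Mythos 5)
Your proof takes essentially the same route as the paper: reduce to $|S|=1$, locate the first step at which the tokens $f(i)$ and $f(i+1)$ are swapped across a single edge, delete exactly that edge from the corresponding $S_k$, and observe that the two executions differ only by transposing those two tokens before step $k$ and coincide afterwards. You spell out the induction on $|S|$ and the position-invariant argument that the paper leaves implicit, but the underlying idea is identical.
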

\begin{proof}
It is enough to show the lemma for the case where $|S|=1$.
Suppose that $S = \{\{i,i+1\}\}$ with $f(i) > f(i+1)$.
By $\lrangle{S_1,\dots,S_m} \in \PSOL(P_n,f)$, at some step we must exchange the positions of the tokens $f(i)$ and $f(i+1)$ in $\lrangle{S_1,\dots,S_m}$.
Let $k$ be the least number such that $\{f_k^{-1}(f(i)),f_k^{-1}(f(i+1))\} \in S_k$ where $f_k = f\lrangle{S_1,\dots,S_k}$.
Define $S_k' = S_k -\{ \{f_k^{-1}(f(i)),\linebreak[0]f_k^{-1}(f(i+1))\}\}$ and $S_j' = S_j$ for all the other $j \in \{1,\dots,m\}-\{k\}$.
Then for any $j \in \{1,\dots,m\}$, $f_j$ and $g_j = g\lrangle{S_1',\dots,S_j'}$ are identical except when $j < k$ the positions of tokens $f(i)$ and $f(i+1)$ are switched.
\end{proof}

\begin{algorithm}[t]
\begin{algorithmic}
	\STATE \textbf{Input}: A configuration $f_0$ on $P_n$
	\STATE \textbf{Output}: A solution $\vec{S} \in \PSOL(P_n,f_0)$
	\STATE Let $j=0$;
	\WHILE{$f_j$ is not identity}
		\STATE Let $j = j+1$, $S_j = \{\,\{i,i+1\}\mid  f_{j-1}(i) > f_{j-1}(i+1) \text{ and $i+j$ is even} \,\}$ and $f_j=f_{j-1} S_j$;
	\ENDWHILE
	\RETURN $\lrangle{S_1,\dots,S_j}$; 
   \end{algorithmic}
   \caption{Approximation algorithm for Permutation Routing via Matching on paths \label{alg:path}}
\end{algorithm}
Let us denote the output of Algorithm~\ref{alg:path} by $\PathAlg(P_n,f_0)$.
Clearly $\PathAlg(P_n,f_0) \in \PSOL(P_n,f_0)$.
\begin{corollary}\label{cor:greedyoe}
For any odd-even solution $\vec{S} \in \PSOL(P_n,f_0)$, we have $|\PathAlg(P_n,f_0)| \le |\vec{S}|$.
\end{corollary}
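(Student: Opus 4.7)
The plan is to prove a strengthening by induction on the step index: after $j$ steps, the configuration $f_j$ reached by Algorithm~\ref{alg:path} dominates, in a suitable partial order whose unique maximum is the identity, the configuration $g_j$ reached by the given odd-even solution $\vec{S}=\lrangle{S_1,\dots,S_m}$. For each threshold $k\in\{1,\dots,n-1\}$ I will project a configuration $f$ to the 0-1 vector $f^{(k)}\in\{0,1\}^n$ with $f^{(k)}(i)=1$ iff $f(i)>k$, and for 0-1 vectors $\vec{b},\vec{b}'$ set $\vec{b}\succeq\vec{b}'$ when $N_i(\vec{b}):=\sum_{j\ge i}\vec{b}_j \ge N_i(\vec{b}')$ for every $i$; then declare $f\succeq g$ iff $f^{(k)}\succeq g^{(k)}$ for every $k$. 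The structural fact that makes this work is that any parallel swap commutes with the projection $(\cdot)^{(k)}$, and the greedy reasonable parity-$p$ matching chosen by the algorithm on $f$ projects to the greedy odd-even step on $f^{(k)}$: it swaps exactly those parity-$p$ adjacent pairs whose values in $f^{(k)}$ are $(1,0)$, because $f(i)>f(i+1)$ combined with $f(i)>k\ge f(i+1)$ is equivalent to the pair being $(1,0)$.

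The inductive step, after this commutation, reduces to a lemma on 0-1 vectors: if $\vec{b}\succeq\vec{b}'$, $T$ is the greedy parity-$p$ matching on $\vec{b}$, and $S$ is any parity-$p$ matching, then $\vec{b}T\succeq\vec{b}'S$. Since a swap can alter $N_i$ only when it is located at positions $\{i-1,i\}$, I plan an edge-by-edge comparison of the two increments $\delta_T(i)\in\{0,1\}$ and $\delta_S(i)\in\{-1,0,1\}$. The only nontrivial case is $\delta_T(i)=0$ and $\delta_S(i)=1$, which forces $(\vec{b}'_{i-1},\vec{b}'_i)=(1,0)$ and $(\vec{b}_{i-1},\vec{b}_i)\in\{(0,0),(0,1),(1,1)\}$. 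A two-case split on whether $\vec{b}_{i-1}=0$ or $(\vec{b}_{i-1},\vec{b}_i)=(1,1)$ then extracts the missing unit from $N_{i-1}(\vec{b})\ge N_{i-1}(\vec{b}')$ or from $N_{i+1}(\vec{b})\ge N_{i+1}(\vec{b}')$ respectively. This edge-case bookkeeping is where I expect the real work to lie, though each sub-case is a one-line arithmetic check.

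To conclude, the projection $g_m^{(k)}=(0,\dots,0,1,\dots,1)$ of the identity is the $\succeq$-maximum among 0-1 vectors with $n-k$ ones, so the invariant $f_m^{(k)}\succeq g_m^{(k)}$ forces equality of these two vectors for every $k$, whence $f_m$ coincides with the identity. Therefore the algorithm's while-loop must have terminated by step $m$, yielding $|\PathAlg(P_n,f_0)|\le m = |\vec{S}|$.
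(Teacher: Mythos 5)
Your proof is correct, and it takes a genuinely different route from the paper's. The paper proves the corollary by an exchange argument: assuming without loss of generality that $\vec{S}$ is reasonable, it locates the first step $j$ where $S_j$ differs from the algorithm's $T_j$, notes that greediness forces $S_j\subsetneq T_j$, and then invokes Lemma~\ref{lem:maximize} to replace $S_j$ by $T_j$ without lengthening the solution or destroying the odd-even property; iterating shows $\lrangle{T_1,\dots,T_m}$ is itself a solution. Your argument instead builds a majorization-style potential: project each configuration onto $n-1$ threshold 0-1 vectors, order configurations by the suffix-sum dominance of all projections, observe that parallel swaps commute with the projections (this is the 0-1 principle familiar from sorting networks), and that the algorithm's greedy matching projects to the 0-1 greedy matching. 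The key 0-1 lemma---greedy dominates any other parity-$p$ step, preserved under $\succeq$---goes through via precisely the two edge cases you identify, and since the identity is the unique $\succeq$-maximum, $f_m$ must also be the identity after $m=|\vec{S}|$ steps. What your approach buys is that it is self-contained and never needs Lemma~\ref{lem:maximize} or a reasonableness reduction; it applies directly to an arbitrary odd-even solution. What the paper's approach buys is economy---Lemma~\ref{lem:maximize} is already available and will be reused in Theorem~\ref{thm:PTSPPath}, so the exchange argument is shorter given the existing infrastructure. Both are valid; yours is arguably the more conceptually illuminating explanation of why greedy is optimal among odd-even schedules.
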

\begin{proof}
It is obvious that $\PathAlg(P_n,f_0) \in \PSOL(P_n,f_0)$ and it is odd-even.
Suppose that $\vec{S}=\lrangle{S_1,\dots,S_m} \neq \PathAlg(P_n,f_0)$.
Without loss of generality we may assume that $\vec{S}$ is reasonable.
Let $\vec{T}=\lrangle{T_1,\dots,T_k}=\PathAlg(P_n,f_0)$.
If $m \ge k$, we have done.
Suppose $m < k$.
Since the proper prefix $\lrangle{T_1,\dots,T_m}$ of $\vec{T}$ is not a solution, 
there must exist $j \le m$ such that $S_1=T_1,\dots,S_{j-1}=T_{j-1}$ and $S_j \neq T_j$.
Since $\vec{S}$ is reasonable and Algorithm~\ref{alg:path} is greedy, $S_j \subsetneq T_j$ holds.
Applying Lemma~\ref{lem:maximize} to $f_j = f_0\lrangle{S_1,\dots,S_j}$ and $S=T_j-S_j$,
we obtain $S_{j+1}' \subseteq S_{j+1},\dots,S_{m}' \subseteq S_m$ such that $\lrangle{S_1,\dots,S_{j-1},T_j,S_{j+1}',\dots,S_{m}'} \in \PSOL(P_n,f_0)$.
By definition the new solution $\vecp{S}=\lrangle{T_1,\dots,T_{j-1},T_j,S_{j+1}',\dots,S_{m}'}$ is odd-even.
Hence one can apply the same argument to $\vecp{S}$ and finally get $\lrangle{T_1,\dots,T_m} \in \PSOL(P_n,f_0)$.
\end{proof}

\begin{theorem}\label{thm:PTSPPath}
$|\PathAlg(P_n,f_0)| \le \POPT(P_n,f_0)+1$.
\end{theorem}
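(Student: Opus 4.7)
By Corollary~\ref{cor:greedyoe}, it suffices to exhibit \emph{some} odd-even solution of length at most $\POPT(P_n, f_0) + 1$; $\PathAlg$ is no worse than any such witness. The plan is therefore to start from an optimal solution and produce a length $m+1$ odd-even solution from it, where $m = \POPT(P_n, f_0)$.

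Let $\vec{S} = \lrangle{S_1,\dots,S_m}$ be an optimal solution for $(P_n,f_0)$, which by Lemma~\ref{lem:maximize} I may assume is reasonable. For each $j$, split $S_j = A_j \sqcup B_j$ where $A_j$ collects the odd-position swaps $\{2i-1,2i\}$ occurring in $S_j$ and $B_j$ collects the even-position swaps $\{2i,2i+1\}$. Since $S_j$ itself is a matching, $A_j$ and $B_j$ are vertex-disjoint sub-matchings and therefore commute as operators, so $A_jB_j = S_j = B_jA_j$.

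The candidate odd-even solution is the length $m+1$ interleaving
\[
\vec{T} \;=\; \lrangle{\,A_1,\; B_1\cup A_2,\; B_2\cup A_3,\; \dots,\; B_{m-1}\cup A_m,\; B_m\,},
\]
whose $j$-th entry fuses the ``even tail'' of $S_{j-1}$ with the ``odd head'' of $S_j$. The parities alternate as required by the odd-even condition. Telescoping using $A_jB_j = S_j$ shows $f_0\vec{T} = f_0\vec{S}$, which is the identity, so $\vec{T}$ is a solution whenever each of its entries is a valid parallel swap.

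The crux of the proof, and the main obstacle, is that $B_{j-1}\cup A_j$ need not be a matching: the even edge $\{2i,2i+1\}\in B_{j-1}$ shares a vertex with $\{2i-1,2i\}$ and with $\{2i+1,2i+2\}$, either of which could lie in $A_j$. My plan is to rule out such conflicts by a local exchange argument: among optimal reasonable solutions, choose one that minimises the number of conflicting positions $(j,i)$. If a conflict still exists at $(j,i)$, I would show by tracking the three tokens sitting on vertices $2i-1,2i,2i+1$ across steps $j-1$ and $j$ that one can replace the two offending swaps by a locally equivalent pair (possibly absorbing an adjacent swap) to obtain another reasonable optimal solution with strictly fewer conflicts, contradicting minimality; the conflict-free case yields the required $\vec{T}$. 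The delicate part is verifying that the replacement preserves reasonableness at every subsequent step; if the direct exchange fails, the fallback is an induction on $m$ that runs $\PathAlg$ and its even-starting variant $\PathAlg_E$ in tandem using the recursion $\PathAlg_O(f) = \lrangle{T_1}\cdot\PathAlg_E(fT_1)$, where the hard base case is exactly $T_1 = \emptyset$.
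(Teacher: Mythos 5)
Your reduction to Corollary~\ref{cor:greedyoe} is exactly the right opening move, and matches the paper. The gap is in the interleaving $\vec{T}=\lrangle{A_1,\,B_1\cup A_2,\,B_2\cup A_3,\,\dots,\,B_{m-1}\cup A_m,\,B_m}$, which pairs the even-position tail $B_{j-1}$ of step $j-1$ with the odd-position head $A_j$ of step $j$. This has two defects, the first of which is fatal on its own: $B_{j-1}\cup A_j$ mixes both parities, so $\vec{T}$ is \emph{not} odd-even and Corollary~\ref{cor:greedyoe} cannot be applied to it, regardless of whether the steps are matchings. Second, as you note, $B_{j-1}$ and $A_j$ can share a vertex ($\{2i,2i+1\}\in B_{j-1}$ with $\{2i-1,2i\}$ or $\{2i+1,2i+2\}\in A_j$), so well-definedness fails; the local-exchange repair you sketch is not carried out and would still not cure the parity problem even if it went through.

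The correct decomposition groups \emph{same}-parity blocks from consecutive steps. With $S_0 = S_{m+1} = \emptyset$, define for $j=1,\dots,m+1$
\[
S'_j = \{\, \{i,i+1\} \in S_j \cup S_{j-1} \mid i+j \text{ is even}\,\},
\]
which in your notation is $\lrangle{A_1,\;B_1\cup B_2,\;A_2\cup A_3,\;B_3\cup B_4,\;\dots}$: a swap of the right parity stays put, a swap of the wrong parity is delayed by exactly one step. Each $S'_j$ then contains edges of a single parity; two distinct same-parity edges are automatically vertex-disjoint, so $S'_j$ is a matching with no exchange argument at all (the only possible clash, a repeated edge in $S_{j-1}\cap S_j$, vanishes under the assumption $S_{j-1}\cap S_j=\emptyset$, which any reasonable solution satisfies). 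The sequence $\lrangle{S'_1,\dots,S'_{m+1}}$ is odd-even by construction, and a short induction on $j$ gives
\[
f\lrangle{S_1',\dots,S_j'}(i) = \begin{cases} f\lrangle{S_1,\dots,S_{j-1}}(i) & \text{if $\{i,i+1\} \in S_j$ and $i+j$ is odd,} \\ f\lrangle{S_1,\dots,S_{j}}(i) & \text{otherwise,}\end{cases}
\]
hence $f\vec{S}=f\lrangle{S'_1,\dots,S'_{m+1}}$. Corollary~\ref{cor:greedyoe} then finishes the proof, with no need for your local-exchange or tandem-induction fallback.
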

\begin{proof}

By Corollary~\ref{cor:greedyoe}, it is enough to show that every swap sequence $\vec{S}=\lrangle{S_1,\dots,S_m}$ admits an equivalent odd-even sequence $\vecp{S}$ such that $|\vecp{S}| \le |\vec{S}|+1$.
Without loss of generality we assume that $S_j \cap S_{j+1} = \emptyset$ for any $j$ (in fact, any reasonable parallel swap sequence meets this condition).
For a parallel swap sequence $\vec{S}=\lrangle{S_1,\dots,S_m}$, define $\OES(\vec{S}) = \lrangle{S_1',\dots,S'_{m+1}}$ by delaying swaps which do not meet the odd-even condition, that is,
\[
	S'_j = \{\,  \{i,i+1\} \in S_j \cup S_{j-1} \mid i+j \text{ is even}\,\}
\]
for $j=1,\dots,m+1$ assuming that $S_0=S_{m+1} = \emptyset$.
By the parity restriction, each $S_j'$ is a parallel swap.
It is easy to show by induction on $j$ that
\[
	f\lrangle{S_1',\dots,S_j'}(i) =
	\begin{cases}
		f\lrangle{S_1,\dots,S_{j-1}}(i)	& \text{ if $\{i,i+1\} \in S_j$ and $i+j$ is odd,}
\\		f\lrangle{S_1,\dots,S_{j}}(i)	& \text{ otherwise,}
	\end{cases}
\]
for each $j \in \{1,\dots,m+1\}$, which implies that $f \vec{S} = f \OES(\vec{S})$.
Therefore, for an optimal reasonable solution $\vec{S}_0$, we have
$	|\vec{S}_0| + 1 = |\OES(\vec{S}_0)| \ge |\PathAlg(P_n,f_0)|$.
\end{proof}
\begin{example}
Let us consider the initial configuration $f_0 : \lrangle{3,2,5,1,7,6,4}$ on $P_7$, where we express a configuration $f$ as a sequence $\lrangle{f(1),\dots,f(7)}$.
According to the output by Algorithm~\ref{alg:path}, the configuration changes as follows:
\begin{align*}
f_0: &\ \lrangle{\ul{3,2},\ul{5,1},\ul{7,6},4}\,,
\\
f_1: &\ \lrangle{2,\ul{3,1},5,6,\ul{7,4}}\,,
\\
f_2: &\ \lrangle{\ul{2,1},3,5,\ul{6,4},7}\,,
\\
f_3: &\ \lrangle{1,2,3,\ul{5,4},6,7}\,,
\\
f_4:  &\ \lrangle{1,2,3,4,5,6,7}\,,
\end{align*}
than which an optimal swapping sequence is shorter by one:
\begin{align*}
f_0: &\ \lrangle{\ul{3,2},\ul{5,1},7,\ul{6,4}}\,,
\\
f_1': &\ \lrangle{2,\ul{3,1},5,\ul{7,4},6}\,,
\\
f_2': &\ \lrangle{\ul{2,1},3,\ul{5,4},\ul{7,6}}\,,
\\
f_3': &\ \lrangle{1,2,3,4,5,6,7}\,.
\end{align*}
\end{example}

\section{Coloring Routing via Matching}
\emph{Colored Token Swapping} is a generalization of Token Swapping, where each token is colored and different tokens may have the same color.
By swapping tokens on adjacent vertices, the goal coloring configuration should be realized.
More formally, a \emph{coloring} is a map $f$ from $V$ to $\mbb{N}$. 
The definition of a swap application to a configuration can be applied to colorings with no change.
We say that two colorings $f$ and $g$ are \emph{consistent} if $|f^{-1}(i)|=|g^{-1}(i)|$ for all $i \in \mbb{N}$.
Since the problem is a generalization of Token Swapping, obviously it is NP-hard.
Yamanaka et al.~\cite{YamanakaHKOSUU15} have investigated subcases of Colored Token Swapping called $c$-Colored Token Swapping where the codomain of colorings is restricted to $\{1,\dots,c\}$.
Along this line, we discuss the colored version of Permutation Routing via Matching in this section.

\problemdef{$c$-Coloring Routing via Matching}%
{A graph $G$, two consistent $c$-colorings $f$ and $g$, and a number $k \in \mbb{N}$.}%
{Is there $\vec{S}$ with $|\vec{S}| \le k$ such that $f\vec{S}=g$?}

Define $\POPT(G,f,g) = \min\{\, |\vec{S}| \mid f\vec{S}=g\,\}$ for two consistent colorings $f$ and $g$.
Since $\POPT(G,f,g)$ can be bounded by $\POPT(G,h)$ for some configuration $h$, the $c$-Coloring Routing via Matching belongs to NP.

\subsection{Hardness of the $c$-Coloring Routing via Matching}

Yamanaka et al.\ have shown that the $3$-colored Token Swapping is NP-hard by a reduction from the 3DM.
It is not hard to see that their reduction works to prove the NP-hardness of the $3$-Coloring Routing via Matching.
We then obtain the following theorem as a corollary to their discussion.
\begin{theorem}\label{thm:PCTSPNPhard}
To decide whether\/ $\POPT(G,f,g) \le 3$ is NP-hard even if\/ $G$ is restricted to be a planar bipartite graph with maximum vertex degree $3$ and $f$ and $g$ are $3$-colorings.
\end{theorem}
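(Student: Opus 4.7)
The plan is to adapt the reduction from 3DM to $3$-Colored Token Swapping given by Yamanaka et al.~\cite{YamanakaHKOSUU15} to the parallel (matching) setting, with the parameter value dropping to $k=3$. For a 3DM instance $(A,T)$ with $|A_1|=|A_2|=|A_3|=n$, their construction produces a planar bipartite graph $G$ of maximum vertex degree $3$ and consistent $3$-colorings $f,g$ in which each miscolored token sits at graph-distance exactly $3$ from the nearest vertex of the right color in $g$. The graph contains an \emph{element gadget} for every $a_{k,i}\in A$ and a \emph{triple gadget} for every $t_j\in T$, wired so that the miscolored tokens of an element gadget can only reach correctly colored target vertices by passing through a triple gadget for some $t_j$ that contains $a_{k,i}$. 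Planarity, bipartiteness and the degree-$3$ bound are already built into this construction, so only the cost bound needs to be reanalysed for parallel swaps.

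First I would note that the distance-$3$ observation immediately yields the lower bound $\POPT(G,f,g)\ge 3$. For the upper bound, I would assume that $(A,T)$ admits a solution $M\subseteq T$ and exhibit an explicit matching solution of length exactly $3$: for each $t_j\in M$, the three miscolored element tokens for the members of $t_j$ are routed through the $t_j$-gadget in three synchronised rounds, each round being a matching on the local subgraph. Because the triple gadgets corresponding to distinct $t_j,t_{j'}\in M$ are vertex-disjoint, the union over $M$ of the $i$-th round ($i\in\{1,2,3\}$) is still a global matching, producing the desired sequence $\lrangle{S_1,S_2,S_3}$ with $f\lrangle{S_1,S_2,S_3}=g$.

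For the converse, I would argue that if $\POPT(G,f,g)\le 3$ then, by the distance-$3$ lower bound, every miscolored token must take a shortest $3$-path to an appropriately colored goal vertex, which (by the wiring) forces the token through a uniquely determined triple gadget for some $t_j$ containing the corresponding $a_{k,i}$. A pigeonhole/packing argument in the style of Yamanaka et al.\ then shows that the set of triple gadgets actually used covers each element of $A$ exactly once, yielding $M\subseteq T$ that is a 3DM solution. The main obstacle I anticipate is precisely this converse step: in the parallel colored setting, same-color tokens can be rearranged in ways that have no analogue in the single-swap analysis of Yamanaka et al., so I would need to argue that any such ``shuffle'' among same-color tokens can only convert one valid commitment pattern of triple gadgets into another without decreasing the total number of committed triples, so that the 3DM-packing conclusion remains valid. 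Together with NP-membership of $c$-Coloring Routing via Matching noted just above the theorem, this will also yield NP-completeness.
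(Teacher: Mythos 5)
The paper gives essentially no proof of this theorem: it simply remarks that the Yamanaka et al.\ reduction from 3DM to $3$-Colored Token Swapping ``works'' for the matching version and then states the theorem as a corollary, without re-examining any of the verification conditions. Your outline is therefore the intended argument, and you have correctly identified the key ingredients that the paper leaves tacit: (i) the Yamanaka construction already delivers planarity, bipartiteness, and degree $3$ unchanged, (ii) the distance-$3$ lower bound forces $\POPT\ge 3$, (iii) a $3$DM solution yields $3$ rounds of matchings because distinct triple gadgets are vertex-disjoint, and (iv) a $3$-step solution pins each miscolored token to a shortest path through a unique gadget, from which a $3$DM solution is extracted.

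The concern you flag at the end is the substantive one, and the paper does not address it either. In the colored \emph{parallel} model, two tokens of the same color can be routed through each other's natural gadgets, or a $3$-step solution can use non-shortest paths for some correctly-colored tokens to create room; step (iv) is therefore not automatic from Yamanaka et al.'s sequential analysis. For a complete proof one would have to show, as is done explicitly for the $2$-color case in the proof of Theorem~\ref{thm:2PCTSP}, that for each miscolored token the \emph{set} of vertices at distance $\le 3$ bearing the right goal color is a singleton, which pins down destinations and rules out cross-gadget shuffles of same-colored tokens. Your proposal raises this point but only gestures at resolving it; since the paper itself supplies no resolution, this remains a genuine gap in both the blind attempt and the paper's ``corollary'' claim, rather than a place where your approach diverges from the authors'.
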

Yamanaka et al.\ have shown that $2$-Colored Token Swapping is solvable in polynomial time on the other hand.
In contrast, we prove that the $2$-Coloring Routing via Matching is still NP-hard.
\begin{theorem}\label{thm:2PCTSP}
To decide whether\/ $\POPT(G,f,g) \le 3$ is NP-hard 
for a bipartite graph $G$ with maximum vertex degree $4$ and $2$-colorings $f$ and $g$.
\end{theorem}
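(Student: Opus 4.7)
The plan is a polynomial-time reduction from \SAT{}, adapting the gadget from the proof of Theorem~\ref{thm:3PTSPNPhard}. Given an \SAT{} instance $F=\{C_1,\ldots,C_n\}$ partitioned into $F_1,F_2,F_3$, I would keep the bipartite max-degree-$4$ graph $G_F$ essentially as in that proof (with, if necessary, a few short padding paths inserted in order to pin certain distances at exactly $3$) and introduce two $2$-colorings $f$ and $g$ with $f(u_i)=g(u_i')=f(v_j)=g(v_j')=1$ and $g(u_i)=f(u_i')=g(v_j)=f(v_j')=2$, so that $u_i,v_j$ become color-$2$ sinks and $u_i',v_j'$ become color-$1$ sinks. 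The intermediate vertices $u_{i,k}$ would have $f(u_{i,k})=g(u_{i,k})$ equal to a carefully chosen fixed color, selected so that the only admissible color-$2$ source within distance~$3$ of $u_i$ in the ``useful'' subgraph is $u_i'$ reachable via the two canonical length-$3$ variable paths, and the only admissible color-$2$ source within distance~$3$ of $v_j$ is $v_j'$ reachable via $(v_j,u_{i,k},v_j')$ for some literal $\ell_i\in C_j$.

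For the forward direction, given a satisfying assignment $\phi$ and a literal witness $\psi\colon F\to X$ as in the proof of Lemma~\ref{lem:3PTSPNPhard}, I would reuse verbatim the parallel swap sequence $\lrangle{S_1,S_2,S_1}$ defined there, now interpreted as color swaps. A direct local check shows that each mismatched endpoint $u_i,u_i',v_j,v_j'$ ends with its goal color, while every intermediate vertex on an activated path is toggled an even number of times and so returns to its fixed color.

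For the converse, given any $3$-step solution $\lrangle{S_1,S_2,S_3}$, I would argue that each $u_i$ can only receive its color-$2$ token via one of the two canonical variable paths (since by construction no other color-$2$ source is within distance~$3$ of $u_i$ in the useful subgraph), and define $\phi(x_i)$ by whether $\{u_{i,1},u_{i,2}\}$ or $\{u_{i,3},u_{i,4}\}$ belongs to $S_2$. Each clause sink $v_j$ (or $v_j'$) must similarly be served via some $u_{i,k}$ corresponding to a literal $\ell_i\in C_j$, and the fact that $S_2$ is a matching forces the clause-edges and variable-edges in $S_2$ to be compatible; this exactly reproduces the matching-conflict argument of Lemma~\ref{lem:3PTSPNPhard} and guarantees that $\phi$ satisfies $C_j$.

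The main obstacle is the significantly greater flexibility of $2$-colorings: since any color-$2$ token is interchangeable with any other, the naive coloring admits shortcut solutions (for instance, $u_i$ drawing color~$2$ from a nearby $v_j'$ via $u_{i,k}$ in only two steps) which break the correspondence with satisfying assignments. Closing this gap is the core of the argument and requires a careful choice of the fixed colors on the $u_{i,k}$ vertices, possibly together with short padding paths, so that on the coloring-induced ``useful'' subgraph every color-$c$ sink is at distance exactly $3$ from the nearest color-$c$ source, while preserving bipartiteness and the maximum degree bound of $4$.
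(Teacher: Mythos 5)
You correctly identify the reduction source (\SAT{}), the gadget structure, and even the precise obstruction --- that $2$-colorings make color-$2$ tokens interchangeable, so the naive coloring has distance-$3$ shortcuts. But you leave the fix as a deferred ``careful choice,'' and neither repair you gesture at (recoloring the $u_{i,k}$, padding paths) is what the paper does, nor is it clear how either would resolve the difficulty.

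The paper modifies the Lemma~\ref{lem:3PTSPNPhard} construction in two ways your sketch omits. First, it adds new vertices $v_{j,i}$ so that each clause path becomes $(v_j,v_{j,i},u_{i,k},v_j')$ of length $3$; with your length-$2$ clause paths one already has $\msf{dist}(u_i,v_j')\le 2$ and $\msf{dist}(v_j,v_j')=2$, so a variable source can be discharged into a clause sink without engaging the variable gadget. Second --- and this is the key trick you are missing --- the clause colorings alternate \emph{by partition class}: $f(v_j)=g(v_j')=2$ for $C_j\in F_1\cup F_3$ but the \emph{reverse} $f(v_j')=g(v_j)=2$ for $C_j\in F_2$. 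This matters because the $F_2$ attachment point $u_{i,2}$ is adjacent to $u_i'$; with a uniform clause orientation the $F_2$ clause sink would admit the variable endpoint $u_i'$ as an alternative color-$2$ partner at distance exactly $3$, so the clause's service is not uniquely forced. Flipping source and sink exactly for $F_2$ puts every color-$2$ source other than $v_j'$ at distance at least $4$ from $v_j$, so the converse direction becomes a purely local distance count: each clause sink's only source within distance $3$ is its own clause partner, which exhausts all the clause vertices and forces every $u_i$ to route into $u_i'$ along one of the two canonical length-$3$ paths --- no reasoning about whether conflicting parallel-swap paths are simultaneously realizable is needed. The asymmetric $F_1\cup F_3$ versus $F_2$ coloring together with the added $v_{j,i}$ vertices are the substance of the proof, and both are absent from your sketch.
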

\begin{proof}
We prove the theorem by a reduction from \SAT{}.
We use a slight modification of the graph used in the proof of Lemma~\ref{lem:3PTSPNPhard} to show the theorem for $p=3$.
We now define
\begin{align*}
	V_F = {}& \{\, u_i,u_i',u_{i,1},u_{i,2},u_{i,3},u_{i,4} \mid 1 \le i \le m\,\}
\\ &	\cup \{\, v_j,v_j' \mid 1 \le j \le n \,\} \cup \{\, v_{j,i} \mid x_i \in C_j \text{ or } \neg x_i \in C_j \,\}
\,.\end{align*}
The edge set $E_{F}$ is the least set that makes $G_{F}$ contain the following paths of length $3$:
\begin{gather*}
	(u_i,u_{i,1},u_{i,2},u_{i}')	
	\text{ and }
	(u_i,u_{i,3},u_{i,4},u_{i}')	 \text{ for each $i \in \{1,\dots,m\}$,}
\\
	(v_j,v_{j,i},u_{i,k},v_{j}')	 \text{ if $x_i \in C_j \in {F}_k$ or $\neg x_i \in C_j \in {F}_k$}
\,.\end{gather*}

The initial and goal colorings $f$ and $g$ are defined to be $f(w)=1$ and $g(w)=1$ for all $w$ but
$	f(u_i) = g(u_i') = 2$ for each $x_i \in X$,
$	f(v_j) = g(v_j') = 2$ for each $C_j \in {F}_1 \cup {F}_3$
and
$ f(v_j') = g(v_j) = 2 $ for each $C_j \in {F}_2$.
Figure~\ref{fig:CPTSP} illustrates the gadget related to a variable $x_1$ that occurs positively in $C_1 \in F_1$, $C_2 \in F_2$ and negatively in $C_3 \in F_3$,
where each vertex $w$ with $f(w)=2$ has a black box on it and one with $g(w)=2$ is represented with a bold rim.
\begin{figure}
\begin{center}
\tikzstyle{vertex}=[circle,draw,inner sep=0pt,minimum size=6mm]
\tikzstyle{gvertex}=[ultra thick, circle,draw,inner sep=0pt,minimum size=6mm]
\tikzstyle{token}=[rectangle,draw,fill=black,inner sep=0pt,minimum size=3.2mm]
\tikzstyle{dummy}=[rectangle,fill=white,inner sep=0pt,minimum size=3.2mm]
\begin{tikzpicture}[scale=0.9]\small
	\node (u1) at (0,0) [vertex] {$u_1$};
	\node (u11) at (0.7,0.9) [vertex] {$u_{1,1}$};
	\node (u12) at (1.7,0.9) [vertex] {$u_{1,2}$};
	\node (u13) at (0.7,-0.9) [vertex] {$u_{1,3}$};
	\node (u14) at (1.7,-0.9) [vertex] {$u_{1,4}$};
	\node (u1-) at (2.4,0) [gvertex] {$u_1'$};
	\draw (u1) -- (u11) -- (u12) -- (u1-);
	\draw (u1-) -- (u14) -- (u13) -- (u1);
	\node (v1) at (-0.6,2.2) [vertex] {$v_1$};
	\node (v1') at (0.7,2.0) [gvertex] {$v_{1}'$};
	\node (v11) at (-0.6,1.1) [vertex] {$v_{1,1}$};
	\draw (v1) -- (v11) -- (u11) -- (v1');
	\node (v2) at (3.0,2.2) [gvertex] {$v_2$};
	\node (v21) at (3.0,1.1) [vertex] {$v_{2,1}$};
	\node (v2') at (1.7,2.0) [vertex] {$v_2'$};
	\draw (v2) -- (v21) -- (u12) -- (v2');
	\node (v3) at (-0.6,-2.2) [vertex] {$v_3$};
	\node (v31) at (-0.6,-1.1) [vertex] {$v_{3,1}$};
	\node (v3') at (0.7,-2.0) [gvertex] {$v_3'$};
	\draw (v3) -- (v31) -- (u13) -- (v3');
	\node at (-0.4,0.0) [token] { };
	\node at (-1.0,2.2) [token] { };
	\node at (-1.0,-2.2) [token] { };
	\node at (1.4,2.3) [token] { };
	\node at (0,-3.5) [dummy] { };
\end{tikzpicture}
\hspace{0.4cm}
\begin{tikzpicture}[scale=0.9]\small
	\node (u1) at (0,0) [vertex] {$u_i$};
	\node (u11') at (0.7,0.9) [vertex] {$ $};
	\node (u11) at (1.7,0.9) [vertex] {$ $};
	\node (u15') at (2.7,0.9) [vertex] {$u_{i,1}$};
	\node (u15) at (3.7,0.9) [gvertex] {$u_{i,2}$};
	\node (u12') at (4.7,0.9) [vertex] {$ $};
	\node (u12) at (5.7,0.9) [vertex] {$ $};
	\node (u13') at (0.7,-0.9) [vertex] {$ $};
	\node (u13) at (1.7,-0.9) [vertex] {$ $};
	\node (u16') at (2.7,-0.9) [vertex] {$u_{i,3}$};
	\node (u16) at (3.7,-0.9) [gvertex] {$u_{i,4}$};
	\node (u14') at (4.7,-0.9) [vertex] {$ $};
	\node (u14) at (5.7,-0.9) [vertex] {$ $};
	\node (u1') at (6.4,0) [gvertex] {$u_i'$};
	\draw (u1) -- (u11') -- (u11) -- (u15') -- node[below] {1} (u15) -- (u12') -- (u12) -- (u1');
	\draw (u1) -- node[above right] {1} (u13') -- node[above] {2} (u13) -- node[above] {3} (u16') -- node[above] {4} node[below] {1} (u16) -- node[below] {2} (u14') -- node[below] {3} (u14) -- node[below right] {4} (u1');
	\node (v1') at (0.7,3.1) [vertex] {$v_1$};
	\node (v11) at (0.7,2.0) [vertex] {$ $};
	\node (v1) at (1.7,2.0) [gvertex] {$v_1'$};
	\draw (v1') -- node[left] {1} (v11) -- node[left] {2} (u11') --  node[below] {3} (u11) --  node[right] {4} (v1);
	\node (v2') at (4.7,2.0) [vertex] {$v_2$};
	\node (v21') at (5.7,2.0) [vertex] {$ $};
	\node (v2) at (5.7,3.1) [gvertex] {$v_2'$};
	\draw (v2') -- node[left] {1} (u12') -- node[below] {2} (u12) -- node[right] {3} (v21') -- node[right] {4} (v2);
	\node (v3') at (0.7,-3.1) [vertex] {$v_3$};
	\node (v31) at (0.7,-2.0) [vertex] {$ $};
	\node (v3) at (1.7,-2.0) [gvertex] {$v_3'$};
	\draw (v3') -- (v31) -- (u13') -- (u13) -- (v3);
	\node at (-0.4,0.0) [token] { };
	\node at (0.5,3.5) [token] { };
	\node at (0.5,-3.5) [token] { };
	\node at (2.5,1.3) [token] { };
	\node at (2.5,-1.3) [token] { };
	\node at (4.5,2.4) [token] { };
\end{tikzpicture}
\end{center}
\caption{\label{fig:CPTSP}
Gadgets used to show Theorems~\ref{thm:2PCTSP} (left) and~\ref{thm:2PCTSP+} (right). We must convey all black boxes to marked vertices via matching.}
\end{figure}
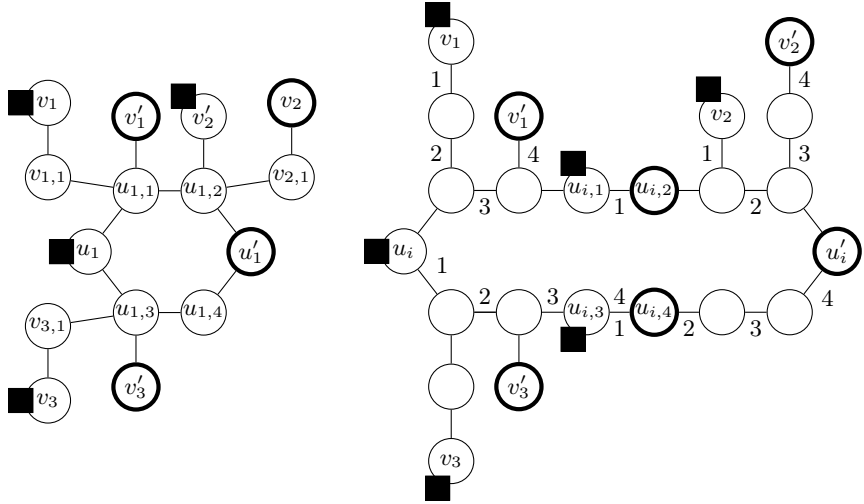

If ${F}$ is satisfiable, then essentially the same parallel swap sequence in the proof of Lemma~\ref{lem:3PTSPNPhard} witnesses $\POPT(G_{F},f,g) = 3$.
It is enough to show that
 if $g = f\vec{S}$ with $|\vec{S}| \le 3$, then the token colored 2 on $u_i$ is moved to $u_i'$ for each $x_i \in X$, the one on $v_j$ is moved to $v_j'$ for $C_j \in {F}_1 \cup {F}_3$, and the one on $v_j'$ is moved to $v_j$ for $C_j \in {F}_2$.
The token on $v_j$ must go to a vertex $w$ such that $g(w)=2$ and $\msf{dist}(v_j,w) \le 3$.
For $C_j \in {F}_1 \cup {F}_3$, the only vertex that meets the condition is $v_j'$.
On the other hand, for each $C_j \in {F}_2$, the vertex $v_j$ requires a token colored $2$ moved from somewhere $w$, i.e., $f(w)=2$ and $\msf{dist}(v_j,w) \le 3$.
The only possibility is the vertex $v_j'$.
Therefore, the token on $u_i$ for $i \in\{1,\dots,m\}$ can be moved to neither $v_j'$ nor $v_j$ for any $j \in \{1,\dots,n\}$.
The unique possible destination of $u_i$ is $u_i'$.
\end{proof}

We can also show the following using the ideas for proving Theorems~\ref{thm:PTSPNPhard2} and~\ref{thm:2PCTSP}.
\begin{theorem}\label{thm:2PCTSP+}
For any fixed $p \ge 4$, to decide whether\/ $\POPT(G,f,g) \le p$ is NP-hard even if\/ $G$ is a bipartite graph with maximum vertex degree $3$ and $f$ and $g$ are $2$-colorings.
\end{theorem}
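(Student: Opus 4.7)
The plan is to combine the two existing reduction techniques already developed in the paper: the 2-coloring gadget from the proof of Theorem~\ref{thm:2PCTSP} and the degree-reduction trick from the proof of Theorem~\ref{thm:PTSPNPhard2}. Starting from the graph $G_F$ built from a \SAT{} instance $F$ in Theorem~\ref{thm:2PCTSP}, I would replace each degree-$4$ vertex $u_{i,k}$ by a short subdivided path, as in Theorem~\ref{thm:PTSPNPhard2}, so that the resulting graph $G_F'$ is bipartite with maximum vertex degree $3$. The subdivision lengthens the variable paths $(u_i,u_{i,1},u_{i,2},u_i')$ and $(u_i,u_{i,3},u_{i,4},u_i')$ and the clause paths $(v_j,v_{j,i},u_{i,k},v_j')$, so the step budget must grow from $p=3$ to $p=4$.

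For the base case $p=4$, I would define $G_F'$ together with $2$-colorings $f,g$ that place color-$2$ tokens not only at the ``essential'' positions $u_i$ and $v_j$ (or $v_j'$, for $C_j\in F_2$) as in Theorem~\ref{thm:2PCTSP}, but also at carefully chosen auxiliary positions on each subdivided path, with the corresponding goal positions one or two edges away on the same path, as depicted by the bold rims and black boxes on the right of Figure~\ref{fig:CPTSP}. The auxiliary color-$2$ tokens are designed so that they can be resolved in parallel with the essential tokens within $4$ steps without interfering with the forcing structure. For \emph{soundness}, given a satisfying $\phi:X\to\{0,1\}$ together with a function $\psi$ picking a satisfied literal per clause, I would construct an explicit $4$-step parallel swap sequence by pushing the essential token at $u_i$ along the ``top'' subdivided path if $\phi(x_i)=0$ and along the ``bottom'' path if $\phi(x_i)=1$, and by routing each $v_j\leftrightarrow v_j'$ pair through the $u_{i,k}$-subpath selected by $\psi(C_j)$; the edge labels $1,2,3,4$ in Figure~\ref{fig:CPTSP} already indicate at which step each swap fires. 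For \emph{completeness}, I would calibrate the subdivision lengths so that the distance from every color-$2$ source to the nearest color-$2$ goal in $G_F'$ equals exactly $4$ and is realized only along the two forcing paths; then any $4$-step solution is forced to choose one path per variable gadget, which defines a truth assignment, and to route each $v_j\to v_j'$ through a subpath consistent with it, yielding satisfaction of every clause, just as in the proof of Theorem~\ref{thm:2PCTSP}.

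For the general case $p = 4 + h$ with $h \ge 1$, I would insert paths of length $h$ at both ends of each variable gadget and similarly lengthen the clause branches, and place color-$2$ tokens on the new vertices whose goals are shifted by one slot, exactly as in the extension argument used in the proof of Theorem~\ref{thm:3PTSPNPhard}. Because these extra color-$2$ tokens can be advanced in lockstep by one edge per step, they neither help nor hinder the essential routing, and the reduction's correctness carries over from the $p=4$ case.

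The main obstacle is the completeness direction for $p=4$. Because tokens of the same color are indistinguishable in the $2$-coloring setting, I must rule out ``cheating'' solutions in which an auxiliary color-$2$ token silently takes the place that would otherwise be occupied by an essential color-$2$ token, potentially via a shortcut through the clause vertices $v_{j,i}$ or through the newly subdivided vertices, bypassing the assignment-encoding structure. The crux of the proof is therefore a distance-and-parity argument on $G_F'$ showing that the shortest distance from every color-$2$ source to every color-$2$ goal is exactly $4$ and that the only shortest paths are the ones that encode variable assignments and clause satisfaction; getting this calibration right is precisely why Figure~\ref{fig:CPTSP} on the right shows each gadget edge annotated with its firing step.
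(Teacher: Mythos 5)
Your high-level plan---start from the $2$-coloring gadget of Theorem~\ref{thm:2PCTSP}, subdivide to bring degrees down to $3$ as in Theorem~\ref{thm:PTSPNPhard2}, and raise the step budget accordingly---is indeed the road the paper takes, and your reading of the right half of Figure~\ref{fig:CPTSP} correctly identifies the rough shape of the gadget. However, the proposal contains an internal contradiction that signals a gap exactly where you say the difficulty lies, namely the completeness direction for $p=4$. You write that the auxiliary color-$2$ tokens have \emph{goal positions one or two edges away on the same path}, yet later base your completeness argument on the claim that \emph{the shortest distance from every color-$2$ source to every color-$2$ goal is exactly $4$}. Both cannot hold. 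In the gadget actually used for $p=4$, the $u_i$--$u_i'$ path has length $7$, and each variable gadget carries \emph{three} color-$2$ sources ($u_i$, $u_{i,1}$, $u_{i,3}$) and \emph{three} color-$2$ goals ($u_{i,2}$, $u_{i,4}$, $u_i'$), with $\msf{dist}(u_{i,1},u_{i,2})=\msf{dist}(u_{i,3},u_{i,4})=1$, $\msf{dist}(u_i,u_{i,2})=\msf{dist}(u_i,u_{i,4})=4$, and $\msf{dist}(u_i,u_i')=8$. The distances are deliberately heterogeneous, so a uniform ``all $4$'' calibration is not available and cannot be.

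Relatedly, calling the tokens at $u_{i,1}$ and $u_{i,3}$ mere padding that ``can be resolved in parallel without interfering with the forcing structure'' misunderstands their role: they \emph{are} the forcing structure. The matching forces the token on $u_i$ to one of $u_{i,2}$ or $u_{i,4}$ (since $u_i'$ is out of range); if it goes to $u_{i,2}$ then $u_{i,1}$ is the sole remaining source within range $4$ of $u_i'$, so it is committed there, and $u_{i,3}$ must fall back to $u_{i,4}$ in a single step. This constrained bipartite matching of three sources to three goals is what encodes $\phi(x_i)$, and it also ``reserves'' one half of the variable path so that the four-edge clause path attached to the free half (at the in-between degree-$3$ vertices) is the only way the clause token can reach its goal in $4$ steps. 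The degree reduction is a by-product of spreading the clause attachments across distinct interior vertices of the long variable path, not of literally splitting each $u_{i,k}$ into a short subdivided path as you describe. Your soundness construction and the $p>4$ lengthening idea are fine; it is the completeness mechanism that needs to be replaced by this matching-with-mixed-distances argument rather than a single-distance calibration.
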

\begin{proof}
The theorem is shown based on the reduction from \SAT{} used for Theorems~\ref{thm:3PTSPNPhard},~\ref{thm:PTSPNPhard2} and~\ref{thm:2PCTSP} again.
The gadget we use for this theorem is shown on the right in Figure~\ref{fig:CPTSP} for $p=4$.
Suppose that $F$ is satisfied by some $\phi$.
If $\phi(x_i) = 0$, then the vertex on $u_i$, $u_{i,1}$ and $u_{i,3}$ will go to $u_{i,2}$, $u_{i}'$ and $u_{1,4}$, respectively.
Otherwise, they will go to $u_{i,4}$, $u_{i,2}$ and $u_{i}'$.
Then each $v_j$ can be moved to $v_j'$ within $4$ steps using an edge on the $u_i$-$u_i'$ path if either $x_i \in C_j \in F_1 \cup F_2$ and $\phi(x_i)=1$ or $\neg x_i \in C_j \in F_3$ and $\phi(x_i)=0$.
Figure~\ref{fig:CPTSP} illustrates the case where $\phi(x_i)=1$, $x_i \in C_1 \in F_1$, $x_i \in C_2 \in F_2$ and $\neg x_i \in C_3 \in F_3$.
Numbers labeling edges show when they are used in a 4-step solution $\lrangle{S_1,\dots,S_4}$.

On the other hand, suppose that $\POPT(G,f,g) \le 4$.
Considering the destination of the token on the vertex $v_j$ for $C_j \in F_1 \cup F_3$, the unique vertex $w$ such that $f(v_j)=g(w)$ and $\msf{dist}(v_j,w) \le 4$ is $w = v_j'$.
Similarly, considering the vertex $v_j$ for $C_j \in F_2$, the unique vertex $w$ such that $g(v_j')=f(w)$ and $\msf{dist}(v_j',w) \le 4$ is $w = v_j$.
Therefore, all the tokens on $v_j$ are moved to the vertex $v_j'$.
This means that the only possible destinations of the token on $u_i$ are $u_{i,2}$ and $u_{i,4}$.
If $u_i$ is moved to $u_{i,2}$, then the only possible destination of the token on $u_{i,1}$ is $u_{i}'$, and thus the token on $u_{i,3}$ must go to $u_{i,4}$.
It is now clear that $F$ is satisfied by $\phi$ such that $\phi(x_i)=0$ if and only if the token on $u_i$ goes to $u_{i,2}$.

The theorem for $p > 4$ can be shown by inserting extra paths into appropriate places.
\end{proof}

The next theorem contrasts the result on Theorem~\ref{thm:2PTSPPoly}, which shows that it is polynomial-time decidable whether 2-step solution exists in Permutation Routing.
\begin{theorem}\label{thm:3PCTSP2}
It is NP-hard to decide whether $\POPT(G,f,g) \le 2$ for a graph $G$ of vertex degree at most $4$ and $3$-colorings $f$ and $g$.
\end{theorem}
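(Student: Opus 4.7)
The plan is to reduce from \SAT{} (Theorem~\ref{thm:SAT}) to the $2$-step decision problem for $3$-Coloring Routing via Matching, in the spirit of the proofs of Theorems~\ref{thm:2PCTSP} and~\ref{thm:2PCTSP+}, but redesigned for the tighter step bound. Since every token can move at most distance $2$ in a $2$-step parallel routing, the source and target of every misplaced token must lie within distance $2$ in the constructed graph. The three available colors will play the roles of a neutral background color, a propagation color used to carry information from variable gadgets to clause gadgets, and a blocker color used to forbid unwanted swap patterns.

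For each variable $x_i$, I would introduce a small variable gadget with a central vertex $u_i$ carrying a token of the propagation color in $f$ and demanding the background color in $g$; $u_i$ has two alternative first-step neighbors $T_i$ and $F_i$ corresponding to assigning $x_i=1$ or $x_i=0$. After the first-step swap $\{u_i, T_i\}$ (or $\{u_i, F_i\}$), the propagation token resides at $T_i$ (resp.\ $F_i$) and in the second step can be handed off to an adjacent clause vertex $v_j$ for any clause $C_j$ in which the corresponding literal appears. Each clause vertex $v_j$ has colors arranged so that it needs a propagation-color token delivered from some variable in exactly two steps, and this is possible iff some literal of $C_j$ is set true by the variable-gadget choices. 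Blocker tokens of the third color placed on auxiliary vertices enforce the exclusivity of the $T_i$-versus-$F_i$ choice at each variable gadget. Using the fact (also used in Theorem~\ref{thm:2PCTSP+}) that $3$SAT (and hence \SAT) remains NP-hard when each variable occurs in only a constant number of clauses, the construction can be kept of maximum vertex degree $4$.

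The main obstacle is that a single propagation token at $u_i$ can only move to one neighbor per step, yet a variable set to true should simultaneously satisfy \emph{every} clause in which it appears. Resolving this will likely require duplicating each variable gadget into one copy per clause-occurrence, with additional coupling structure forcing all copies of a single variable to make the same $T/F$ choice; this is similar in flavor to the coupling between $u_{i,1},\dots,u_{i,4}$ used in the proofs of Theorems~\ref{thm:2PCTSP} and~\ref{thm:2PCTSP+}. The correctness proof then amounts to two directions: (i) from a satisfying assignment $\phi$ construct an explicit $2$-step parallel swap sequence by activating, at each variable gadget, the $T_i$-side or $F_i$-side matching according to $\phi(x_i)$, and propagating in step $2$ to all clause vertices; and (ii) conversely, show by a case analysis on the possible first-step swaps at each variable gadget, constrained by the blocker tokens and the fact that each $u_i$ participates in at most one swap per step, that any $2$-step solution defines a consistent satisfying assignment of $F$.
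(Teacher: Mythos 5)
Your high-level approach (reduce from \SAT{} to the $2$-step, $3$-coloring decision problem) coincides with the paper's, and you correctly isolate the central difficulty: in two steps a token can only move one hop per step, so the variable's choice must be recorded in a way that simultaneously constrains all of its clause occurrences, and yet there is no room for multi-hop signal propagation to synchronise separate gadgets. What you do not actually resolve is that difficulty. You propose ``duplicating each variable gadget into one copy per clause-occurrence, with additional coupling structure forcing all copies of a single variable to make the same $T/F$ choice,'' but do not say what that coupling structure would be, and with only two parallel steps it is unclear how two disjoint gadgets could be forced to agree at all: any interaction must happen through shared vertices or adjacent vertices, which would collapse the copies back into a single gadget. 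So as written there is a genuine gap, not a mere omission of routine detail.

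The paper's actual proof avoids the issue rather than solving it. It re-uses the exact graph $G_F$ built for Lemma~\ref{lem:3PTSPNPhard} --- with the two $u_i$-to-$u_i'$ paths $(u_i,u_{i,1},u_{i,2},u_i')$ and $(u_i,u_{i,3},u_{i,4},u_i')$ and the clause paths $(v_j,u_{i,k},v_j')$ --- and supplies $3$-colorings under which the color-$2$ token on $u_i$ has only two legal destinations at distance $\le 2$, namely $u_{i,2}$ and $u_{i,4}$; reaching either forces both edges of the corresponding subpath into $S_1$ and $S_2$. Because \SAT{} guarantees each variable contributes \emph{exactly one} clause occurrence to each of $F_1,F_2,F_3$, every clause path attached to $x_i$ touches a distinct $u_{i,k}$ vertex, and occupying the $u_{i,1},u_{i,2}$ route automatically blocks $F_1$ and $F_2$ clauses at $x_i$ (while the $u_{i,3},u_{i,4}$ route blocks the $F_3$ clause at $\neg x_i$). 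Thus all occurrences of a variable are already collocated on one small gadget and coupled through matching constraints; no duplication is needed and no separate synchronisation mechanism has to be invented. This is the idea your sketch gestures at (``similar in flavor to the coupling between $u_{i,1},\dots,u_{i,4}$'') but does not carry out, and it is precisely what makes the $2$-step bound achievable.
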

\begin{proof}
We use the graph used in the proof of Lemma~\ref{lem:3PTSPNPhard}.
Recall that we have defined $G_F = (V_F,E_F)$ by
\begin{align*}
	V_{F} = {}& \{\, u_i,u_i',u_{i,1},u_{i,2},u_{i,3},u_{i,4} \mid 1 \le i \le m\,\} \cup \{\, v_j,v_j' \mid 1 \le j \le n \,\}
\,,\end{align*}
and $E_{F}$ containing the following paths
\begin{gather*}
	(u_i,u_{i,1},u_{i,2},u_{i}')	
	\text{ and }
	(u_i,u_{i,3},u_{i,4},u_{i}')	 \text{ for each $i \in \{1,\dots,m\}$,}
\\
	(v_j,u_{i,k},v_{j}')	 \text{ if $x_i \in C_j \in {F}_k$ or $\neg x_i \in C_j \in {F}_k$}
\,.\end{gather*}
The initial and goal 3-colorings $f$ and $g$ are such that
\[
	\begin{array}{c|cccccccccc}
		& u_i	& u_i'	&  u_{i,1}	&  u_{i,2}	&  u_{i,3}	&  u_{i,4}	& v_j	& v_j'	& v_k	& v_k'
	\\ \hline
	f	&  2	&  1	&  	   1	&     	2	&  	   1	&     	2	&  3	&  1	&  3	&	2	
	\\
	g	&  1	&  2	&  	   1	&     	2	&  	   1	&     	2	&  1	&  3	&  2	&	3	
	\end{array}
\]
for $i \in \{1,\dots,m\}$, $C_j \in F_1 \cup F_3$ and $C_k \in F_2$.
Figure~\ref{fig:3CPTSP2} illustrates the reduction where the values of $f$ and $g$ are shown in rectangles and circles, respectively.
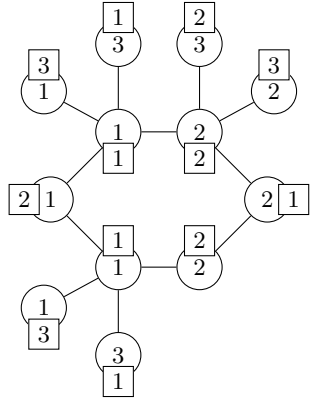
\begin{figure}
\begin{center}
\tikzstyle{vertex}=[circle,draw,inner sep=0pt,minimum size=6mm]
\tikzstyle{token}=[rectangle,draw,fill=white,inner sep=0pt,minimum size=4mm]
\tikzstyle{dummy}=[rectangle,fill=white,inner sep=0pt,minimum size=3.2mm]
\begin{tikzpicture}[scale=0.9]\small
	\node (u1) at (-0.2,0) [vertex] {$1$};
	\node (u11) at (0.8,1.0) [vertex] {$1$};
	\node (u12) at (2.0,1.0) [vertex] {$2$};
	\node (u13) at (0.8,-1.0) [vertex] {$1$};
	\node (u14) at (2.0,-1.0) [vertex] {$2$};
	\node (u1-) at (3.0,0) [vertex] {$2$};
	\draw (u1) -- (u11) -- (u12) -- (u1-);
	\draw (u1-) -- (u14) -- (u13) -- (u1);
	\node (v1) at (-0.3,1.6) [vertex] {$1$};
	\node (v1') at (0.8,2.3) [vertex] {$3$};
	\draw (v1) -- (u11) -- (v1');
	\node (v2) at (3.1,1.6) [vertex] {$2$};
	\node (v2') at (2.0,2.3) [vertex] {$3$};
	\draw (v2) -- (u12) -- (v2');
	\node (v3) at (-0.3,-1.6) [vertex] {$1$};
	\node (v3') at (0.8,-2.3) [vertex] {$3$};
	\draw (v3) -- (u13) -- (v3');
	\node [token, left of = u1, node distance = 10] {$2$};
	\node [token, below of = u11, node distance = 10] {$1$};
	\node [token, below of = u12, node distance = 10] {$2$};
	\node [token, above of = u13, node distance = 10] {$1$};
	\node [token, above of = u14, node distance = 10] {$2$};
	\node [token, right of = u1-, node distance = 10] {$1$};
	\node [token, above of = v1, node distance = 10] {$3$};
	\node [token, above of = v1', node distance = 10] {$1$};
	\node [token, above of = v2, node distance = 10] {$3$};
	\node [token, above of = v2', node distance = 10] {$2$};
	\node [token, below of = v3, node distance = 10] {$3$};
	\node [token, below of = v3', node distance = 10] {$1$};
\end{tikzpicture}
\end{center}
\caption{\label{fig:3CPTSP2}
Gadget used to show Theorem~\ref{thm:3PCTSP2}}
\end{figure}

Suppose that $F$ is satisfied by an assignment $\phi$.
There is a function $\psi:{F} \to X$ such that
 $\psi(C_j) \in C_j$ and $\phi(\psi(C_j))=1$ if $C_j \in {F}_1 \cup {F}_2$,
and $\neg \psi(C_j) \in C_j$ and $\phi(\psi(C_j))=0$ if $C_j \in {F}_3$.
Define
\begin{align*}
S_1 ={}& \{\, \{u_i,u_{i,1}\},\{u_i',u_{i,2}\} \mid \phi(x_i) = 0 \,\}
 \cup  \{\, \{u_i,u_{i,3}\},\{u_i',u_{i,4}\} \mid \phi(x_i) = 1 \,\}
\\	& \cup  \{\, \{v_j,u_{i,k}\} \mid \psi(C_j)=x_i \text{ and }C_j \in {F}_k \,\}\,,
\\
S_2 ={}& \{\, \{u_{i,1},u_{i,2}\} \mid \phi(x_i) = 0 \,\}
	 \cup  \{\, \{u_{i,3},u_{i,4}\} \mid \phi(x_i) = 1 \,\}
\\	& \cup \{\, \{v_{j}',u_{i,k}\} \mid \psi(C_j)=x_i\text{ and }C_j \in {F}_k \,\}\,.
\end{align*}
It is easy to see that $g=f\lrangle{S_1,S_2}$.

We now suppose the converse, $\lrangle{S_1,S_2} \in \PSOL(G_F,f,g)$.
For each $u_i$, the only vertices $w$ such that $\msf{dist}(u_i,w) \le 2$ and $f(u_i)=g(w)=2$ are $u_{i,2}$ and $u_{i,4}$.
The only $u_i$-$u_{i,k}$-path of length at most 2 is $(u_i,u_{i,k-1},u_{i,k})$ for $k \in \{2,4\}$.
Thus, either $\{ u_i,u_{i,1} \} \in S_1$ and $\{u_{i,1},u_{i,2}\} \in S_2$ or $\{u_i,u_{i,3}\} \in S_1$ and $\{u_{i,3},u_{i,4}\} \in S_2$.
We will show that $\phi$ defined by
\[
	\phi(x_i) = \begin{cases}
		0	& \text{ if $\{u_i,u_{i,1}\} \in S_1$,}
		\\
		1	& \text{ if $\{u_i,u_{i,3}\} \in S_1$}
	\end{cases}
\]
satisfies $F$.
Each $v_j$ with $j \in \{1,\dots,n\}$ has only one vertex $w$ such that $\msf{dist}(v_j,w) \le 2$ and $f(v_j)=g(w)=3$, which is $w = v_j'$.
The paths of length at most 2 between $v_j$ and $v_j'$ are of the form $(v_j,u_{i,k},v_j')$ for some $i$ and $k$ such that $x_i \in C_j \in F_k$ or $\neg x_i \in C_j \in F_k$.
For those $i$ and $k$, $\{v_j,u_{i,k}\} \in S_1$ and $\{v_j',u_{i,k}\} \in S_2$ holds.
Suppose $C_j \in F_1$.
In this case, $\{u_{i},u_{i,1}\} \notin S_1$, which implies $\phi(x_i) = 1$. By $x_i \in C_j$, $C_j$ is satisfied.
Suppose $C_j \in F_2$.
Then  $\{u_{i,1},u_{i,2}\} \notin S_2$, which implies $\{u_{i},u_{i,1}\} \notin S_1$ and $\phi(x_i) = 1$. By $x_i \in C_j$, $C_j$ is satisfied.
Suppose $C_j \in F_3$.
In this case, $\{u_{i},u_{i,3}\} \notin S_1$, which implies $\phi(x_i) = 0$. By $\neg x_i \in C_j$, $C_j$ is satisfied.
\end{proof}

\subsection{2-Step 2-Coloring Routing via Matching Is Easy}
In the previous subsection we have shown that $c$-Coloring Routing via Matching is hard even to decide whether a $p$-step solution exists if $c \ge 3$ and $p \ge 2$ or $c \ge 2$ and $p \ge 3$.
We will show that it is easy if $c,p \le 2$.
If $c=1$, we have nothing to do. If $p=1$, the problem is reduced to perfect matching, which can be solved in polynomial time~\cite{Edmonds65}.
Suppose that $\lrangle{S_1,S_2}$ is a 2-step solution for $({G,f,g})$ where $f$ and $g$ are consistent 2-colorings on $G = ({V,E})$.
We say that a swap $\{u,v\}$ is \emph{vacuous} for $f$ if $f(u)=f(v)$.
\begin{lemma}\label{lem:22CRM1}
If $(G,f,g)$ admits a 2-step solution, then there is $\lrangle{S_1,S_2} \in \PSOL(G,f,g)$ such that
\begin{itemize}
	\item $S_1 \cap S_2 = \varnothing$,
	\item no swaps in $S_1$ and in $S_2$ are vacuous for $f$ and for $fS_1$, respectively,
	\item $S_1 \cup S_2$ gives a path matching in $G$.
\end{itemize}
\end{lemma}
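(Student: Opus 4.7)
The plan is to start from any 2-step solution $\lrangle{T_1,T_2}$ and apply three successive local reductions, each of which preserves the property of being a solution while moving the pair closer to the desired form. The reductions are: (a) delete common edges, i.e.\ remove any $e\in T_1\cap T_2$ from both matchings; (b) delete vacuous swaps, i.e.\ remove any $e=\{u,v\}\in S_1$ with $f(u)=f(v)$ and any $e=\{u,v\}\in S_2$ with $fS_1(u)=fS_1(v)$; (c) delete any cycle component of $S_1\cup S_2$. Since the result of each reduction is still a pair of matchings, iterating the three in order terminates and produces the desired $\lrangle{S_1,S_2}$.

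Reductions (a) and (b) are straightforward. For (a), because $T_1$ is a matching, $e$ is the only edge in $T_1$ incident to its two endpoints; the same holds for $T_2$. Thus applying $e$ in $S_1$ and then again in $S_2$ returns the tokens at the endpoints to their original colors, so deleting $e$ from both matchings leaves $fS_1S_2$ unchanged. For (b), a vacuous swap does not alter the coloring to which it is applied, so deleting it keeps $fS_1S_2=g$. Note that these deletions cannot create new vacuous swaps in $S_2$: removing an edge from $S_1$ at $\{u,v\}$ with $f(u)=f(v)$ does not change $fS_1$ on $u$ or $v$.

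The main obstacle is (c): showing that a cycle component in $S_1\cup S_2$ can be removed. Because $S_1$ and $S_2$ are matchings and $S_1\cap S_2=\varnothing$, any cycle must alternate between $S_1$ and $S_2$ edges, hence has even length $2k$. Label it $v_1,\dots,v_{2k}$ with $\{v_{2i-1},v_{2i}\}\in S_1$ and $\{v_{2i},v_{2i+1}\}\in S_2$ (indices mod $2k$), and set $c_i=f(v_i)\in\{0,1\}$. Non-vacuity of $S_1$ gives $c_{2i-1}\neq c_{2i}$; after applying $S_1$, the color at $v_{2i}$ is $c_{2i-1}$ and at $v_{2i+1}$ is $c_{2i+2}$, so non-vacuity of $S_2$ gives $c_{2i-1}\neq c_{2i+2}$. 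With only two colors, these two inequalities force $c_{2i}=c_{2i+2}$ for every $i$, so all even-indexed vertices share one color and all odd-indexed vertices share the opposite color. A direct computation of $fS_1S_2$ on the cycle, using $\sigma_{S_1}$ and $\sigma_{S_2}$, then shows $fS_1S_2(v_j)=c_j$ for every $j$, i.e.\ the two-step action is the identity on the cycle. Since $fS_1S_2=g$, it follows that $f$ and $g$ agree on the cycle, so deleting all edges of the cycle from $S_1$ and $S_2$ still yields a solution.

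After the three reductions, $S_1$ and $S_2$ are disjoint matchings, contain no vacuous swaps, and $S_1\cup S_2$ has no cycle component; together with the automatic degree bound of $2$ coming from the matching property, this means every component of $S_1\cup S_2$ is a simple path, i.e.\ $S_1\cup S_2$ is a path matching, proving the lemma.
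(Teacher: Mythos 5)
Your proposal is correct and follows essentially the same route as the paper: after the routine first two reductions (delete common edges and vacuous swaps), both proofs observe that $S_1 \cup S_2$ has maximum degree $2$, that any cycle must alternate between $S_1$ and $S_2$ and hence be even, and then use the non-vacuity constraints on $S_1$ w.r.t.\ $f$ and on $S_2$ w.r.t.\ $fS_1$ to force alternating colors around the cycle and conclude $f=g$ there, so the cycle edges can be deleted. The only cosmetic difference is labeling: the paper uses $(u_1,v_1,\dots,u_n,v_n)$ while you use $v_1,\dots,v_{2k}$.
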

\begin{proof}
The first two items are trivial.
Assuming $\lrangle{S_1,S_2}$ satisfies the first two, we show the last.
If $\{u,v\},\{v,w\} \in S_1 \cup S_2$ with $v \neq w$, then either $\{u,v\} \in S_1$ and  $\{v,w\} \in S_2$ or $\{u,v\} \in S_2$ and $\{v,w\} \in S_1$.
This implies that $G'=({V,S_1 \cup S_2})$ has degree bound $2$.
Moreover, if $G'$ has a cycle, then the size must be even.
We show that if $G'$ contains a cycle $(u_1,v_1,u_2,\dots,u_{n},v_{n},u_1)$, then $f(u_i)=g(u_i)$ and $f(v_i)=g(v_i)$ for all $i$.
That is, those edges in the cycle can be removed from $S_1$ and $S_2$. 
Hereafter, by $u_{j}$ we mean $u_i$ such that $1 \le i \le n$ and $i \equiv j$ ({mod} $n$).
Without loss of generality, assume $\{u_i,v_i\} \in S_1$ for all $i$ and $\{v_i,u_{i+1}\} \in S_2$ for all $i$.
Since $\{u_i,v_i\} \in S_1$ is not vacuous for $f$, $f(u_i) \neq f(v_i)$ for all $i$.
Since $\{v_i,u_{i+1}\} \in S_2$ is not vacuous for $fS_1$, $fS_1(v_i) \neq fS_1(u_{i+1}) $, i.e., $f(u_i) \neq f(v_{i+1})$ for all $i$.
Hence, $f(v_i) = f(v_{i+1})$ for all $i$.
Moreover, $g(v_i) = f\lrangle{S_1,S_2}(v_i) = f(v_{i+1})$ for all $i$.
That is, $f(v_i)= g(v_i)$ for all $i$.
Similarly we have $f(u_i)= g(u_i)$ for all $i$.
Those tokens need not be moved at all.
\end{proof}
Hereafter we consider only 2-step solutions that satisfy the condition of Lemma~\ref{lem:22CRM1}.
\begin{lemma}\label{lem:22CRM2}
Let $(u_1,\dots,u_n)$ be a (maximal) path in $G' = ({V,S_1 \cup S_2})$ for a $2$-step solution $\lrangle{S_1,S_2} \in \PSOL(G,f,g)$ satisfying the condition of Lemma~\ref{lem:22CRM1}.
If $n=2$, then $f(u_1)=g(u_2) \neq f(u_2) = g(u_1)$.
If $n \ge 3$,
\begin{itemize}
	\item for all $i \in \{2,\dots,n-2\}$, $f(u_i)=g(u_i) \neq f(u_{i+1}) = g(u_{i+1})$,
	\item if $\{u_1,u_2\} \in S_1$ then $f(u_1) \neq f(u_2)$ and $g(u_1)=g(u_2)$,
	\item if $\{u_{1},u_2\} \in S_2$ then $f(u_1)=f(u_2)$ and $g(u_1) \neq g(u_2)$,
	\item $f(u_1) = g(u_n) \neq g(u_1) = f(u_n)$.
\end{itemize}
\end{lemma}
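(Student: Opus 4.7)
My approach rests on two observations: first, since $S_1$ and $S_2$ are disjoint matchings, the edges along any maximal path in $G' = (V, S_1 \cup S_2)$ must alternate between $S_1$ and $S_2$ (otherwise some vertex would be incident to two edges from the same matching); second, when only two colors are available, the non-vacuousness of each swap $\{w,w'\}$ says $f(w) \neq f(w')$, which in a two-element color set uniquely determines one color from the other. These two ingredients let me compute $g(u_i)$ explicitly and then translate the inequality constraints into a rigid $2$-periodic pattern on $f$.

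For $n = 2$ I would argue directly: the single edge $\{u_1,u_2\}$ belongs to exactly one of $S_1,S_2$ because $S_1 \cap S_2 = \varnothing$, and applying $\lrangle{S_1,S_2}$ just exchanges the two tokens, giving $g(u_1) = f(u_2)$ and $g(u_2) = f(u_1)$; non-vacuousness of the chosen $S_j$ then forces $f(u_1) \neq f(u_2)$.

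For $n \geq 3$ I would handle the case $\{u_1,u_2\} \in S_1$ in detail; the case $\{u_1,u_2\} \in S_2$ will reduce to it via the observation that $\lrangle{S_2,S_1}$ is a valid $2$-step solution from $g$ to $f$, and a short check shows the conditions of Lemma~\ref{lem:22CRM1} carry over (the two $S_j$'s swap roles, the initial and goal colorings swap, and the lemma's conclusions are symmetric in $f,g$ modulo that swap). Under $\{u_1,u_2\} \in S_1$, alternation forces $e_i := \{u_i,u_{i+1}\} \in S_1$ iff $i$ is odd. I would then compute $g(u_i)$ by tracking tokens through $S_1$ followed by $S_2$, obtaining $g(u_1) = f(u_2)$; $g(u_i) = f(u_{i+2})$ for even $i$ with $2 \leq i < n-1$; $g(u_i) = f(u_{i-2})$ for odd $i$ with $3 \leq i < n$; and, at the right endpoint, $g(u_n) = f(u_{n-2})$ when $n$ is odd while $g(u_n) = f(u_{n-1})$ when $n$ is even, with the analogous adjustment at $u_{n-1}$.

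The crucial step is to feed the non-vacuousness conditions into the $2$-coloring. Each $S_1$-edge $e_i$ (odd $i$) gives $f(u_i) \neq f(u_{i+1})$; each $S_2$-edge $e_i$ (even $i$) unfolds, via the computation of $fS_1$ above, to $f(u_{i-1}) \neq f(u_{i+2})$ in the interior, with the right-boundary case $f(u_{n-2}) \neq f(u_n)$ when $e_{n-1} \in S_2$. Using that in $\{1,2\}$ the relations $a \neq b$ and $a \neq c$ force $b = c$, I would chain these inequalities to conclude that $f$ is constant on odd-indexed and on even-indexed vertices along the path, with $f(u_n)$ merging into the even-index class when $n$ is odd and sitting in the even-index class by default when $n$ is even. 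Substituting this $2$-periodicity into the explicit formulas for $g(u_i)$ then yields $f(u_i) = g(u_i)$ for $2 \leq i \leq n-2$, $g(u_1) = g(u_2)$, $f(u_1) = g(u_n)$ and $g(u_1) = f(u_n)$, together with the accompanying strict inequalities. The main obstacle is the purely bookkeeping work at the four vertices $u_1, u_2, u_{n-1}, u_n$, where the clean recursion $g(u_i) = f(u_{i \pm 2})$ breaks and the parity of $n$ splits the endpoint argument into two short subcases; once the $2$-periodicity of $f$ is in hand, each subcase collapses to a one-line substitution.
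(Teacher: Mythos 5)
Your proposal is correct and follows essentially the same route as the paper's proof: use alternation of $S_1/S_2$ edges along the path, translate non-vacuousness of swaps into inequalities on $f$, exploit the two-color pigeonhole ($a\neq b$ and $a\neq c$ forces $b=c$) to derive $2$-periodicity of $f$, compute $g$ explicitly, and reduce the case $\{u_1,u_2\}\in S_2$ to the case $\{u_1,u_2\}\in S_1$ via the symmetric solution $\lrangle{S_2,S_1}\in\PSOL(G,g,f)$. The paper packages the inequality chain a little more compactly as $f(1)\neq f(2)\neq\dots\neq f(n-1)$ and reads off the conclusions from the equalities $g(2i+2)=f(2i+2)$, $g(2i+3)=f(2i+3)$, but the underlying argument is the same as yours.
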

\begin{proof}
For $n=2$, the lemma holds trivially.
We assume $n \ge 3$.
For readability, we rename each vertex $u_i$ by $i$.
Suppose $\{1,2\} \in S_1$, which implies $\{{2i+1},{2i+2}\} \in S_1$ for $0 \le i \le \floor{(n-2)/2}$ and $\{{2i+2},{2i+3}\} \in S_2$ for $0 \le i \le \floor{(n-3)/2}$.
Since $\{2i+1,2i+2\} \in S_1$ is not vacuous for $f$, $f({2i+1}) \neq f({2i+2})$ for $0 \le i \le \floor{(n-2)/2}$.
Since $\{2i+2,2i+3\} \in S_2$ is not vacuous for $fS_1$, $fS_1({2i+2}) \neq fS_1({2i+3}) $, i.e., $f({2i+1}) \neq f({2i+4})$ for $0 \le i \le \floor{(n-4)/2}$.
That is,
\[
	f(1) \neq f(2) \neq \dots \neq f(n-1)\,. 
\]
By $g({2i+2}) = f\lrangle{S_1,S_2}({2i+2}) = fS_1({2i+3}) =  f({2i+4}) $ for $0 \le i \le \floor{(n-4)/2}$, $g({2i+2}) = f({2i+2})$ for $0 \le i \le \floor{(n-4)/2}$.
By $g({2i+3}) = f\lrangle{S_1,S_2}({2i+3}) = fS_1({2i+2}) = f({2i+1})$ for $0 \le i \le \floor{(n-3)/2}$, $g({2i+3}) = f({2i+1})$ for $0 \le i \le \floor{(n-3)/2}$.
On the other hand, $g(1) = f\lrangle{S_1,S_2}({1}) = fS_1({1}) = f({2}) \neq f(1)$.
Therefore,
\[
	f(1) \neq g(1) = f(2) = g(2) \neq f(3) = g(3) \neq \dots \neq f(n-1) = g(n-1)\,.
\]
We have shown the first and second items of the lemma.
Recall that $\lrangle{S_1,S_2} \in \PSOL(G,f,g)$ implies $\lrangle{S_2,S_1} \in \PSOL(G,g,f)$ and moreover,  $\lrangle{S_2,S_1}$ satisfies the condition of Lemma~\ref{lem:22CRM1}.
This symmetry proves the third.
The fourth is a corollary to those three.
The second and third items imply $f(1) \neq g(1)$, by $f(2)=g(2)$. By the symmetry, $f(n) \neq g(n)$.
Since $f$ and $g$ restricted to the path $\{1,\dots,n\}$ are consistent, it must hold $f(1)=g(n)$ and $f(n)=g(1)$.
\end{proof}
\begin{lemma}\label{lem:22CRM3}
Let $P_n = (\{1,\dots,n\},\,\{\,\{i,i+1\} \mid 1 \le i <n\,\})$.
If 
\begin{itemize}
	\item for all $i \in \{2,\dots,n-2\}$, $f(i)=g(i) \neq f({i+1}) = g({i+1})$,
	\item $f(1) = g(n) \neq g(1) = f(n)$,
\end{itemize}
then $({P_n,f,g})$ admits a 2-step solution.
\end{lemma}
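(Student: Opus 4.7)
The plan is to exhibit a 2-step solution explicitly using the two canonical matchings of $P_n$:
\[
M_{\mrm{odd}} = \{\{2i-1, 2i\} \mid 1 \le i \le \lfloor n/2 \rfloor\}, \qquad
M_{\mrm{even}} = \{\{2i, 2i+1\} \mid 1 \le i \le \lfloor (n-1)/2 \rfloor\}.
\]
I take $\lrangle{S_1, S_2} = \lrangle{M_{\mrm{even}}, M_{\mrm{odd}}}$ when $f(1) = f(2)$ and $\lrangle{S_1, S_2} = \lrangle{M_{\mrm{odd}}, M_{\mrm{even}}}$ when $f(1) \neq f(2)$, then verify that $f S_1 S_2 = g$.

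Before computing, I would unpack the hypotheses. Setting $a = f(2) = g(2)$, the alternation hypothesis gives $f(i) = g(i) = a$ for even $i \in \{2,\dots,n-1\}$ and $f(i) = g(i) = 1-a$ for odd $i \in \{2,\dots,n-1\}$. The endpoint hypothesis gives $f(1) = g(n)$, $f(n) = g(1)$, and $f(1) \neq f(n)$, so $f$ and $g$ coincide outside $\{1, n\}$ and merely swap the two end values. According to the parity of $n$ and whether $f(1) = a$ or $f(1) = 1-a$, the configuration $f$ falls into one of four near-alternating patterns, each with a ``double'' (a pair of equal consecutive entries) at one or both endpoints, and $g$ is obtained by moving that double to the opposite end.

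The core computation is to show that applying $S_1$ turns $f$ into the fully alternating configuration $h$ with $h(i) = f(1)$ for odd $i$ and $h(i) = 1-f(1)$ for even $i$, and that applying $S_2$ to $h$ then yields $g$. When $f(1) = f(2) = a$, applying $M_{\mrm{even}}$ fixes vertex $1$ and swaps each pair $(f(2i), f(2i+1))$; because $f$ already alternates on $\{2,\dots,n-1\}$, the result equals $h$ on $\{1,\dots,n-1\}$, and the value at $n$ comes out right from $f(n) \neq f(1)$ combined with the parity of $n$. Then $M_{\mrm{odd}}$ flips every matched pair of $h$, and a position-by-position check---again using $f(1) \neq f(n)$ and the parity of $n$---shows this equals $g$. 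The case $f(1) \neq f(2)$ is entirely symmetric with the roles of $M_{\mrm{odd}}$ and $M_{\mrm{even}}$ exchanged. The main obstacle is just bookkeeping: the verification splits into four subcases (two parities of $n$ times two choices of $f(1)$), and in each one has to locate the ``double'' of $f$ and of $g$ correctly. Each subcase reduces to a routine pattern-matching computation once the intermediate configuration $h$ has been named, so no conceptual difficulty remains.
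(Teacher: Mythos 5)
Your construction is exactly the one used in the paper: with $M_{\mathrm{odd}} = \{\{2i+1,2i+2\}\}$ and $M_{\mathrm{even}}$ the complementary matching, apply $\lrangle{M_{\mathrm{odd}},M_{\mathrm{even}}}$ when $f(1)\neq f(2)$ and $\lrangle{M_{\mathrm{even}},M_{\mathrm{odd}}}$ when $f(1)=f(2)$. The additional case analysis you describe is just a more explicit version of the same check the authors leave implicit, so your proposal matches the paper's proof.
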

\begin{proof}
Let $S_1 = \{\,\{2i+1,2i+2\} \mid 0 \le i \le \floor{(n-2)/2}\,\}$
and $S_2$ be the rest.
If $f(1) \neq f(2)$, $\lrangle{S_1,S_2}$ is a solution.
If $f(1) = f(2)$, $\lrangle{S_2,S_1}$ is a solution.
\end{proof}

We will reduce the concerned problem to the Vertex-Disjoint Path Problem, which can be solved in polynomial-time~\cite{XieLCZ09}.
\problemdef{Vertex-Disjoint Path Problem}%
{A directed graph $G=(V,E)$, two distinguished vertices $s,t \in V$ and $k \in \mbb{N}$.}%
{Are there $k$ $s$-$t$-paths in $G$ which are vertex-disjoint except $s$ and $t$?}
For a given instance $(G,f,g)$ with $G=(V,E)$ of 2-Coloring Routing via Matching, we give an instance $(H,s,t,k)$ of the Vertex-Disjoint Path Problem as follows.
Let us partition $V$ into
\begin{align*}
V_s &= \{\, u \mid f(u) = 1 \text{ and } g(u) = 2 \,\} \,,
\\
V_t &= \{\, u \mid f(u) = 2 \text{ and } g(u) = 1 \,\} \,,
\\
V_1 & = \{\, u \mid f(u)=g(u)=1 \,\} \,,
\\
V_2 & = \{\, u \mid f(u)=g(u)=2 \,\}
\end{align*}
and define $H = (V', F)$ by $V'=V \cup \{s,t\}$ and
\begin{align*}
F ={} & \{\, (u,v) \in (V_s \times V_t) \cup (V_s \times (V_1 \cup V_2))  \cup ((V_1 \cup V_2) \times V_t)
\\ & {} \cup (V_1 \times V_2) \cup (V_2 \times V_1) \mid \{u,v\} \in E\,\}
 \cup (\{s\} \times V_s)  \cup (V_t \times \{t\})
\,.\end{align*}
\begin{lemma}
$(G,f,g)$ admits a 2-step solution if and only if $(H,s,t)$ admits $|V_s|$ disjoint paths.
\end{lemma}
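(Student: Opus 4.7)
The plan is to use Lemmas~\ref{lem:22CRM1}, \ref{lem:22CRM2} and~\ref{lem:22CRM3} as a dictionary between 2-step solutions of $(G,f,g)$ and families of $|V_s|$ vertex-disjoint $s$-$t$ paths in $H$. Note that consistency of $f$ and $g$ gives $|V_s|=|V_t|$, and by definition $V_1\cup V_2$ consists of the already correctly coloured vertices. The edge set $F$ is engineered so that its directed edges inside $V$ correspond precisely to the admissible colour transitions identified by Lemma~\ref{lem:22CRM2}: a swap path in a normalised 2-step solution must start at a vertex of $V_s$, end at a vertex of $V_t$, and (for length at least three) have internal vertices in $V_1\cup V_2$ alternating between the two classes.

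For the forward direction, I would take a 2-step solution and apply Lemma~\ref{lem:22CRM1} to normalise it to $\lrangle{S_1,S_2}$ with $S_1\cup S_2$ a path matching in $G$. Lemma~\ref{lem:22CRM2} then guarantees that each maximal path $(u_1,\dots,u_n)$ of this matching has, after possibly reversing, $u_1\in V_s$, $u_n\in V_t$, and for $n\ge 3$ internal vertices alternating between $V_1$ and $V_2$. Orienting the path as $s\to u_1\to\dots\to u_n\to t$ yields a directed $s$-$t$ path in $H$, since every consecutive pair matches one of the categories that build up $F$ ($\{s\}\times V_s$, $V_s\times V_t$, $V_s\times(V_1\cup V_2)$, $V_1\times V_2$, $V_2\times V_1$, $(V_1\cup V_2)\times V_t$, $V_t\times\{t\}$). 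Every vertex of $V_s$ must be the starting vertex of some such path, since the token on it has the wrong colour and thus must be moved, and vertex-disjointness of the matching paths in $G$ transfers to vertex-disjointness in $H$ apart from $s$ and $t$. Counting gives exactly $|V_s|$ disjoint paths.

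Conversely, given $|V_s|$ vertex-disjoint $s$-$t$ paths in $H$, I would strip off $s$ and $t$ to obtain vertex-disjoint paths $(w_1,\dots,w_n)$ in $G$ with $w_1\in V_s$, $w_n\in V_t$, and internal vertices forced by the structure of $F$ to alternate between $V_1$ and $V_2$; in particular $f(w_1)=g(w_n)\neq g(w_1)=f(w_n)$ and $f(w_i)=g(w_i)\neq f(w_{i+1})=g(w_{i+1})$ for $2\le i\le n-2$. Each such path therefore satisfies the hypotheses of Lemma~\ref{lem:22CRM3} and admits a local 2-step solution $\lrangle{S_1^{(p)},S_2^{(p)}}$; taking $S_k=\bigcup_p S_k^{(p)}$ gives a valid parallel swap sequence on $G$ because the underlying paths are vertex-disjoint, and vertices outside every path lie in $V_1\cup V_2$ with $f(v)=g(v)$ and hence need not be moved. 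The only nontrivial point is the case check that the directed edge types in $F$ align exactly with the admissible endpoint parities and internal alternation prescribed by Lemma~\ref{lem:22CRM2}—splitting on whether the first swap of a path belongs to $S_1$ or $S_2$ places $u_2$ in $V_2$ or $V_1$ respectively, and symmetrically for $u_{n-1}$—but this is a finite bookkeeping exercise rather than a conceptual obstacle.
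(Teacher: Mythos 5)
Your proof is correct and follows essentially the same approach as the paper: normalise via Lemma~\ref{lem:22CRM1}, use Lemma~\ref{lem:22CRM2} to read off the colour pattern of each maximal path of $S_1\cup S_2$ and orient it as an $s$-$t$ path in $H$, and conversely use Lemma~\ref{lem:22CRM3} path by path and take unions. The paper states these steps very tersely; you fill in the same bookkeeping about how the edge categories of $F$ match the endpoint and internal-alternation constraints, so there is no substantive difference.
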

\begin{proof}
Suppose $\lrangle{S_1,S_2} \in \PSOL(G,f,g)$, which satisfies the condition of Lemma~\ref{lem:22CRM1}.
The graph $G'=(V,S_1 \cup S_2)$ consists of exactly $|V_s|$ disjoint paths by Lemma~\ref{lem:22CRM2}.
Clearly $(H,s,t)$ has corresponding $|V_s|$ disjoint $s$-$t$-paths.

Suppose $(H,s,t)$ admits $|V_s|$ disjoint $s$-$t$-paths.
For each path $(s,u_1,\dots,u_n,t)$, $(u_1,\dots,u_n)$ satisfies the condition of Lemma~\ref{lem:22CRM3}.
Since those paths are disjoint, $(G,f,g)$ admits a $2$-step solution.
\end{proof}
\begin{theorem}\label{thm:22CRM}
It is decidable in polynomial time if\/ $\POPT(G,f,g) \le 2$ for consistent $2$-colorings $f$ and $g$ on a graph $G$.
\end{theorem}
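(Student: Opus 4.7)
The plan is to assemble the pieces already developed in this subsection into a polynomial-time decision procedure. First I would dispose of the boundary case $\POPT(G,f,g)\le 1$, which by Lemma~\ref{lem:22CRM1} (specialised to a single parallel swap) reduces to finding a perfect matching on the subgraph of edges $\{u,v\}\in E$ with $f(u)=g(v)\ne f(v)=g(u)$, together with the requirement that $f(w)=g(w)$ on the remaining vertices. This is solvable in polynomial time by Edmonds' algorithm~\cite{Edmonds65}. From here on I may assume the answer is to decide whether a 2-step solution exists.

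Next, given the input $(G,f,g)$, I would compute in linear time the partition $V=V_s\cup V_t\cup V_1\cup V_2$ defined immediately before the statement of the theorem. Note that consistency of $f$ and $g$ forces $|V_s|=|V_t|$, since $|f^{-1}(1)|=|V_s|+|V_1|$ equals $|g^{-1}(1)|=|V_t|+|V_1|$. Then I would construct the directed graph $H=(V',F)$ as defined in the preceding paragraph, which can clearly be done in polynomial time in the size of $G$.

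After that, the algorithm simply asks whether $H$ admits $|V_s|$ internally vertex-disjoint paths from $s$ to $t$. This is an instance of the Vertex-Disjoint Path Problem, which is solvable in polynomial time (e.g.\ by reduction to unit-capacity maximum flow after splitting each internal vertex $u\in V$ into an in-copy and an out-copy joined by a single arc). The correctness of this decision is exactly the content of the lemma immediately preceding Theorem~\ref{thm:22CRM}: $(G,f,g)$ admits a 2-step solution iff $(H,s,t)$ admits $|V_s|$ disjoint paths.

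The only technical point that required care was the justification of the reduction itself, and that work has already been carried out in Lemmas~\ref{lem:22CRM1},~\ref{lem:22CRM2} and~\ref{lem:22CRM3}: Lemma~\ref{lem:22CRM1} normalises any candidate 2-step solution so that $S_1\cup S_2$ is a path matching; Lemma~\ref{lem:22CRM2} tells us which colour patterns such a path must realise, and these are exactly the patterns encoded by the arc set $F$ of $H$; and Lemma~\ref{lem:22CRM3} provides the converse, showing that any such coloured path can be realised by a 2-step parallel swap. Hence no new combinatorial obstacle remains, and the theorem follows by combining the trivial case, the partitioning and graph-construction step, and the polynomial-time subroutine for the Vertex-Disjoint Path Problem.
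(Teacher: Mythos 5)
Your proposal is correct and follows essentially the same route as the paper: it assembles Lemmas~\ref{lem:22CRM1}--\ref{lem:22CRM3} into the reduction to the Vertex-Disjoint Path Problem on the auxiliary digraph $H$, then invokes the polynomial-time solvability of that problem. The separate preprocessing step for $\POPT(G,f,g)\le 1$ is harmless but redundant, since a zero- or one-step solution is already captured by the disjoint-path condition (the $f=g$ case gives $V_s=\varnothing$, and a single colour-swapping edge corresponds to a length-three $s$-$t$-path in $H$).
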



\section*{Acknowledgement}
The authors are deeply grateful to Dana Richards for making us aware of existing studies on Permutation Routing via Matching including their manuscripts.
We also would like to thank the anonymous reviewers of the preliminary version~\cite{KawaharaSY17}.
The work was supported in part by JSPS KAKENHI Grant Number 16K16006, CREST Grant Numbers JPMJCR1401 and JPMJCR1402, Japan.


\begin{thebibliography}{10}

\bibitem{AlonCG94}
Noga Alon, Fan R.~K. Chung, and Ronald~L. Graham.
\newblock Routing permutations on graphs via matchings.
\newblock {\em {SIAM} J. Discrete Math.}, 7(3):513--530, 1994.

\bibitem{BanerjeeR17}
Indranil Banerjee and Dana~S. Richards.
\newblock New results on routing via matchings on graphs.
\newblock {\em CoRR}, abs/1706.09355, 2017.

\bibitem{BonnetMR16}
{\'{E}}.~Bonnet, T.~Miltzow, and P.~Rz{\k{a}}{\.{z}}ewski.
\newblock Complexity of token swapping and its variants.
\newblock {\em CoRR}, abs/1607.07676, 2016.

\bibitem{Cook71}
S.~A. Cook.
\newblock The complexity of theorem-proving procedures.
\newblock In M.~A. Harrison, R.~B. Banerji, and J.~D. Ullman, editors, {\em
  Proc. of the 3rd Annual {ACM} Symposium on Theory of Computing}, pages
  151--158. {ACM}, 1971.

\bibitem{Edmonds65}
J.~Edmonds.
\newblock Paths, trees, and flowers.
\newblock {\em Canad. J. Math.}, 17:449--467, 1965.

\bibitem{EvenG81}
S.~Even and O.~Goldreich.
\newblock The minimum-length generator sequence problem is {NP}-hard.
\newblock {\em J. Algorithms}, 2(3):311--313, 1981.

\bibitem{GareyJ79}
M.~R. Garey and David~S. Johnson.
\newblock {\em Computers and Intractability: {A} Guide to the Theory of
  NP-Completeness}.
\newblock W. H. Freeman, 1979.

\bibitem{HeathV03}
L.~S. Heath and J.~P.~C. Vergara.
\newblock Sorting by short swaps.
\newblock {\em J. Comput. Biol.}, 10(5):775--789, 2003.

\bibitem{Jerrum85}
M.~Jerrum.
\newblock The complexity of finding minimum-length generator sequences.
\newblock {\em Theor. Comput. Sci.}, 36:265--289, 1985.

\bibitem{Karp72}
R.~M. Karp.
\newblock Reducibility among combinatorial problems.
\newblock In R.~E. Miller and J.~W. Thatcher, editors, {\em Proc. of a
  symposium on the Complexity of Computer Computations}, The {IBM} Research
  Symposia Series, pages 85--103. Plenum Press, New York, 1972.

\bibitem{KawaharaSY16a}
J.~Kawahara, T.~Saitoh, and R.~Yoshinaka.
\newblock A note on the complexity of the token swapping problem (in
  {J}apanese).
\newblock {\em IPSJ SIG Technical Report}, 2015-AL-156(3):1--7, January 2016.

\bibitem{KawaharaSY17}
Jun Kawahara, Toshiki Saitoh, and Ryo Yoshinaka.
\newblock The time complexity of the token swapping problem and its parallel
  variants.
\newblock In Sheung{-}Hung Poon, Md.~Saidur Rahman, and Hsu{-}Chun Yen,
  editors, {\em {WALCOM:} Algorithms and Computation, 11th International
  Conference and Workshops, {WALCOM} 2017, Hsinchu, Taiwan, March 29-31, 2017,
  Proceedings.}, volume 10167 of {\em Lecture Notes in Computer Science}, pages
  448--459. Springer, 2017.

\bibitem{LiLY10}
Wei{-}Tian Li, Linyuan Lu, and Yiting Yang.
\newblock Routing numbers of cycles, complete bipartite graphs, and hypercubes.
\newblock {\em {SIAM} J. Discrete Math.}, 24(4):1482--1494, 2010.

\bibitem{MiltzowNORTU16}
T.~Miltzow, L.~Narins, Y.~Okamoto, G.~Rote, A.~Thomas, and T.~Uno.
\newblock Approximation and hardness of token swapping.
\newblock In P.~Sankowski and C.~D. Zaroliagis, editors, {\em 24th Annual
  European Symposium on Algorithms, {ESA} 2016, August 22-24, 2016, Aarhus,
  Denmark}, volume~57 of {\em LIPIcs}, pages 66:1--66:15. Schloss Dagstuhl -
  Leibniz-Zentrum fuer Informatik, 2016.

\bibitem{Pak99}
I.~Pak.
\newblock Reduced decompositions of permutations in terms of star
  transpositions, generalized catalan numbers and $k$-{ARY} trees.
\newblock {\em Discrete Math.}, 204(1-3):329--335, 1999.

\bibitem{PetersenT13}
T.~K. Petersen and B.~E. Tenner.
\newblock How to write a permutation as a product of involutions (and why you
  might care).
\newblock {\em Integers}, 13(\#A63), 2013.

\bibitem{Valiant79}
L.~G. Valiant.
\newblock The complexity of computing the permanent.
\newblock {\em Theor. Comput. Sci.}, 8:189--201, 1979.

\bibitem{XieLCZ09}
Zheng Xie, Hongze Leng, Zhi Chen, and Jun Zhang.
\newblock Finding arc and vertex-disjoint paths in networks.
\newblock In {\em Eighth {IEEE} International Conference on Dependable,
  Autonomic and Secure Computing, {DASC} 2009, Chengdu, China, 12-14 December,
  2009}, pages 539--544. {IEEE} Computer Society, 2009.

\bibitem{YamanakaDIKKOSSUU14}
K.~Yamanaka, E.~D. Demaine, T.~Ito, J.~Kawahara, M.~Kiyomi, Y.~Okamoto,
  T.~Saitoh, A.~Suzuki, K.~Uchizawa, and T.~Uno.
\newblock Swapping labeled tokens on graphs.
\newblock In A.~Ferro, F.~Luccio, and P.~Widmayer, editors, {\em Fun with
  Algorithms - 7th International Conference, {FUN} 2014}, volume 8496 of {\em
  LNCS}, pages 364--375. Springer, 2014.

\bibitem{YamanakaHKOSUU15}
K.~Yamanaka, T.~Horiyama, D.~G. Kirkpatrick, Y.~Otachi, T.~Saitoh, R.~Uehara,
  and Y.~Uno.
\newblock Swapping colored tokens on graphs.
\newblock In F.~Dehne, J.~Sack, and U.~Stege, editors, {\em Algorithms and Data
  Structures - 14th International Symposium, {WADS} 2015}, volume 9214 of {\em
  LNCS}, pages 619--628. Springer, 2015.

\bibitem{Zhang99}
Louxin Zhang.
\newblock Optimal bounds for matching routing on trees.
\newblock {\em {SIAM} J. Discrete Math.}, 12(1):64--77, 1999.

\end{thebibliography}

\newpage

\appendix

\section{Proof that Token Swapping on Lollipop Graphs Is in P}\label{sec:lollipop}
This appendix gives a proof that Algorithm~\ref{alg:lollipop} computes an optimal swapping sequence on lollipop graphs.
We will give an evaluation function on configurations on lollipop graphs $L_{m,n}$ such that
any swap changes the value by one, every swap by the algorithm reduces the value by one, and the value is $0$ if and only if the configuration is the identity.
Algorithm~\ref{alg:lollipop} first moves non-negative tokens to the goal vertices on the path and then moves negative ones in the clique.
The number of swaps needed to move a token $j \in \{0,\dots,n\}$ is evaluated by 
\[
\pi(f,j) = \begin{cases}
j+1	& \text{ if $f^{-1}(j)<0$,}
\\
 \min(j+1,\, \mrm{Inv}(f,j) )
	& \text{ if $f^{-1}(j) \ge 0$,}
 \end{cases}
 \]
 where
 \[
 \mrm{Inv}(f,j) =  |\{\, i \mid i < j \text{ and } f^{-1}(i) > f^{-1}(j)\,\}|
\,. \]
So it takes 
\[
	\pi(f) = \sum_{j = 0}^n \pi(f,j)
\]
swaps to move the non-negative tokens to the goal vertices in total.
We then move the negative tokens in the clique.
For a configuration $f'$ such that $f'(j) = j$ for all $j \ge 0$, the number of swaps needed is
\[
	\nu(f') = m - |\Lambda_{f'}| \text{ where } \Lambda_{f'} = \{\, [i]_{f'} \mid i < 0 \,\}
\,.\text{ (See e.g.~\cite{Jerrum85})}\] 
We need to evaluate $|\Lambda_{f'}|$ for $f'=f\vec{S}$ where $\vec{S}$ moves all the non-negative tokens to their goals.
Let us call an injection $f$ from $\{-m,\dots,k\}$ to $\{-m,\dots,n\}$ for some $k \in \{-1,0,\dots,n\}$ a \emph{pseudo configuration}
 if the range of $f$ includes $\{-m,\dots,-1\}$.
For notational simplicity, a pseudo configuration $f$ will often be identified with the sequence $\lrangle{f(-1),\dots,\linebreak[1]f(-m),f(0),\dots,f(k)}$ or the sequence pair $(\lrangle{f(-1),\dots,f(-m)};\,\lrangle{f(0),\dots,f(k)})$.
For a pseudo configuration $(\vec{i};\, \vec{j})$ where $\vec{i}=\lrangle{ f(-1),\dots,f(-m) }$ and $\vec{j} = \lrangle{ f(0),f(1),\dots,f(k) }$,
we define $\nu$ recursively on $|\vec{j}|$ by
\[
\nu(\vec{i};\vec{j}) =\begin{cases}
m-|\Lambda_f|	& \text{ if $\vec{j}$ is empty,}
\\
 \nu(\vec{i};\lrangle{f(1),\dots,f(k)})	& \text{ if $c < f(0)$,}
\\
 \nu(\vec{i}[f(0)/c];\lrangle{f(1),\dots,f(k)})	& \text{ if $c > f(0)$,}
\end{cases}
\]
where $c = \max (\vec{i})$ and $[a/b]$ replaces $b$ by $a$.
That is, 
\[
f[a/b](i) = \begin{cases} f(i) & \text{if $f(i) \neq b$,}
\\	a & \text{if $f(i) = b$.}
\end{cases}
\]
Note that if $c = \max(\vec{i}) > f(0)$, then $c \ge 0$ and thus $(\vec{i}[f(0)/c];\lrangle{f(1),\dots,f(k)})$ is a pseudo configuration.
Our evaluation function $\Phi$ is given as 
\[
	\Phi(f) = \pi(f) + \nu(f)\,.
\]
Note that $\nu$ is defined on pseudo configurations but $\pi$ and $\Phi$ are defined on (proper) configurations.
It is clear that $\Phi(f) \ge 0$ for any configuration and the equation holds if and only if $f$ is the identity.

\begin{lemma}\label{lem:lp_shorten}
For any $\vec{i},\vec{j}$, there is a sequence $\vecp{i}$ consisting of the $m$ smallest elements from $\vec{i} \cdot \vec{j}$,
where $\vec{i}\cdot\vec{j}$ denotes the concatenation of $\vec{i}$ and $\vec{j}$, such that for any $\vec{k}$
\[
\nu(\vec{i};\,\vec{j}\cdot \vec{k}) = \nu(\vecp{i};\, \vec{k})\,,
\]
provided that $(\vec{i};\,\vec{j}\cdot \vec{k})$ is a pseudo configuration.
\end{lemma}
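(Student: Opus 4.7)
The plan is to induct on the length of $\vec{j}$, paralleling the recursive definition of $\nu$. The base case $|\vec{j}|=0$ is trivial: take $\vecp{i}=\vec{i}$, which is already the $m$ smallest of $\vec{i}$ itself, and both sides equal $\nu(\vec{i};\,\vec{k})$.

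For the inductive step, write $\vec{j}=\lrangle{a}\cdot\vec{j}_0$ and let $c=\max(\vec{i})$. The definition of $\nu$ either drops the head $a$ (when $c<a$, giving $\nu(\vec{i};\,\vec{j}\cdot\vec{k})=\nu(\vec{i};\,\vec{j}_0\cdot\vec{k})$) or swaps the unique occurrence of $c$ in $\vec{i}$ for $a$ (when $c\ge a$, giving $\nu(\vec{i};\,\vec{j}\cdot\vec{k})=\nu(\vec{i}[a/c];\,\vec{j}_0\cdot\vec{k})$; the tie $c=a$ fits here since $\vec{i}[a/c]=\vec{i}$). Applying the induction hypothesis to the shorter middle sequence $\vec{j}_0$, taking $\vec{i}$ or $\vec{i}[a/c]$ respectively as the first coordinate, produces a sequence $\vecp{i}$ that consists of the $m$ smallest elements of $\vec{i}\cdot\vec{j}_0$ or of $\vec{i}[a/c]\cdot\vec{j}_0$ and satisfies the target equation with $\vec{k}$ appended.

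It remains to check that this $\vecp{i}$ in fact gathers the $m$ smallest elements of $\vec{i}\cdot\vec{j}$. In the case $c<a$, the $m$ entries of $\vec{i}$ are all strictly smaller than $a$, so $a$ cannot lie among the $m$ smallest of $\vec{i}\cdot\vec{j}$; hence the $m$-smallest multisets of $\vec{i}\cdot\vec{j}$ and of $\vec{i}\cdot\vec{j}_0$ coincide. In the case $c>a$, the multisets $\vec{i}\cdot\vec{j}$ and $\vec{i}[a/c]\cdot\vec{j}_0$ differ by exactly one copy of $c$, present only in the former: injectivity of the pseudo configuration ensures $c$ occurs uniquely in $\vec{i}$ and not at all in $\vec{j}_0$. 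Since $\vec{i}[a/c]$ contributes $m$ values, all strictly less than $c$, the $m$-th smallest element of $\vec{i}[a/c]\cdot\vec{j}_0$ lies strictly below $c$, so reinserting the extra copy of $c$ does not affect which elements are selected into the $m$-smallest multiset.

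I do not anticipate any real obstacle; the lemma is essentially bookkeeping that $\nu$'s recursion maintains the invariant ``$\vec{i}$ currently stores the $m$ smallest elements processed so far''. The one point requiring care is that $\vecp{i}$ must be produced as an ordered sequence via iterated one-step replacements $\vec{i}\mapsto\vec{i}[a/c]$, not merely as a multiset: the subsequent evaluation $\nu(\vecp{i};\,\vec{k})$ depends on $\vecp{i}$ through its maximum, which is consulted at each further step, so the construction of $\vecp{i}$ must commit to a specific placement of each incoming element rather than treating the coordinates interchangeably.
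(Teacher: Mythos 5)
Your proposal is correct and takes essentially the same route the paper indicates — induction on $|\vec{j}|$ unwinding the recursive definition of $\nu$ — with the paper's proof consisting of exactly that one-line invocation and your text merely supplying the elided bookkeeping. Two minor quibbles: the tie case $c=a$ never arises because a pseudo configuration is injective (so $a$ cannot already occur in $\vec{i}$, and the paper's definition rightly omits it), and in your closing remark the reason ordering matters is really the base case $m-|\Lambda_f|$, which is sensitive to the orbit structure induced by where each token sits on the clique, not the maximum comparisons along the way.
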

\begin{proof}
The lemma can be shown by induction on $|\vec{j}|$ just following the definition of $\nu$.
\end{proof}

\begin{lemma}\label{lem:remove}
If\/ $\vec{i}\cdot \vec{j}_1$ contains $m$ or more tokens smaller than $a \ge 0$, then 
\[
	\nu(\vec{i};\, \vec{j}_1 \cdot a \cdot \vec{j}_2) = \nu(\vec{i};\, \vec{j}_1 \cdot  \vec{j}_2)\,,
\]
provided that $(\vec{i};\,\vec{j}_1 \cdot a \cdot \vec{j}_2)$ is a pseudo configuration.
\end{lemma}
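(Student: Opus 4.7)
The plan is to obtain Lemma~\ref{lem:remove} essentially as a corollary to Lemma~\ref{lem:lp_shorten} by ``pre-processing'' the prefix $\vec{j}_1$ on both sides of the desired equality, so that the non-negative token $a$ is compared against the $m$ smallest available tokens rather than against the original $\max(\vec{i})$. Since $\vec{i}\cdot\vec{j}_1$ contains at least $m$ tokens smaller than $a$, after this pre-processing the left-hand configuration will drop $a$ trivially by the first clause of the recursive definition of $\nu$, and this will exactly match the value on the right-hand side.

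Concretely, first I would apply Lemma~\ref{lem:lp_shorten} to the pair $(\vec{i};\vec{j}_1)$ to obtain a sequence $\vecp{i}$ consisting of the $m$ smallest elements of $\vec{i}\cdot\vec{j}_1$ such that $\nu(\vec{i};\vec{j}_1\cdot\vec{k})=\nu(\vecp{i};\vec{k})$ for every $\vec{k}$. Note that this $\vecp{i}$ is a function of $\vec{i}$ and $\vec{j}_1$ alone, so the same $\vecp{i}$ serves both instances we want to rewrite. The hypothesis that at least $m$ elements of $\vec{i}\cdot\vec{j}_1$ are strictly less than $a$ then forces every entry of $\vecp{i}$ to be less than $a$, and in particular $\max(\vecp{i})<a$.

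Next I would specialize $\vec{k}=a\cdot\vec{j}_2$ to get
\[
\nu(\vec{i};\vec{j}_1\cdot a\cdot\vec{j}_2)=\nu(\vecp{i};a\cdot\vec{j}_2),
\]
and specialize $\vec{k}=\vec{j}_2$ to get
\[
\nu(\vec{i};\vec{j}_1\cdot\vec{j}_2)=\nu(\vecp{i};\vec{j}_2).
\]
Since $\max(\vecp{i})<a$, the first defining clause of $\nu$ applies to $(\vecp{i};a\cdot\vec{j}_2)$, yielding $\nu(\vecp{i};a\cdot\vec{j}_2)=\nu(\vecp{i};\vec{j}_2)$. Chaining the three identities closes the argument.

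The only subtle points I expect are bookkeeping ones: verifying that both $(\vec{i};\vec{j}_1\cdot a\cdot\vec{j}_2)$ and $(\vec{i};\vec{j}_1\cdot\vec{j}_2)$ are genuine pseudo configurations so that Lemma~\ref{lem:lp_shorten} applies (the second follows from the first because removing a non-negative token $a$ does not disturb the set of negative tokens, which is precisely the requirement on the range), and observing that the sequence $\vecp{i}$ produced by Lemma~\ref{lem:lp_shorten} depends only on $\vec{i}$ and $\vec{j}_1$, so the \emph{same} $\vecp{i}$ can legitimately be used on both sides. Beyond these checks the proof is a short rewrite, and there is no substantial combinatorial obstacle to overcome.
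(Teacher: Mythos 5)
Your argument is correct, and it takes a somewhat different route from the paper's. The paper proves Lemma~\ref{lem:remove} by a fresh induction on $|\vec{j}_1|$, unfolding the recursive definition of $\nu$ one step at a time and checking the hypothesis ``at least $m$ tokens of $\vec{i}\cdot\vec{j}_1$ are smaller than $a$'' is preserved across each unfolding; the base case $\vec{j}_1=\lrangle{}$ then collapses $a$ by the first recursive clause. You instead reuse Lemma~\ref{lem:lp_shorten} to collapse the whole prefix $\vec{j}_1$ at once into a sequence $\vecp{i}$ of the $m$ smallest tokens of $\vec{i}\cdot\vec{j}_1$ (valid for all continuations $\vec{k}$ simultaneously, since $\vecp{i}$ depends only on $(\vec{i},\vec{j}_1)$), and then apply the trivial clause $\max(\vecp{i})<a$ to drop $a$. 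The two arguments rely on the same unfolding mechanism, but yours delegates the induction to the already-proved Lemma~\ref{lem:lp_shorten}, which makes the proof shorter and avoids restating the invariant; the paper's version is self-contained and slightly more explicit about the step-by-step bookkeeping. Your remarks on the bookkeeping subtleties are also sound: dropping the non-negative value $a$ from the range preserves the pseudo-configuration property because $\{-m,\dots,-1\}$ still lies in the range, and since there are only $m$ negative tokens, any negative token appearing in $\vec{i}\cdot\vec{j}_1$ necessarily belongs to its $m$ smallest elements, so $\vecp{i}$ together with $\vec{j}_2$ (or $a\cdot\vec{j}_2$) continues to cover $\{-m,\dots,-1\}$.
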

\begin{proof}
By induction on $|\vec{j}_1|$.
\end{proof}


Now we are going to prove that any possible swap on the graph changes the value of $\Phi$ by one.
We have three cases depending on where a swap takes place.
First we consider the case where a swap takes inside the clique.
\begin{lemma}\label{lem:lp_negativeswap}
Let $f=(\vec{i};\vec{j})$ and $g=(\vecp{i};\vec{j})$ be pseudo configurations
 such that $\vecp{i}=\vec{i}[a/b,b/a]$ for some distinct tokens $a,b$.
Then
\[
	|\nu(f) - \nu(g)|=1
\,.\]
\end{lemma}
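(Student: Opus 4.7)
The plan is to induct on $|\vec{j}|$, following the recursive definition of $\nu$ and tracking how the transposition $[a/b,b/a]$ evolves under each step of that recursion. A preliminary observation I would make is that $\vec{i}'$ is obtained from $\vec{i}$ merely by exchanging the positions of the two tokens $a$ and $b$, so the multisets of $\vec{i}$ and $\vec{i}'$ coincide. In particular $c := \max(\vec{i}) = \max(\vec{i}')$, which guarantees that the computations of $\nu(f)$ and $\nu(g)$ enter the same branch of the recursion at each shared step and that $(\vec{i}';\vec{j})$ is automatically a pseudo configuration whenever $(\vec{i};\vec{j})$ is.

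For the base case $|\vec{j}|=0$, both $f$ and $g$ are bijections on $\{-m,\dots,-1\}$, and $g = f \circ \tau$ where $\tau$ is the transposition swapping $f^{-1}(a)$ and $f^{-1}(b)$. Here I would invoke the standard fact from the theory of the symmetric group that right-multiplication by a transposition changes the number of cycles of a permutation by exactly one, either merging two cycles (when $a$ and $b$ lie in distinct $f$-orbits) or splitting one (when they share an orbit). Since the domain is entirely negative, $\Lambda_f$ and $\Lambda_g$ contain \emph{all} orbits, so $||\Lambda_f|-|\Lambda_g||=1$ and hence $|\nu(f)-\nu(g)|=1$.

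For the inductive step, writing $\vec{j} = \lrangle{f(0)} \cdot \vec{j}_1$, I would split on whether $c < f(0)$ or $c > f(0)$; equality is excluded by injectivity of the pseudo configuration. In the former case, the recursion just drops $f(0)$ on both sides, so the claim follows from the inductive hypothesis applied to $(\vec{i};\vec{j}_1)$ and $(\vec{i}';\vec{j}_1)$ with the same pair $(a,b)$. In the latter case, both sides apply the substitution $[f(0)/c]$, and the task is to show that $\vec{i}_1 := \vec{i}[f(0)/c]$ and $\vec{i}_1' := \vec{i}'[f(0)/c]$ still differ by a transposition of two distinct tokens, so that induction applies. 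If $c \notin \{a,b\}$ the two substitutions act on disjoint tokens and commute, so $\vec{i}_1' = \vec{i}_1[a/b,b/a]$; if $c=a$ (the case $c=b$ being symmetric), a direct inspection of the two positions originally holding $a$ and $b$ shows that $\vec{i}_1'$ is obtained from $\vec{i}_1$ precisely by swapping the tokens $b$ and $f(0)$.

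The main obstacle lies in this last subcase, because the inductive hypothesis must then be reapplied to a pair of tokens different from the original $(a,b)$. I would need to verify that $b$ and $f(0)$ are genuinely distinct tokens both appearing in $\vec{i}_1$: $f(0) \neq b$ by injectivity of $f$, and $f(0) \neq a$ because $f(0) < c = a$, so replacing $a$ by $f(0)$ in $\vec{i}$ preserves injectivity and leaves $b$ untouched. The inductive hypothesis then yields $|\nu(\vec{i}_1;\vec{j}_1)-\nu(\vec{i}_1';\vec{j}_1)| = 1$, completing the induction.
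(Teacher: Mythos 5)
Your proof is correct and follows essentially the same strategy as the paper's: induction on $|\vec{j}|$, with the base case reduced to the standard symmetric-group fact that right-multiplication by a transposition changes the number of cycles by exactly one. In fact you are more careful than the paper at the inductive step — where the paper asserts the claim ``follows the induction hypothesis immediately,'' you correctly identify that when $c = \max(\vec{i}) \in \{a,b\}$ the substitution $[f(0)/c]$ changes the tracked pair of swapped tokens (from $(a,b)$ to, say, $(b,f(0))$), and you verify that the new pair is still a pair of distinct tokens so the induction hypothesis remains applicable.
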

\begin{proof}
We show this by induction on $|\vec{j}|$. 
If $\vec{j}$ is not empty, the claim follows the induction hypothesis immediately.
If $\vec{j}$ is empty, $f$ and $g$ are configurations on the clique of $\{-1,\dots,-m\}$.

\prg{Case 1.} Suppose $[a]_f=[b]_f$.
Let $k = |[a]_f|$ and $b = f^j(a)$.
Then $g^i(a) = f^i(a)$ for $i < j$, $g^j(a) = a$, $g^i(b) = f^{j+i}(a)$ for $i < k-j$ and $g^{k-j}(b) = b$.
That is, $[a]_f=[a]_g \cup [b]_g$, $[a]_g \neq [b]_g$ and $|\Lambda_g| = |\Lambda_f|+1$. Hence $\nu(g)=\nu(f)-1$.

\prg{Case 2.} Suppose $[a]_f \neq [b]_f$.
Let $k_a = |[a]_f|$ and $k_b = |[b]_f|$.
Then $g^i(a) = f^i(a)$ for $i \in \{0,\dots, k_a-1\}$, $g^{k_a+i}(a) = f^i(b)$ for $i \in \{0, \dots, k_b-1\}$
and $g^{k_a+k_b}(a)=a$.
That is, $[a]_g = [b]_g = [a]_f \cup [b]_f$ and  $|\Lambda_g| = |\Lambda_f|-1$.  Hence $\nu(g)=\nu(f)+1$.
\end{proof}

\begin{corollary}\label{cor:lp_negativeswap}
Suppose that $g=fe$ for some swap $e \subseteq \{-1,\dots,-m\}$.
Then $|\Phi(g)-\Phi(f)|=1$.
\end{corollary}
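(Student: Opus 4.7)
The plan is to exploit the decomposition $\Phi = \pi + \nu$ and argue separately that a swap strictly inside $\{-1,\dots,-m\}$ leaves $\pi$ unchanged and changes $\nu$ by exactly $1$, so that the full change of $\Phi$ equals that of $\nu$, which is supplied directly by Lemma~\ref{lem:lp_negativeswap}.

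First I would show $\pi(f) = \pi(g)$. Let $e = \{p,q\} \subseteq \{-1,\dots,-m\}$, $a = f(p)$, $b = f(q)$, so that $g$ differs from $f$ only at positions $p,q$ where the tokens $a,b$ are exchanged. For any token $j \ge 0$ one checks the two cases of the definition of $\pi(f,j)$ separately. If $f^{-1}(j) < 0$, then $g^{-1}(j) \in \{p,q\} \cup \{f^{-1}(j)\} \subseteq \{-1,\dots,-m\}$, so $\pi(g,j) = j+1 = \pi(f,j)$. If $f^{-1}(j) \ge 0$, then $j \notin \{a,b\}$, so $g^{-1}(j) = f^{-1}(j)$; moreover for every $i < j$ with $i \notin \{a,b\}$, $f^{-1}(i) = g^{-1}(i)$, and for $i \in \{a,b\}$ both $f^{-1}(i)$ and $g^{-1}(i)$ are negative, hence $\le f^{-1}(j)$, so neither contributes to $\mrm{Inv}(\cdot,j)$ before or after the swap. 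Therefore $\mrm{Inv}(f,j) = \mrm{Inv}(g,j)$ and $\pi(f,j) = \pi(g,j)$.

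Second, viewing $f$ and $g$ as pseudo-configurations, we have $f = (\vec{i};\vec{j})$ and $g = (\vecp{i};\vec{j})$, where $\vec{j} = \lrangle{f(0),\dots,f(n)}$ is common to both and $\vecp{i}$ is obtained from $\vec{i} = \lrangle{f(-1),\dots,f(-m)}$ by exchanging the entries $a$ and $b$, i.e., $\vecp{i} = \vec{i}[a/b,b/a]$. Lemma~\ref{lem:lp_negativeswap} then gives $|\nu(f) - \nu(g)| = 1$, and combining with $\pi(f) = \pi(g)$ yields $|\Phi(f) - \Phi(g)| = 1$.

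The only step that requires any care is verifying the invariance of $\mrm{Inv}(\cdot,j)$, since the definition of $\pi$ branches on the sign of $f^{-1}(j)$. Once one observes that the swap keeps both exchanged tokens inside $\{-1,\dots,-m\}$, so their positions remain below any nonnegative threshold, the inversion count is obviously unaffected, and everything else is bookkeeping.
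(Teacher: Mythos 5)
Your proof is correct and follows the same approach as the paper, which simply notes that $\pi(f)=\pi(g)$ by definition and then invokes Lemma~\ref{lem:lp_negativeswap}. You have merely filled in the routine verification (both exchanged tokens stay on negative vertices, so they never contribute to $\mrm{Inv}(\cdot,j)$ for any $j\ge 0$) that the paper leaves implicit under ``clearly.''
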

\begin{proof}
Clearly $\pi(f)=\pi(g)$ by definition. The claim follows Lemma~\ref{lem:lp_negativeswap}.
\end{proof}

The following lemma is concerned with the value of $\Phi$ when a swap takes at the joint of the clique and the path.
\begin{lemma}\label{lem:lp_borderswap}
If $g = f \{h,0\}$ with $h < 0$, then $|\Phi(g)-\Phi(f)|=1$.
\end{lemma}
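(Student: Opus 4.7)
Let $a := f(0)$ and $b := f(h)$; the swap exchanges these two values. The plan is to show $\Delta_\Phi := \Phi(g) - \Phi(f) \in \{-1, +1\}$ by writing $\Delta_\Phi = \Delta_\pi + \Delta_\nu$, analyzing each term, and then case-splitting on the signs of $a, b$ and on whether $\max(\vec{i}_f)$ exceeds $a$ (or $b$).

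For $\Delta_\pi$, observe that every token $j \ge 0$ with $j \notin \{a, b\}$ keeps its position, and a token sitting at position $0$ or $h < 0$ can never contribute to $\mrm{Inv}(\cdot, j)$ when $j$ is on the path (since then $f^{-1}(j) \ge 1 > 0 > h$). Thus $\Delta_\pi$ collapses to a sum of at most two terms, from $j = a$ (if $a \ge 0$) and $j = b$ (if $b \ge 0$); a direct count gives $\pi(f, a) = a + \min(1, q_a)$ and $\pi(g, b) = b + \min(1, q_b)$, while $\pi(g, a) = a + 1$ and $\pi(f, b) = b + 1$, where $q_a$ and $q_b$ denote the numbers of clique tokens in $\vec{i}_f$ and $\vec{i}_g$ exceeding $a$ and $b$ respectively. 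For $\Delta_\nu$, I apply the recursive definition once on each side: the resulting pseudo-configurations either coincide (so $\Delta_\nu = 0$) or share a common multiset but differ by the transposition of two clique positions, namely $h$ and the position from which the first maximum was extracted. Because the shared multiset forces the same value to be the current max at every subsequent step and the replace operation commutes with that fixed transposition, the transposition persists to the end of the simulation, and Lemma~\ref{lem:lp_negativeswap} then yields $\Delta_\nu = \pm 1$.

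The cases unfold as follows. If $a, b < 0$, a pigeonhole argument ($m+1$ positions $\le 0$, only $m$ negative values) forces $\max(\vec{i}_f) \ge 0$, so both simulations ``replace'', giving $\Delta_\pi = 0$ and $\Delta_\Phi = \Delta_\nu = \pm 1$. If exactly one of $a, b$ is non-negative, say $a \ge 0 > b$, then either $\max(\vec{i}_f) > a$ (both simulations replace, yielding $(\Delta_\pi, \Delta_\nu) = (0, \pm 1)$) or $\max(\vec{i}_f) < a$ (the $f$-simulation drops $a$ while the $g$-simulation first replaces $a$, now the max of $\vec{i}_g$, by $b$; both updated cliques reconcile to $\vec{i}_f$, giving $(\Delta_\pi, \Delta_\nu) = (+1, 0)$). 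If both $a, b \ge 0$, further subcases on whether $b = \max(\vec{i}_f)$ and on how $a$ compares to the second maximum combine the $\pm 1$ contributions from $j = a$ and $j = b$ in $\Delta_\pi$ with the matching $\Delta_\nu$ to give $|\Delta_\Phi| = 1$. The main technical obstacle is this last case: when $b$ is itself $\max(\vec{i}_f)$ and $a$ lies between the first and second maxima of $\vec{i}_f$, the two simulations appear to take different first steps, and I must verify carefully (via the multiset-coincidence observation above) that the updated cliques nevertheless coincide, so that $\Delta_\nu = 0$ offsets the $\Delta_\pi = \pm 1$ contributed by $j = b$.
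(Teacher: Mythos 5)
Your proof is correct and follows essentially the same strategy as the paper: split on the signs of the two swapped tokens, analyze $\Delta_\pi$ by counting tokens smaller than each swapped token among the clique entries (your $q_a$, $q_b$ bookkeeping is exactly the paper's inequalities $\pi(\cdot,\cdot) = \cdot + 1$ vs.\ $\cdot$ in disguise), and reduce $\Delta_\nu$ to Lemma~\ref{lem:lp_negativeswap} after unrolling one step of the $\nu$ recursion. The one stylistic difference is that you give a sketch of why the transposition ``persists'' through the $\nu$ simulation, which is redundant — once the two post-one-step pseudo-configurations are shown to have identical path parts and clique parts differing by a single transposition, Lemma~\ref{lem:lp_negativeswap} applies verbatim without any further tracking.
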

\begin{proof}
We may assume without loss of generality that $h=-1$ for the symmetry.
Let 
\begin{align*}
f &= (a \cdot \vec{i};\, b \cdot \vec{j})\,,
\\
g &= (b \cdot \vec{i};\, a \cdot \vec{j})\,.
\end{align*}
Without loss of generality we assume $a > b$.

\prg{Case 1.}
Suppose $a,b < 0$.
Clearly $\pi(f)=\pi(g)$ and $\max(\vec{i}) \ge 0$.
We have  $|\nu(f)-\nu(g)| = 1$ by applying Lemma~\ref{lem:lp_negativeswap} to the fact
\begin{align*}
\nu(f) &= \nu(a \cdot \vec{i}[b/c];\, \vec{j})\,,
\\
\nu(g) &= \nu(b \cdot \vec{i}[a/c];\, \vec{j})\,,
\end{align*}
where $c = \max(\vec{i})$.

\prg{Case 2.}
Suppose $a \ge 0 > b$.

\prg{Case 2.1.} Suppose $a > \max(\vec{i})$.
We have
\begin{align*}
\nu(f) &=\nu(g)= \nu(b \cdot \vec{i};\, \vec{j})\,.
\end{align*}

All the $m$ elements of $b \cdot \vec{i}$ are smaller than $a$,
which are among $m+a$ tokens smaller than $a$.
Therefore, $\vec{j}$ contains exactly $a$ tokens smaller than $a$, which means $\pi(g,a)=a$.
On the other hand, $\pi(f,a)=a+1$ by definition.
For all other positive tokens $k$, $\pi(f,k)=\pi(g,k)$ holds.

All in all, $\Phi(f)-\Phi(g)=1$.

\prg{Case 2.2.} Suppose $\max(\vec{i}) > a$. Let $c =\max(\vec{i})$.
We have
\begin{align*}
\nu(f) &= \nu(a \cdot \vec{i}[b/c]; \vec{j})\,,
\\
\nu(g) &= \nu(b \cdot \vec{i}[a/c]; \vec{j})\,,
\end{align*}
and $|\nu(f)-\nu(g)|=1$ by Lemma~\ref{lem:lp_negativeswap}.

It remains to show  $\pi(f,k)=\pi(g,k)$ for all positive tokens $k$, which is clear for $k \neq a$.
By definition $\pi(f,a)=a+1$.
The fact $c>a$ implies at most $m-1$ tokens in $b \cdot \vec{i}$ are smaller than $a$,
which are among $m+a$ tokens smaller than $a$.
Hence $\vec{j}$ contains at least $a+1$ tokens smaller than $a$, which means $\pi(g,a)=a+1$.

All in all, $|\Phi(f)-\Phi(g)|=1$.

\prg{Case 3.}
Suppose $a > b \ge 0$.
This case is almost identical to Case 2 except that we need to confirm $\pi(f,b) = \pi(g,b)$ in addition.
The fact $a>b$ implies at most $m-1$ tokens in $a \cdot \vec{i}$ are smaller than $b$,
and $\vec{j}$ contains at least $b+1$ tokens smaller than $b$, which means $\pi(f,b)=\pi(g,b)=b+1$.
\end{proof}

The last case we consider is when a swap takes on the path.
\begin{lemma}\label{lem:lp_positiveswap}
If $g = f\{k,k+1\}$ for some $k \ge 0$,
then $|\Phi(g)-\Phi(f)|=1$.
\end{lemma}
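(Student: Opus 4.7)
The approach mirrors the proof of Lemma~\ref{lem:lp_borderswap}: set $a=f(k)$ and $b=f(k+1)$, so that $g(k)=b$ and $g(k+1)=a$, and analyze the changes in $\pi$ and $\nu$ separately, then verify that their sum has absolute value $1$.

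\emph{Change in $\pi$.} First I would show that for any $j\ge 0$ with $j\notin\{a,b\}$, $\pi(f,j)=\pi(g,j)$. The position $f^{-1}(j)=g^{-1}(j)$ is unchanged, so it suffices to check $\mathrm{Inv}(f,j)=\mathrm{Inv}(g,j)$. The only indices $i$ for which $f^{-1}(i)\neq g^{-1}(i)$ are $i\in\{a,b\}$, and for those the conditions $f^{-1}(a)>f^{-1}(j)$ and $g^{-1}(a)>g^{-1}(j)$ (resp.\ for $i=b$) can differ only when $f^{-1}(j)\in\{k,k+1\}$, which is excluded by $j\neq a,b$. So any change in $\pi(f)$ versus $\pi(g)$ is confined to the terms for $j\in\{a,b\}\cap\{0,\dots,n\}$; a direct computation of $\mathrm{Inv}(\cdot,a)$ and $\mathrm{Inv}(\cdot,b)$ (analogous to the inversion count in the proof of Lemma~\ref{lem:lp_borderswap}, Cases 2--3) then yields $|\pi(f)-\pi(g)|\in\{0,1\}$.

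\emph{Change in $\nu$.} Apply Lemma~\ref{lem:lp_shorten} to the common prefix $\langle f(0),\dots,f(k-1)\rangle$ to obtain $\vecp{i}$ such that
\[
\nu(f)=\nu(\vecp{i};\langle a,b\rangle\cdot\vec{r}),\qquad \nu(g)=\nu(\vecp{i};\langle b,a\rangle\cdot\vec{r}),
\]
where $\vec{r}=\langle f(k+2),\dots,f(n)\rangle$. Let $c_1>c_2$ be the two largest entries of $\vecp{i}$. Tracing the recursive definition of $\nu$ step by step, I distinguish: (i) at most one of $a,b$ is below $c_1$; (ii) both $a,b$ lie strictly between $c_2$ and $c_1$; (iii) both $a,b<c_2$. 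In cases (i) and (ii), processing $\langle a,b\rangle$ and $\langle b,a\rangle$ produces the same intermediate sequence (the larger of $a,b$ simply bypasses, and in (ii) both orders leave $\min(a,b)$ in the slot of $c_1$), whence $\nu(f)=\nu(g)$. In case (iii), both orders end with $\vecp{i}$ updated at the slots of $c_1$ and $c_2$, but with the values $a,b$ transposed between those two slots; using Lemma~\ref{lem:lp_negativeswap} on the resulting sequences (applied after continuing the processing of $\vec{r}$, with tokens that get permanently pushed out handled via Lemma~\ref{lem:remove}) gives $|\nu(f)-\nu(g)|=1$.

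\emph{Combining.} The last step is to verify that in every case $|\Phi(f)-\Phi(g)|=1$. In cases (i) and (ii) the inversion computation shows $|\pi(f)-\pi(g)|=1$ (the sub-case of Sub-case IIb-style reasoning forces exactly one of the two branches of the $\min$ in the definition of $\pi$ to shift by one), while $\nu(f)=\nu(g)$. In case (iii) both $a,b$ are small enough that (whenever non-negative) they contribute to $\pi$ via the $\mathrm{Inv}$-branch, and a direct check shows $\pi(f,a)=\pi(g,a)$ and $\pi(f,b)=\pi(g,b)$; combined with $|\nu(f)-\nu(g)|=1$ this gives the required change.

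The main obstacle is precisely this coordination: one has to show that whenever $\nu$ is unchanged, $\pi$ shifts by exactly $1$, and whenever $\nu$ changes (necessarily by $\pm 1$), $\pi$ does not change. This is what prevents the two quantities from both changing and either cancelling or producing a jump of magnitude $\ge 2$, and it is verified by going through the three $\nu$-cases together with the refined sub-cases on the signs of $a,b$ (mirroring Cases 1--3 in the proof of Lemma~\ref{lem:lp_borderswap}).
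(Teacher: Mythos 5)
Your overall plan matches the paper's own proof: decompose $\Phi = \pi + \nu$, apply Lemma~\ref{lem:lp_shorten} to bring both $\nu(f)$ and $\nu(g)$ into the standard form $\nu(\vecp{i};\,\lrangle{a,b}\cdot\vec{r})$ and $\nu(\vecp{i};\,\lrangle{b,a}\cdot\vec{r})$, and then trace the recursion of $\nu$ through the transposition, splitting on where $a, b$ fall relative to the largest entries of $\vecp{i}$. The paper instead labels its cases by comparing $\mathrm{Inv}(f,a)$ with $a$ (Cases~2.1, 2.2, 2.3), but this is the same partition in disguise: $c_1 < a$ is equivalent to $\mathrm{Inv}(f,a)\le a$, $c_2 < a < c_1$ to $\mathrm{Inv}(f,a)=a+1$, and $a < c_2$ to $\mathrm{Inv}(f,a)\ge a+2$. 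Also like the paper, you correctly observe that $\pi(\cdot,j)$ is unaffected for $j\notin\{a,b\}$ and that only the \emph{larger} of $a,b$ can have its inversion count change, by exactly one.

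There are, however, two slips you should repair before writing this out in full. First, your three-way split (i)/(ii)/(iii) is not exhaustive: the sub-case $b < c_2 < a < c_1$ (both below $c_1$, but only the larger between $c_2$ and $c_1$) is not covered by any of the three as stated. It still yields $\nu(f)=\nu(g)$ (tracing the recursion shows both orders end in $\vecp{i}[b/c_1]$), and the $\pi$ change is $1$ since $\mathrm{Inv}(f,a)=a+1$, so it coordinates correctly, but you need to include it. Second, and more importantly, your final justification of case~(iii) is stated backwards: you write that ``both $a,b$ are small enough that (whenever non-negative) they contribute to $\pi$ via the $\mathrm{Inv}$-branch, and ... $\pi(f,a)=\pi(g,a)$.'' If $a$ actually sat in the $\mathrm{Inv}$-branch of $\min(a{+}1,\mathrm{Inv})$, then since $\mathrm{Inv}(g,a)=\mathrm{Inv}(f,a)-1$ you would get $\pi(f,a)\ne\pi(g,a)$; what actually happens in case~(iii) is that $a<c_2$ forces $\mathrm{Inv}(f,a)\ge a+2$ (and $\mathrm{Inv}(g,a)\ge a+1$), so \emph{both are capped at $a+1$} --- it is the cap branch being active, not the $\mathrm{Inv}$-branch, that makes $\pi$ invariant while $\nu$ moves by one. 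With those two points fixed, the argument carries through and agrees with the paper's.
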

\begin{proof}
Let
\begin{gather*}
f = (\vec{i};\, \vec{j}_1 \cdot \lrangle{a,b}\cdot \vec{j}_2),
\\
g = (\vec{i};\, \vec{j}_1 \cdot \lrangle{b,a}\cdot \vec{j}_2).
\end{gather*}
By Lemma~\ref{lem:lp_shorten}, there exists $\vecp{i}$ consisting of the $m$ smallest tokens from $\vec{i}\cdot\vec{j}_1$ such that
\begin{align*}
	\nu(f) &= \nu(\vecp{i};\, \lrangle{a,b}\cdot \vec{j}_2)\,,
\\	\nu(g) &= \nu(\vecp{i};\, \lrangle{b,a}\cdot \vec{j}_2)\,.
\end{align*}
Without loss of generality we assume $a > b$.

\prg{Case 1.}
Suppose $a,b < 0$.
Clearly $\pi(f)=\pi(g)$.
For the two largest tokens $c$ and $d$ in $\vecp{i}$ with $c > d$, we have
\begin{align*}
\nu(f) &= \nu(\vecp{i}[a/c,b/d]; \vec{j}_2)\,,
\\
\nu(g) &= \nu(\vecp{i}[b/c,a/d]; \vec{j}_2)\,.
\end{align*}
Lemma~\ref{lem:lp_negativeswap} implies $|\Phi(f)-\Phi(g)|=1$.

\prg{Case 2.}
Suppose $a \ge 0$.
We have $\mrm{Inv}(f,a)=\mrm{Inv}(g,a)+1$.

\prg{Case 2.1.}
Suppose $\mrm{Inv}(f,a) \le a$.
In this case, we have $\pi(g,a)=\mrm{Inv}(g,a)=\mrm{Inv}(f,a)-1 = \pi(f,a)-1$ and thus $\pi(f) = \pi(g)+1 $.
The fact that $b \cdot \vec{j}_2$ contains at most $a$ tokens smaller than $a$ implies that
$\vec{i} \cdot \vec{j}_1$ contains at least $m$ tokens smaller than $a$.
That is, all of $\vecp{i}$ are smaller than $a$.
By Lemma~\ref{lem:remove}, we have
\begin{align*}
\nu(f) &= \nu(g) = \nu(\vecp{i};\, {b}\cdot \vec{j}_2) \,.
\end{align*}
All in all, $\Phi(f)=\Phi(g)+1$.

\prg{Case 2.2.}
Suppose $\mrm{Inv}(f,a) = a+1$.
In this case, we have $\pi(f,a)=a+1$, $\pi(g,a)=\mrm{Inv}(g,a)=a$ and thus $\pi(f) = \pi(g)+1$.
The fact that $b \cdot \vec{j}_2$ contains exactly $a+1$ tokens smaller than $a$ implies that
$\vec{i} \cdot \vec{j}_1$ contains exactly $m-1$ tokens smaller than $a$.
That is, all of $\vecp{i}$ are smaller than $a$ except one token $c = \max(\vecp{i})$.
Therefore, 
\begin{align*}
\nu(f) &= \nu(\vecp{i}[a/c];\, b \cdot \vec{j}_2) = \nu(\vecp{i}[b/c];\, \vec{j}_2)\,,
\\
\nu(g) &= \nu(\vecp{i}[b/c];\, a \cdot \vec{j}_2) = \nu(\vecp{i}[b/c];\, \vec{j}_2)\,.
\end{align*}
All in all, $\Phi(f)=\Phi(g)+1$.

\prg{Case 2.3.}
Suppose $\mrm{Inv}(f,a) > a+1$.
In this case, we have  $\pi(f,a)=\pi(g,a)=a+1$ and $\pi(f) = \pi(g)$.
The fact that $b \cdot \vec{j}_2$ contains at least $a+2$ tokens smaller than $a$ implies that
$\vec{i} \cdot \vec{j}_1$ contains at most $m-2$ tokens smaller than $a$.
That is, the two largest tokens $c$ and $d$ in $\vecp{i}$ are bigger than $a$.
Therefore, 
\begin{align*}
\nu(f) &= \nu(\vecp{i}[a/c,b/d]; \vec{j}_2)\,,
\\
\nu(g) &= \nu(\vecp{i}[b/c,a/d]; \vec{j}_2)\,.
\end{align*}
Lemma~\ref{lem:lp_negativeswap} implies $|\nu(f)-\nu(g)|=1$.
All in all, $|\Phi(f)-\Phi(g)|=1$.
\end{proof}

\begin{corollary}\label{cor:lp_lowerbound}
$\Phi(f) \le \OPT(L_{m,n},f)$.
\end{corollary}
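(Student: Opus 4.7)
The plan is to observe that the three preceding lemmas have already done all the work: every edge of $L_{m,n}$ falls into exactly one of three categories, and a swap on any such edge changes $\Phi$ by exactly one. So $\Phi$ is a potential function that decreases by at most one per step, and since it vanishes exactly on the identity configuration, it gives the desired lower bound on the number of swaps.

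First I would recall that the edge set of $L_{m,n}$ splits into (i) edges $\{i,j\}$ contained entirely in the clique part $\{-m,\dots,-1\}$, (ii) the joining edges $\{h,0\}$ with $h<0$, and (iii) the path edges $\{k,k+1\}$ with $k\ge 0$. For any configuration $f$ and any swap $e$, Corollary~\ref{cor:lp_negativeswap}, Lemma~\ref{lem:lp_borderswap} and Lemma~\ref{lem:lp_positiveswap} respectively give $|\Phi(fe)-\Phi(f)|=1$ in the three cases. Hence for every swap $e$ on $L_{m,n}$,
\[
\Phi(f) - \Phi(fe) \le 1\,.
\]

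Next I would iterate this inequality along an arbitrary swap sequence $\vec{e}=\lrangle{e_1,\dots,e_\ell}\in\SOL(L_{m,n},f)$. By induction on $\ell$,
\[
\Phi(f) - \Phi(f\vec{e}) \le \ell = |\vec{e}|\,.
\]
Since $f\vec{e}$ is the identity configuration, every orbit $[i]_{f\vec{e}}$ with $i<0$ is a singleton, so $\nu(f\vec{e})=0$; also $\pi(f\vec{e},j)=0$ for all $j\ge 0$, so $\pi(f\vec{e})=0$. Thus $\Phi(f\vec{e})=0$, giving $\Phi(f)\le |\vec{e}|$.

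Taking the minimum over all $\vec{e}\in\SOL(L_{m,n},f)$ yields $\Phi(f)\le \OPT(L_{m,n},f)$, which is the desired inequality. The only subtlety worth double-checking is that there are no edges of $L_{m,n}$ other than the three types enumerated above, which is immediate from the definition $V=\{-m,\dots,n\}$ and $E=\{\{i,j\}\mid i<j\le 0 \text{ or } j=i+1>0\}$, since the latter forces $j\le 0$ (handled by types (i) and (ii)) or $i\ge 0$ (type (iii)).
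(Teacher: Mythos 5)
Your proof is correct and matches the paper's approach exactly: the paper's own proof is simply the one-line invocation ``By Corollary~\ref{cor:lp_negativeswap} and Lemmas~\ref{lem:lp_borderswap} and~\ref{lem:lp_positiveswap},'' which is precisely the potential-function argument you spelled out. You have just made explicit the edge-type case split and the telescoping/iteration that the paper leaves implicit.
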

\begin{proof}
By Corollary~\ref{cor:lp_negativeswap} and Lemmas~\ref{lem:lp_borderswap} and~\ref{lem:lp_positiveswap}.
\end{proof}

\begin{lemma}
Suppose that our algorithm changes $f$ to $g$ at a point in the run.
Then $\Phi(g) = \Phi(f)-1$.
\end{lemma}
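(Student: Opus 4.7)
The plan is to enumerate the kinds of swaps the algorithm can make and, for each, to invoke the appropriate case of Corollary~\ref{cor:lp_negativeswap} or of Lemma~\ref{lem:lp_borderswap} or~\ref{lem:lp_positiveswap} to conclude that $\Phi$ strictly decreases.  Those three results already guarantee $|\Phi(g) - \Phi(f)| = 1$, so the task reduces to pinning down the correct sub-case in each situation.

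First I would fix notation.  At the moment the algorithm is about to process a token $k$ with current configuration $f$, all tokens $j > k$ already sit on their goal vertex $j$; in particular the position $p := f^{-1}(k)$ of token $k$ satisfies $p \le k$.  For $k > 0$ the algorithm moves $k$ along the shortest $p$-to-$k$ path: if $p < 0$ it first applies the border swap $\{p,0\}$, and then (in either case) applies the positive swaps $\{j,j+1\}$ for $j = \max(p,0),\dots,k-1$ in order.  For $k \le 0$ the algorithm applies just the single swap $\{p,k\}$, which is a border swap when $k=0$ and a clique swap when $k<0$.

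Next I would check each of the four resulting swap types.  For a positive swap $\{j,j+1\}$ moving token $k>0$ upward, $f(j)=k$ and $f(j+1)=b$ for some $b<k$ (no token larger than $k$ can lie at position $\le k$).  In the notation of Lemma~\ref{lem:lp_positiveswap} this means $a = k \ge 0$, and because each of the $k-j$ positions in $\{j+1,\dots,k\}$ holds a token smaller than $k$, we have $\mrm{Inv}(f,k) = k-j \le k$.  Hence Sub-case~2.1 applies and $\Phi(f) = \Phi(g) + 1$.  For a border swap $\{p,0\}$ with $p<0$ (either while a positive $k$ is being lifted out of the clique, or in Phase~2 when $k=0$), the moved token is $a$ and the other clique tokens form the sequence $\vec{i}$ of Lemma~\ref{lem:lp_borderswap}; the tokens larger than $a$ are already stationed on their goal vertices (positive vertices $>k$, or in Phase~2 all non-negative vertices), so $a > \max(\vec{i})$, placing us in Case~2.1 (or its Case~3 analogue) and yielding $\Phi(f)=\Phi(g)+1$.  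Finally, for a clique swap $\{p,k\}$ with $k<0$ in Phase~2 we exchange the tokens $a=k$ and $b=f(k)$; since $f(p)=k$, the vertices $p$ and $k$ lie in the same $f$-orbit, so $[a]_f = [k]_f = [b]_f$, and Case~1 of the proof of Lemma~\ref{lem:lp_negativeswap} gives $\nu(g)=\nu(f)-1$.  Because no positive token is touched, $\pi(f)=\pi(g)$, so $\Phi(f)=\Phi(g)+1$.

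The principal obstacle is reliably maintaining the invariant ``every token larger than the one currently being processed is already on its goal vertex'' and using it to deduce the domination $a > \max(\vec{i})$ that is needed in the border-swap cases.  Once that invariant is in place, all four cases are immediate from the lemmas already proved.
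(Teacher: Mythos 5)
Your proof is correct and follows essentially the same strategy as the paper: classify each swap the algorithm performs (positive path swap, border swap at vertex $0$, clique swap), and for each one establish the hypotheses of the specific sub-case of Lemma~\ref{lem:lp_positiveswap}, Lemma~\ref{lem:lp_borderswap}, or Lemma~\ref{lem:lp_negativeswap} that yields a strict decrease. Your justification that $\mathrm{Inv}(f,k)\le k$ for the path swaps (counting the $k-j$ positions between the token and its goal rather than the $m$ negative vertices) is a mild variant of the paper's argument but equally valid, and your explicit note that the border swap may fall under the Case~3 analogue of Case~2.1 (when $f(0)\ge 0$) is actually slightly more careful than the paper's own proof, which cites only Case~2.1.
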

\begin{proof}
Suppose that the algorithm moves a token $a \ge 0$.
If $f^{-1}(a) < 0$ then Case 2.1 of the proof of Lemma~\ref{lem:lp_borderswap} applies and we have $\Phi(f)=\Phi(g)+1$.
If $f^{-1}(a) \ge 0$, the fact that $f(i) < a$ for all $i < 0$ implies that $\mrm{Inv}(f,a) \le a$.
Hence Case 2.1 of the proof of Lemma~\ref{lem:lp_positiveswap} applies and we have $\Phi(f)=\Phi(g)+1$.

Suppose that the algorithm moves a token $a < 0$. Then Case 1 of the proof of Lemma~\ref{lem:lp_negativeswap} applies.
We conclude $\Phi(f)=\Phi(g)+1$.
\end{proof}

Therefore, our algorithm gives a solution of $\Phi(f)$ steps, which is optimal by Corollary~\ref{cor:lp_lowerbound}.
\begin{theorem}\label{thm:lollipop}
Token swapping on lollipop graphs can be solved in polynomial time.
\end{theorem}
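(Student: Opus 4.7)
The plan is to conclude Theorem~\ref{thm:lollipop} by assembling the ingredients already prepared in this appendix: the potential function $\Phi$, the uniform-change property of every swap, and the progress property of Algorithm~\ref{alg:lollipop}.

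First I would observe that Algorithm~\ref{alg:lollipop} terminates with a valid solution: for $k = n, n{-}1, \dots, 0$, the for-loop moves token $k$ to vertex $k$ via the path prefix $\{k, k+1, \dots\}$ (which is untouched subsequently because later iterations only move strictly smaller tokens), and for $k = -1, \dots, -m$, the negative tokens are moved within the clique without disturbing any vertex of index $\geq 0$. Each ``move the token $k$ to vertex $k$ directly'' takes $O(n+m)$ swaps, so the full run takes $O((n+m)^2)$ swaps and polynomial time.

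Next I would argue optimality via the potential function $\Phi$. By Corollary~\ref{cor:lp_lowerbound}, $\Phi(f) \leq \OPT(L_{m,n},f)$, since every admissible swap (inside the clique, across the joint, or along the path) changes $\Phi$ by exactly one, so any solution of length $\ell$ starting from $f$ and reaching the identity (with $\Phi = 0$) must satisfy $\ell \geq \Phi(f)$. By the preceding lemma, each individual swap performed by Algorithm~\ref{alg:lollipop} decreases $\Phi$ by exactly one: when a non-negative token $a$ is brought to its goal, either Case~2.1 of the proof of Lemma~\ref{lem:lp_borderswap} applies (if the current position of $a$ lies in the clique) or Case~2.1 of the proof of Lemma~\ref{lem:lp_positiveswap} applies (using $\mrm{Inv}(f,a) \leq a$, which holds because the algorithm has already placed all larger indexed tokens and never lets small negative tokens block the invariant), and the negative phase is handled by Case~1 of Lemma~\ref{lem:lp_negativeswap}. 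Hence the algorithm produces a sequence of length exactly $\Phi(f_0)$, matching the lower bound.

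The main obstacle, which is essentially already resolved in this appendix, is the case analysis verifying that every swap changes $\Phi$ by $\pm 1$ (Lemmas~\ref{lem:lp_negativeswap}, \ref{lem:lp_borderswap}, \ref{lem:lp_positiveswap}), because a single swap along the path joint can simultaneously affect an entry of $\pi$, shift the threshold between the ``path'' and ``clique'' parts of the configuration, and alter the orbit structure counted by $\nu$. Once this is in hand, the theorem follows immediately: Algorithm~\ref{alg:lollipop} runs in polynomial time and outputs a swap sequence of length $\Phi(f_0) = \OPT(L_{m,n},f_0)$, so Token Swapping on lollipop graphs is in P.
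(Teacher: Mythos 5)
Your proposal is correct and follows essentially the same route as the paper's Appendix~\ref{sec:lollipop}: it assembles Corollary~\ref{cor:lp_lowerbound} (the $\pm 1$-change property of $\Phi$ under any swap, established by Corollary~\ref{cor:lp_negativeswap} and Lemmas~\ref{lem:lp_borderswap},~\ref{lem:lp_positiveswap}) with the algorithmic progress lemma (each swap performed by Algorithm~\ref{alg:lollipop} decreases $\Phi$ by one, via Case~2.1 of Lemma~\ref{lem:lp_borderswap} or Lemma~\ref{lem:lp_positiveswap} for non-negative tokens and Case~1 of Lemma~\ref{lem:lp_negativeswap} for negative ones), matching the case identification in the paper.
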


\section{Proof that Token Swapping on Star-Path Graphs Is in P}\label{sec:gerbera}
This appendix gives a proof that Algorithm~\ref{alg:gerbera} computes an optimal swapping sequence on star-path graphs $Q_{m,n}$
in a manner similar to Appendix~\ref{sec:lollipop}.
The number of swaps needed to move non-negative tokens to the goal vertices is evaluated by the same function $\pi$.
On the other hand, the number of swaps needed to relocate negative tokens is evaluated differently from the case of lollipop graphs.
The algorithm involves two types of swaps: the ones in the inner \textbf{while} loop and the others.
Let us call the former Type A and the latter Type B.
The negative tokens which must be moved are in $N_f = \{\, f(i) \in \{-m,\dots,-1\,\} \mid f(i) \neq i \,\}$.
Among those, some are on a non-negative vertex and some are on a negative vertex.
Tokens of the former type will be forced to move to $0$ by the moves of non-negative tokens (Type B) and then go to the goal vertex by one step (Type A).
Moves of Type B of those tokens are counted by $\pi$.
On the other hand, tokens $i$ of the latter type form equivalence classes $[i]_f \subseteq N_f$, which require $[i]_f+1$ swaps to be relocated to the goal vertices.
Let
\[
	\Delta_f = \{\, [i]_f \subseteq N_f \mid i < 0 \,\}
\]
and 
\[
	\mu(f) = |N_f|+|\Delta_f|
\,.\]
This value $\mu(f)$ correctly evaluates the number of swaps required to relocate negative tokens in the star graph~\cite{Pak99,YamanakaDIKKOSSUU14}.
One might think $\pi(f)+\mu(f)$ could be the right evaluation for $\OPT(Q_{m,n},f)$.
However, when the vertex 0 is occupied by a negative token $i <0$ and the vertex $i$ is occupied by the positive token $j$ which is the largest among the tokens on negative vertices, then the move of $i$ to $i$ (Type A) causes the right move of $j$ to $0$, which reduces the number of swaps required to move $j$ to the goal.
That is, actually $\pi$ overestimates the number of swaps for $j$.
We must discount the evaluation from $\pi(f)+\mu(f)$.
For a pseudo configuration $f = (\vec{i};\,\vec{j}) = (\lrangle{i_1,\dots,i_m};\,\lrangle{j_1,\dots,j_k})$ and $c = \max(\vec{i})$,
define
\[
 	\delta(\vec{i};\,\vec{j}) = \begin{cases}
	0	&	\text{ if $c < 0$,}
\\	\delta(\vec{i};\, \lrangle{j_2,\dots,j_k})	& \text{ if $j_1 > c \ge 0$,}
\\	\delta(\vec{i}[j_1/c] ;\, \lrangle{j_2,\dots,j_k})	& \text{ if $c > j_1 \ge 0$,}
\\	\delta(\vec{i}[j_1/i_{-j_1}] ;\, \vec{j}[i_{-j_1}/j_1])-1	& \text{ if $j_1 < 0$ and $i_{-j_1} = c$,}
\\	\delta(\vec{i}[j_1/i_{-j_1}] ;\, \vec{j}[i_{-j_1}/j_1])	& \text{ otherwise.}
	\end{cases}
\]
Note that if $j_1 < 0$, then $c \ge 0$.
The discount function $\delta$ is well-defined, since the sum of the number of the misplaced tokens in $\vec{i}$ and the length of $\vec{j}$ decreases by one on the right-hand side in the above definition when $c \ge 0$.

Our evaluation function $\Psi$ is given as 
\[
	\Psi(f) = \pi(f) + \mu(f) + \delta(f)\,.
\]
It is clear that $\Psi(f) = 0$ if $f$ is the identity. 

For a pseudo configuration $(\lrangle{i_1,\dots,i_m};\,a)$,
let us define \[
\gamma(\lrangle{i_1,\dots,i_m};\,a) = \begin{cases}
\gamma(\lrangle{i_1,\dots,i_{-a-1},a,i_{-a+1},\dots,i_m};\,i_{-a})	& \text{ if $a < 0$,}
\\
(\lrangle{i_1,\dots,i_m};\,a)	& \text{ if $a \ge 0$.}
\end{cases}
\]
The function $\gamma$ simulates the \textbf{while} loop of Algorithm~\ref{alg:gerbera} in the sense that
if the algorithm has $(\vec{i};\,a \cdot \vec{j})$ as the value of $f$ at the beginning of the \textbf{while} loop, 
it will be $(\vecp{i};\,a' \cdot \vec{j})$ when exiting the loop for $(\vecp{i};\,a')=\gamma(\vec{i};\,a)$.
\begin{lemma}\label{lem:mucon}
Let $\gamma(\vec{i};\, a) = (\vecp{i};\,b)$. Then
\[
	\delta(\vec{i};\, a\cdot \vec{j}) = \begin{cases}
	\delta(\vecp{i};\,b\cdot \vec{j}) - 1	& \text{ if\/ $b=\max(\vec{i})$,}
\\
	\delta(\vecp{i};\,b\cdot \vec{j}) 	& \text{ otherwise.}
	\end{cases}
\]
\end{lemma}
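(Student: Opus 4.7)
The plan is to induct on the depth of the recursion by which $\gamma(\vec{i};a)$ is defined, unfolding one step of $\gamma$ and one step of $\delta$ in parallel and tracking $c := \max(\vec{i})$.

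In the base case $a \ge 0$ the definition of $\gamma$ returns $(\vec{i};a)$ at once, so $\vecp{i} = \vec{i}$ and $b = a$. Since $a$ occupies position $0$ of the underlying pseudo configuration while $\vec{i}$ lists the values at the negative positions, injectivity forces $a \notin \vec{i}$, hence $b \ne \max(\vec{i})$, and the "otherwise" clause of the lemma degenerates to the trivial identity $\delta(\vec{i};a\cdot\vec{j}) = \delta(\vec{i};a\cdot\vec{j})$.

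For the inductive step assume $a < 0$. A preliminary claim is that $c = \max(\vec{i}) \ge 0$: otherwise every entry of $\vec{i}$ would be negative, so by injectivity $\vec{i}$ would be a permutation of $\{-1,\dots,-m\}$, forcing every entry of $\vec{j}$ to be non-negative and contradicting $a = j_1 < 0$. Now set $\vec{i}_1 := \vec{i}[a/i_{-a}]$ and $a_1 := i_{-a}$, so that $\gamma(\vec{i};a) = \gamma(\vec{i}_1;a_1)$. Applying the defining clause of $\delta$ for $j_1 = a < 0$ yields
\[
 \delta(\vec{i};\,a\cdot\vec{j}) \;=\; \delta(\vec{i}_1;\,a_1\cdot\vec{j}) \;-\; \bigl[\,i_{-a}=c\,\bigr],
\]
where the bracket is $1$ when the condition holds and $0$ otherwise.

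From here the argument splits along whether $i_{-a} = c$. If $i_{-a} = c$, then $a_1 = c \ge 0$, so $\gamma$ halts on its next call and returns $(\vecp{i};b) = (\vec{i}_1;c)$; hence $b = \max(\vec{i})$, and the displayed identity is already the desired conclusion with the $-1$ summand. If $i_{-a} \ne c$, I observe that $\max(\vec{i}_1) = c$: the removed entry $a_1 \ne c$ cannot have been the maximum and the inserted entry $a < 0 \le c$ cannot beat $c$. Applying the induction hypothesis to $(\vec{i}_1;a_1)$ therefore delivers the lemma for $\delta(\vec{i}_1;a_1\cdot\vec{j})$, and combining it with the displayed identity together with the equivalence $b = \max(\vec{i}_1) \Leftrightarrow b = \max(\vec{i})$ completes the induction. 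The main obstacle is the preliminary inequality $c \ge 0$ and, what is the same thing in disguise, the fact that the discount $\bigl[\,i_{-a_t}=\max(\vec{i}^{(t)})\,\bigr]$ can fire at most once during the entire $\gamma$-recursion—namely at the terminating step, where the newly pulled token finally becomes the non-negative $c$—so that no spurious extra $-1$ accumulates; once this is pinned down, the two-way case split above is mechanical.
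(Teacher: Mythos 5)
Your proof is correct and follows essentially the same route as the paper's: induction on the depth of the $\gamma$-recursion, peeling off one $\gamma$-step in parallel with one $\delta$-step and tracking $\max(\vec{i})$ across the unfolding. The only differences are cosmetic: you spell out why $b \neq \max(\vec{i})$ in the base case (injectivity) and why $\max(\vec{i}) \ge 0$ in the inductive case, both of which the paper leaves implicit.
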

\begin{proof}
We show the lemma by induction on the definition of $\gamma$.
If $\gamma(\vec{i};\, a) = (\vec{i};\, a)$, $a$ does not occur in $\vec{i}$, so $\delta(\vec{i};\, a) = \delta(\vec{i};\, a)$.
Otherwise, suppose $a < 0$ and $\gamma(\vec{i};\, a) = \gamma(\vec{i}[a/i_{-a}];\, i_{-a})$.
Remember $\max(\vec{i}) \ge 0$.
If $i_{-a} = \max(\vec{i})$, then $\gamma(\vec{i}[a/i_{-a}];\, i_{-a}) = (\vec{i}[a/i_{-a}];\, i_{-a})$ and
\[
\delta(\vec{i};\, a) = \delta(\vec{i}[a/i_{-a}];\, i_{-a}) - 1 
\,.\]
If $i_{-a} < \max(\vec{i})$, then $\delta(\vec{i};\, a) = \delta(\vec{i}[a/i_{-a}];\, i_{-a})$ and  $\gamma(\vec{i};\, a) = \gamma(\vec{i}[a/i_{-a}];\, i_{-a})$.
Since $a < 0 \le \max(\vec{i})$ and $i_{-a} < \max(\vec{i})$, we have $\max(\vec{i}) = \max(\vec{i}[a/i_{-a}])$.
By the induction hypothesis, we obtain the lemma.
\end{proof}

\begin{lemma}\label{lem:sp_shorten}
For any $\vec{i},\vec{j}$, there are an integer $\alpha \le 0$ and a sequence $\vecp{i}$ consisting of the $m$ smallest elements from $\vec{i} \cdot \vec{j}$ such that
 for any $\vec{k}$
\[
\delta(\vec{i};\,\vec{j}\cdot \vec{k}) = \delta(\vecp{i};\, \vec{k}) + \alpha
\]
provided that $(\vec{i};\,\vec{j}\cdot \vec{k})$ is a pseudo configuration.
\end{lemma}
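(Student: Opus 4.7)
The plan is to proceed by induction on $|\vec{j}|$, with the trivial base case $|\vec{j}|=0$ handled by taking $\vecp{i}=\vec{i}$ and $\alpha=0$ (noting $\vec{i}$ itself is its own $m$-smallest sequence). For the inductive step, writing $\vec{j}=\lrangle{j_1}\cdot\vec{j}'$, I first establish a \emph{one-step reduction}: there exist a sequence $\vec{i}^{\ast}$ whose multiset is the $m$ smallest of $\vec{i}\cup\{j_1\}$ and a value $\alpha_0\in\{0,-1\}$ such that $\delta(\vec{i};\vec{j}\cdot\vec{k})=\delta(\vec{i}^{\ast};\vec{j}'\cdot\vec{k})+\alpha_0$ for every admissible $\vec{k}$. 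Applying the outer induction hypothesis to the shorter $\vec{j}'$ then yields a sequence $\vecp{i}$ from the $m$ smallest of $\vec{i}^{\ast}\cdot\vec{j}'$ together with $\alpha_1\le 0$. Setting $M:=\max(\vec{i}\cup\{j_1\})$, the multiset $\vec{i}^{\ast}\cdot\vec{j}'$ equals $\vec{i}\cdot\vec{j}$ with $M$ removed, and a short check shows $M$ is not among the $m$ smallest of $\vec{i}\cdot\vec{j}$: if $M=j_1>\max(\vec{i})$ then all $m$ elements of $\vec{i}$ lie strictly below $M$, whereas if $M=\max(\vec{i})$ then the $m-1$ other elements of $\vec{i}$ together with $j_1<M$ give $m$ elements strictly below $M$. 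Hence $\vecp{i}$ is the required $m$-smallest sequence for $\vec{i}\cdot\vec{j}$ and $\alpha:=\alpha_0+\alpha_1\le 0$.

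For cases (1)--(4) of the definition of $\delta$, the one-step reduction is direct. Let $c:=\max(\vec{i})$. If $c<0$, both sides are zero and $\vec{i}^{\ast}=\vec{i}$, $\alpha_0=0$. If $j_1\ge c\ge 0$, the definition pops $j_1$ and we take $\vec{i}^{\ast}=\vec{i}$, $\alpha_0=0$. If $c>j_1\ge 0$, we take $\vec{i}^{\ast}=\vec{i}[j_1/c]$, $\alpha_0=0$. If $j_1<0$ and $i_{-j_1}=c$, the definition gives $\delta(\vec{i}[j_1/c];\lrangle{c}\cdot\vec{j}'\cdot\vec{k})-1$; since $c>\max(\vec{i}[j_1/c])$, one further unfolding pops $c$ via case~(1) or~(2), so $\vec{i}^{\ast}=\vec{i}[j_1/c]$ and $\alpha_0=-1$.

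The main obstacle is case~(5)---the ``otherwise'' branch where $j_1<0$ and $p:=i_{-j_1}<c$---because $\delta$ recursively invokes itself on a pair with a $\vec{j}$-argument of the same length. I plan to handle this via a secondary induction on the number of indices $t\in\{1,\dots,m\}$ with $i_t\ne -t$. A single unfolding of case~(5) produces $\delta(\vec{i}[j_1/p];\lrangle{p}\cdot\vec{j}'\cdot\vec{k})$; here the new $\vec{i}$ has $j_1$ at its home position $-j_1$ (previously occupied by $p\ne j_1$), so the count of unpinned indices strictly decreases, while the multiset of the current $\vec{i}$ together with the current $\vec{j}$-head is preserved and no penalty is accrued. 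The secondary induction hypothesis then supplies the one-step reduction for the inner expression, and lifting it back through the case~(5) unfolding (which contributes nothing) gives the one-step reduction for the original expression with the same $\vec{i}^{\ast}$ and $\alpha_0\in\{0,-1\}$. This closes the outer induction and establishes the lemma.
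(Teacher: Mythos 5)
Your proof is correct and follows the same basic plan the paper intends when it says the claim is ``immediate by the definition of~$\delta$'': unfold along the head of~$\vec{j}$, noting that each step (after finitely many case-5 unfoldings, which you correctly control by a secondary induction on the number of unpinned negative positions) either strips the current head at cost $\alpha_0 \in \{0,-1\}$ or short-circuits via case~1 with $\max(\vecp{i})<0$, in which event $\delta(\vecp{i};\vec{k})=0$ anyway. Your version simply makes explicit the double-induction structure and the multiset bookkeeping that the paper leaves implicit.
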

\begin{proof}
This is immediate by the definition of $\delta$.
If the definition derives the equation $\delta(\vec{i};\, \vec{j}) = \delta(\vecp{i};\, \vecp{j}) + \alpha = \alpha$ with $\max(\vecp{i}) < 0$,
then $\delta(\vec{i};\, \vec{j} \cdot \vec{k}) = \delta(\vecp{i};\, \vec{k}) + \alpha$ holds anyway.
\end{proof}

\begin{lemma}\label{lem:sp_remove}
If\/ $\vec{i}\cdot \vec{j}_1$ contains $m$ or more tokens smaller than $k \ge 0$, then 
\[
	\delta(\vec{i};\, \vec{j}_1 \cdot k \cdot \vec{j}_2) = \delta(\vec{i};\, \vec{j}_1 \cdot  \vec{j}_2)
\]
provided that $(\vec{i};\,\vec{j}_1 \cdot k \cdot \vec{j}_2)$ is a pseudo configuration.
\end{lemma}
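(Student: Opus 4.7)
The plan is to reduce the claim, via Lemma~\ref{lem:sp_shorten}, to the case in which $\vec{i}$ has already ``absorbed'' the contents of $\vec{j}_1$, and then observe that $k$ can be stripped harmlessly from the head of the remaining sequence.

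First I would invoke Lemma~\ref{lem:sp_shorten} with $\vec{j}=\vec{j}_1$ to obtain $\alpha\le 0$ and a sequence $\vecp{i}$ consisting of the $m$ smallest elements of $\vec{i}\cdot\vec{j}_1$ such that for any $\vec{k}$,
\[
\delta(\vec{i};\,\vec{j}_1\cdot\vec{k}) = \delta(\vecp{i};\,\vec{k})+\alpha,
\]
provided $(\vec{i};\vec{j}_1\cdot\vec{k})$ is a pseudo configuration. Applying this identity twice, once with $\vec{k}=k\cdot\vec{j}_2$ and once with $\vec{k}=\vec{j}_2$ (both legal instances since $(\vec{i};\vec{j}_1\cdot k\cdot\vec{j}_2)$ is a pseudo configuration by hypothesis and removing the positive token $k$ keeps the range $\supseteq\{-m,\dots,-1\}$), the $\alpha$ terms cancel and the claim reduces to showing
\[
\delta(\vecp{i};\,k\cdot\vec{j}_2) = \delta(\vecp{i};\,\vec{j}_2).
\]

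Next I would exploit the choice of $\vecp{i}$. By assumption $\vec{i}\cdot\vec{j}_1$ contains at least $m$ tokens strictly smaller than $k$, and $\vecp{i}$ is formed from the $m$ smallest tokens of $\vec{i}\cdot\vec{j}_1$; since tokens are distinct integers, those $m$ smallest are all strictly less than $k$. Hence $c'=\max(\vecp{i})<k$. Now a two-case analysis settles matters: if $c'<0$, the first clause of the definition of $\delta$ gives $\delta(\vecp{i};\,k\cdot\vec{j}_2)=0=\delta(\vecp{i};\,\vec{j}_2)$; if $c'\ge 0$, then $k>c'\ge 0$ triggers the second clause at the head $k$, yielding $\delta(\vecp{i};\,k\cdot\vec{j}_2)=\delta(\vecp{i};\,\vec{j}_2)$ directly.

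The main thing to be careful about is that Lemma~\ref{lem:sp_shorten} is applicable in the stated form, i.e., that both auxiliary pseudo configurations actually satisfy the provided hypothesis. This is routine: $(\vec{i};\vec{j}_1\cdot k\cdot\vec{j}_2)$ is a pseudo configuration by assumption, and $(\vec{i};\vec{j}_1\cdot\vec{j}_2)$ is obtained by deleting one non-negative value $k$, which leaves the range $\{-m,\dots,-1\}$ intact and the whole map injective. Once this housekeeping is dispatched, the argument is short and the lemma follows.
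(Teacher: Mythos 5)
Your proof is correct, and it takes a genuinely different route from the paper's. The paper proves Lemma~\ref{lem:sp_remove} by induction on the recursive definition of $\delta$: it peels off one step of the recursion at a time (the head $j_0$ of $\vec{j}_1$), and the bulk of the work is a case analysis verifying that the inductive hypothesis --- that the transformed $\vecp{i}\cdot\vecp{j}_1$ still contains at least $m$ tokens smaller than $k$ --- is preserved. You instead invoke Lemma~\ref{lem:sp_shorten} once to jump all the way through $\vec{j}_1$, reducing the claim to $\delta(\vecp{i};\,k\cdot\vec{j}_2)=\delta(\vecp{i};\,\vec{j}_2)$ with $\vecp{i}$ the $m$ smallest tokens of $\vec{i}\cdot\vec{j}_1$; the hypothesis then gives $\max(\vecp{i})<k$, and a two-line case split on the sign of $\max(\vecp{i})$ finishes it via the first or second clause of $\delta$'s definition. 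This buys brevity (no case analysis on whether $j_0$ is negative, positive-and-larger-than-$c$, etc.) at the cost of leaning more heavily on the exact form of Lemma~\ref{lem:sp_shorten} --- in particular on the fact that $\vecp{i}$ consists of the $m$ smallest elements of $\vec{i}\cdot\vec{j}_1$, which is precisely where the content of the paper's step-by-step case analysis gets absorbed. Since Lemma~\ref{lem:sp_shorten} precedes this lemma in the paper and is proved independently, there is no circularity; your argument is sound.
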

\begin{proof}
We show the lemma by induction on the definition of $\delta$.
Let $c=\max(\vec{i})$.
If $c < 0$, then $\delta(\vec{i};\, \vec{j}_1 \cdot k \cdot \vec{j}_2) = \delta(\vec{i};\, \vec{j}_1 \cdot  \vec{j}_2) = 0$.
Suppose $c \ge 0$.
If $\vec{j}_1$ is empty, $k > c \ge 0$ by the assumption.
The equation holds immediately by definition.
If $\vec{j}_1$ is not empty, the recursive definition of $\delta$ gives $\vecp{i}$ and $\vecp{j}_1$ such that 
\begin{gather*}
\delta(\vec{i};\, \vec{j}_1 \cdot k \cdot \vec{j}_2) = \delta(\vecp{i};\,\vecp{j}_1 \cdot k \cdot \vec{j}_2 ) + \alpha
\\
 \delta(\vec{i};\, \vec{j}_1 \cdot  \vec{j}_2) = \delta(\vecp{i};\,\vecp{j}_1 \cdot \vec{j}_2 ) + \alpha
\end{gather*}
for some $\alpha \in \{0,-1\}$.
To apply the induction hypothesis, it suffices to show that $\vecp{i} \cdot \vecp{j}_1 $ contains at least $m$ tokens smaller than $k \ge 0$.
The only non-trivial case is that the number of tokens smaller than $k$ in $\vecp{i}\cdot\vecp{j}_1$ is smaller than that in $\vec{i}\cdot \vec{j}_1$.
In such a case, for the first element $j_0$ of $\vec{j}_1$,
either $j_0 > c \ge 0$ and $j_0 < k$ ($j_0$ is absent in $\vecp{i}\cdot\vecp{j}_1$) or $c > j_0 \ge 0$ and $c < k$ ($c$ is absent in $\vecp{i}\cdot\vecp{j}_1$) holds.
In the former case, $k > c$ implies that all the $m$ tokens in $\vecp{i} = \vec{i}$ are smaller than $k$.
In the latter case, $k > c >j_0$ implies all the $m$ tokens in $\vecp{i} = \vec{i}[j_0/c]$ are smaller than $k$.
\end{proof}

\begin{corollary}\label{cor:sp_remove}
For any configuration $f = ( \vec{i};\,\vec{j}_1\cdot k \cdot\vec{j}_2 )$ with $k \ge 0$,
if\/ $\mrm{Inv}(f,k) \le k$ then 
\[
	\delta(\vec{i};\, \vec{j}_1 \cdot k \cdot \vec{j}_2) = \delta(\vec{i};\, \vec{j}_1 \cdot  \vec{j}_2)
\,.\]
In particular if\/ $\vec{i}$ contains negative tokens only, $\delta(f)=0$.
\end{corollary}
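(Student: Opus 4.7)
The plan is to obtain the equality as an immediate consequence of Lemma~\ref{lem:sp_remove}, after verifying its hypothesis by a simple counting argument, and to obtain the ``in particular'' clause straight from the base case of the recursive definition of $\delta$.

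For the equality, I would verify that $\vec{i}\cdot\vec{j}_1$ contains at least $m$ tokens strictly smaller than $k$, so that Lemma~\ref{lem:sp_remove} applies. Because $f$ is a (proper) configuration, i.e., a bijection on $V=\{-m,\dots,n\}$, the tokens strictly smaller than $k$ form the set $\{-m,\dots,k-1\}$, which has size $m+k$. By the definition of $\mrm{Inv}(f,k)$, exactly $\mrm{Inv}(f,k)$ of these occupy positions strictly greater than $f^{-1}(k)$, and since $k\ge 0$ these positions are precisely those listed in $\vec{j}_2$; the token $k$ itself is not counted. Hence the number of tokens smaller than $k$ appearing in $\vec{i}\cdot\vec{j}_1$ is $(m+k)-\mrm{Inv}(f,k) \ge (m+k)-k = m$, so Lemma~\ref{lem:sp_remove} gives the claimed identity.

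For the ``in particular'' clause, if $\vec{i}$ consists of negative tokens only then $c=\max(\vec{i})<0$, and the first case in the recursive definition of $\delta$ immediately yields $\delta(\vec{i};\vec{j})=0$ for any $\vec{j}$. There is essentially no obstacle; the only point of care is to correctly distinguish positions in $\vec{j}_1$ from those in $\vec{j}_2$ when reading $\mrm{Inv}(f,k)$ off the pseudo-configuration representation.
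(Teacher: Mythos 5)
Your proposal is correct and follows essentially the same route as the paper's own proof: the paper also counts that there are $m+k$ tokens smaller than $k$, reads $\mrm{Inv}(f,k)\le k$ as bounding how many of them lie in $\vec{j}_2$, concludes that $\vec{i}\cdot\vec{j}_1$ has at least $m$ of them, and invokes Lemma~\ref{lem:sp_remove}; the ``in particular'' clause is indeed just the $\max(\vec{i})<0$ base case of $\delta$.
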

\begin{proof}
Recall that there exist just $k+m$ tokens smaller than $k$.
The fact $\mrm{Inv}(f,k) \le k$ means that $\vec{j}_2$ contains at most $k$ tokens smaller than $k$, so $\vec{i}\cdot\vec{j}_1$ must have at least $m$ such tokens.
Lemma~\ref{lem:sp_remove} applies.
\end{proof}

\subsection{$\Psi$ Evaluates Our Algorithm}

\begin{lemma}\label{lem:algorithm}
Suppose that our algorithm changes $f$ to $g$ at a point in the run.
Then $\Psi(g) = \Psi(f)-1$.
\end{lemma}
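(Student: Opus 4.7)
The plan is an exhaustive case analysis on the swap the algorithm performs. I first catalogue the five kinds of single swap that can arise during a run of Algorithm~\ref{alg:gerbera}:
(A) a WHILE-loop swap $\{0, a\}$ with $a = f(0) < 0$;
(B1) a swap $\{-j, 0\}$ inside the outer iteration for some $k \geq 0$, transporting token $k$ from a negative vertex to vertex $0$;
(B2) a swap $\{p, p+1\}$ inside the outer iteration for some $k \geq 0$, sliding $k$ one step rightward on the positive path;
(C1) a swap $\{-j, 0\}$ inside the outer iteration for some $k < 0$, bringing $k$ to vertex $0$;
(C2) the swap $\{0, k\}$ inside the outer iteration for some $k < 0$, placing $k$ on its goal.
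For each case I would independently compute $\pi(g) - \pi(f)$, $\mu(g) - \mu(f)$, and $\delta(g) - \delta(f)$ and verify they sum to $-1$.

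The $\mu$ increments are the most structural, obtained by tracking which token becomes fixed and how the $f$-orbit containing it changes. In Cases A and C2 the token becoming fixed lies in an $f$-orbit that already contains vertex $0$ and is therefore outside $\Delta_f$; the swap detaches a fixed singleton while leaving the rest of the orbit still containing $0$, so $|\Delta|$ is unchanged and $|N|$ drops by $1$. In Case C1 the swap instead merges the $f$-orbit of $-j$ (which lies entirely in $\{-m, \ldots, k\}$ and hence belongs to $\Delta_f$) with the singleton $\{0\}$, so that orbit leaves $\Delta$ while $|N|$ stays the same. In B1 and B2 no negative token moves, so $\mu$ is unchanged.

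The $\delta$ increments use the recursive definition together with Lemmas~\ref{lem:sp_shorten} and~\ref{lem:sp_remove}. In Cases B1, B2, and C2 I would exploit that $\vec{i}$ always contains at least $m$ tokens strictly less than the relevant pivot $k$, so Lemma~\ref{lem:sp_remove} lets us peel off leading elements of $\vec{j}$ symmetrically on either side of the swap, giving no change to $\delta$. Case A is immediate from the definition: with $t = f(a)$ and $c = \max \vec{i}(f)$, the fourth/fifth clauses yield $\delta(f) = \delta(g) - [t = c]$ whenever $c \geq 0$, and both sides vanish when $c < 0$. Case C1 is the most delicate point of the proof: since $\max \vec{i}(f) = -1$ we have $\delta(f) = 0$, while $\delta(g)$ is computed by a recursion whose successive Case-5 applications walk the entire $f$-cycle of $-j$, firing exactly one Case-4 contribution of $-1$ precisely when the recursion reaches $j_1 = -j$ (the position $j$ of $\vec{i}(g)$ now holding the new maximum value $0$). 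The subtlety is verifying that Case~4 fires only once and the recursion then terminates at the $c < 0$ base case; both facts follow from the orbit of $-j$ lying in $\{-m, \ldots, k\}$.

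For the $\pi$ increments I would track the positions of the moved tokens using the algorithm's invariants: at the moment of any swap, every token strictly exceeding the current $k$ already sits at its goal, and vertex $0$ holds a non-negative token throughout a ``move $k$'' step. These invariants make $\mrm{Inv}$ very easy to evaluate: $\mrm{Inv}(f, k) = k - p$ whenever token $k$ sits at position $p \geq 0$, so Cases B1 and B2 each give $\pi(g, k) = \pi(f, k) - 1$ via token $k$, and a direct count shows no other $\pi(\cdot, k')$ changes. In Case A only $\pi(\cdot, t)$ can change, and it drops by $1$ exactly when $t = c \geq 0$, cancelling the $+1$ from $\delta$ in that subcase. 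In Case C1 the only non-negative token whose $\pi$ changes is token $0$ (moving from vertex $0$ to the negative vertex $-j$), contributing $+1$, which combines with the $-1$ from $\mu$ and the $-1$ from $\delta$ to give the required total. In Case C2 only negative tokens move, so $\pi$ is unchanged. Summing the three increments in every case yields $\Psi(g) - \Psi(f) = -1$, completing the proof.
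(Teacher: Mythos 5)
Your decomposition and overall accounting match the paper's proof closely: the paper's Case A corresponds to your (A), its Cases B.1 and B.2 to your (B1) and (B2), and its Case B.3 is exactly your (C1) and (C2) handled as separate swaps. Your treatment of $\mu$, your $\pi$ bookkeeping via the algorithm's invariants (including the formula $\mrm{Inv}(f,k)=k-p$ for token $k$ at path position $p$), and your analysis of the delicate Case (C1) --- walking the cycle by repeated Case-5 expansions until position $j$ of $\vec{i}(g)$ holding the maximum $0$ fires Case 4 once --- are all sound, and (C1) is indeed where the paper invokes Lemma~\ref{lem:mucon} to package precisely that recursion.

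There is, however, a concrete gap in your $\delta$-accounting for Cases (B1) and (C2). You claim that in (B1), (B2), (C2) the sequence $\vec{i}$ always contains at least $m$ tokens strictly below the pivot $k$, so that Lemma~\ref{lem:sp_remove} eliminates the leading entry of $\vec{j}$ on both sides of the swap. This is true for (B2) (token $k$ sits on the path, so all of $\vec{i}$ is drawn from $\{-m,\dots,k-1\}$), but it fails for (B1): before the swap $f^{-1}(k)<0$, so token $k$ itself is an entry of $\vec{i}(f)$, leaving only $m-1$ smaller tokens, and Lemma~\ref{lem:sp_remove} cannot peel $b=f(0)$ off the front of $\vec{j}(f)$. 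It also fails for (C2): there $k<0$, so the hypothesis $a\ge 0$ of Lemma~\ref{lem:sp_remove} is not even met, and moreover $\vec{i}(f)$ contains the token $0>k$ (deposited on the star by the preceding (C1) swap). The conclusion $\delta(g)=\delta(f)$ is still correct in both cases, but for a different reason: in (B1) one step of clause $3$ applied to $\delta(f)$ replaces $\max(\vec{i}(f))=k$ by $b$ and yields exactly the pseudo-configuration to which clause $2$ applied to $\delta(g)$ reduces; in (C2) the first element of $\vec{j}(f)$ is the negative token $k$ itself, and a single clause-$5$ expansion of $\delta(f)$ produces $\delta(g)$ directly, exactly as you already argued for Case A with $t<0$. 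So the fix is to handle (B1) and (C2) by unfolding the recursive definition of $\delta$ one step, as you did for (A), rather than by Lemma~\ref{lem:sp_remove}.
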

\begin{proof}
We have two types of swaps.

\prg{Case A.}
When the algorithm moves the token $f(0)<0$ to the vertex $f(0)$ (Type A).
In this case we have $|N_g|=|N_f|-1$.
Let $a=f(0)$ and $b=f(a)$, which implies $g(0)=b$ and $g(a)=a$.
Let $I = \{f(-1),\dots,f(-m)\}$ be the set of tokens on the negative vertices in $f$.

\prg{Case A.1.} $0 \le b = \max I$.
Clearly $\delta(f)=\delta(g)-1$ and $\mu(f)=\mu(g)+1$.
Since $\pi(f,i)=\pi(g,i)$ for all $i \neq b$, it is enough to show $\pi(f,b)=\pi(g,b)+1$.
By definition $\pi(f,b) = b + 1$.
Recall that there are exactly $b+m$ tokens that are smaller than $b$.
Since the $m$ tokens on the negative vertices in $g$ are all smaller than $b$,
there are  exactly $b$ tokens smaller than $b$ on non-negative vertices under $g$.
That is, $\mrm{Inv}(g,b)=b$ and thus $\pi(g,b)=b = \pi(f,b)-1$.

\prg{Case A.2.} $0 \le b < \max I$.
Clearly $\delta(f)=\delta(g)$ and $\mu(f)=\mu(g)+1$.
To see $\pi(f,i)=\pi(g,i)$ for all $i \in \{0,\dots,n\}-\{b\}$ is trivial, so it is enough to show $\pi(f,b)=\pi(g,b)$.
By definition $\pi(f,b) = b + 1$.
Recall that there are exactly $b+m$ tokens that are smaller than $b$, of which at most $m-1$ tokens can be on negative vertices in $g$,
since at least one negative vertex is occupied by a token bigger than $b$.
Therefore, there are at least $b+1$ tokens smaller than $b$ on non-negative vertices in $g$.
That is, $\mrm{Inv}(g,b) \ge b+1$ and thus $\pi(g,b)=b+1 = \pi(f,b)$.

\prg{Case A.3.} $b < 0$.
Clearly $\pi(f,i)=\pi(g,i)$ for all $i \ge 0$ and $\delta(f)=\delta(g)$ by definition.
One can easily see $\Delta_f=\Delta_g$, for $[a]_f=[b]_f \notin \Delta_f$, $[a]_g \notin \Delta_g$ and $[b]_g \notin \Delta_g$.
Hence $\mu(g)=\mu(f)-1$.

\prg{Case B.} When the algorithm moves a token $k$ as a move of Type B.

\prg{Case B.1.} $k \ge 0$ and $f^{-1}(k) < 0$.
Let $a=f^{-1}(k)$ and $f(0)=b$, that is, $g(a)=b$ and $g(0)=k$.
By the behavior of the algorithm, we have $f(i) \le k$ for all $i \le 0$.
Since $b \ge 0$,  we have $\mu(f)=\mu(g)$ and $\delta(f)=\delta(g)$.
It is trivially true that $\pi(f,i)=\pi(g,i)$ for all $i \in \{0,\dots,n\}-\{k,b\}$.
Thus it is enough to show that $\pi(g,k)+\pi(g,b) = \pi(f,k)+\pi(f,b)-1 $.
By definition $\pi(f,k)=k+1$ and $\pi(g,b)=b+1$.
Since all the $m$ tokens on the negative vertices in $g$ are smaller than $k$,
the other $k$ tokens smaller than $k$ are found on some non-negative vertices.
That is, $ \mrm{Inv}(g,k)=k$ and thus
\(
\pi(g,k) = k = \pi(f,k)-1
\).
On the other hand in $f$, at least one token, namely $k$, on a negative vertex is bigger than $b$.
Therefore, at least $b+1$ tokens smaller than $b$ are on some non-negative vertices in $f$.
That is, $ \mrm{Inv}(f,b) \ge b+1$ and thus $\pi(f,b)=b+1$.
Therefore, $\pi(g)=\pi(f)-1$.

\prg{Case B.2.} $k \ge 0$ and $f^{-1}(k) \ge 0$.
Clearly $\mu(g)=\mu(f)$, $\mrm{Inv}(g,k)=\mrm{Inv}(f,k) - 1$ and $\mrm{Inv}(g,j)=\mrm{Inv}(f,j)$ for all $j \in \{0,\dots,n\}-\{k\}$.
By the behavior of the algorithm, $f(j) = j$ for all $j > k$ and thus $\mrm{Inv}(f,k) \le k$ and $\pi(g,k)=\pi(f,k)-1$. 
Hence $\pi(g)=\pi(f)-1$.
Corollary~\ref{cor:sp_remove} implies $\delta(g)=\delta(f)$.

\prg{Case B.3.} $k < 0$.
The case where $f^{-1}(k) = 0$ can be discussed as in Case~A.3.
We assume $f^{-1}(k)<0$, in which case we have $f(i)=i$ for all $i \ge 0$ by the behavior of the algorithm.
Clearly $[k]_f \in \Delta_f$ and $\Delta_g = \Delta_f - \{[k]_f\}$, thus $|\Delta_g| = |\Delta_f| - 1$ and $\mu(g)=\mu(f)-1$.
On the other hand, $\pi(f,0)=0$ and $\pi(g,0)=1$, while $\pi(f,j)=\pi(g,j)$ for all $j > 0$.
We have $\delta(f)=0$ by Corollary~\ref{cor:sp_remove} and
\(
\delta(g) = -1
\)
by Lemma~\ref{lem:mucon}.
\end{proof}

\begin{corollary}
For any configuration $f$, $\Psi(f) \ge 0$.
Moreover, $\Psi(f) = 0$ if and only if $f$ is the identity.
\end{corollary}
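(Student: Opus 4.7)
The plan is to combine Lemma~\ref{lem:algorithm} with a termination/correctness analysis of Algorithm~\ref{alg:gerbera}. First I will verify that $\Psi$ vanishes on the identity configuration $\mathrm{id}$ directly from the definitions: $\pi(\mathrm{id}) = 0$ since no token is misplaced; $N_{\mathrm{id}} = \emptyset$, so $\mu(\mathrm{id}) = |N_{\mathrm{id}}| + |\Delta_{\mathrm{id}}| = 0$; and the sequence $\vec{i} = \lrangle{-1,-2,\ldots,-m}$ associated with $\mathrm{id}$ satisfies $\max(\vec{i}) = -1 < 0$, triggering the base clause of the recursive definition of $\delta$ and yielding $\delta(\mathrm{id}) = 0$. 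Hence $\Psi(\mathrm{id}) = 0$.

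For the inequality $\Psi(f) \ge 0$, I will run Algorithm~\ref{alg:gerbera} on $f$. The outer \textbf{for} loop makes $n + m + 1$ iterations, each of which performs only finitely many swaps, so the algorithm terminates; and by inspection, once the loop has processed index $k$ the token $k$ occupies vertex $k$ and is not disturbed afterwards, so the final configuration is $\mathrm{id}$. By Lemma~\ref{lem:algorithm}, every swap executed by the algorithm decreases $\Psi$ by exactly one, so if the run performs $N$ swaps in total then $0 = \Psi(\mathrm{id}) = \Psi(f) - N$, giving $\Psi(f) = N \ge 0$.

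For the equality criterion, I will argue by contradiction. Suppose $\Psi(f) = 0$ but $f \ne \mathrm{id}$. Then some token is misplaced, so the run of Algorithm~\ref{alg:gerbera} on $f$ must perform at least one swap. Let $g$ be the configuration produced by the first such swap; Lemma~\ref{lem:algorithm} gives $\Psi(g) = -1$, contradicting the bound $\Psi(g) \ge 0$ obtained by applying the preceding paragraph to $g$. No step of this proof is genuinely deep; the only care required is the routine verification that Algorithm~\ref{alg:gerbera} always terminates at the identity and that it performs at least one swap on any non-identity input, both of which follow by inspection of the pseudocode.
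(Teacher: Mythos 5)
Your proof is correct and follows exactly the route the paper intends: verify $\Psi(\mathrm{id})=0$ componentwise, then run Algorithm~\ref{alg:gerbera} (which terminates at the identity) and use Lemma~\ref{lem:algorithm} to conclude $\Psi(f)$ equals the nonnegative number of swaps performed, so $\Psi(f)\ge 0$ with equality precisely when no swap is needed, i.e.\ $f=\mathrm{id}$. The paper leaves this as an unproved corollary of Lemma~\ref{lem:algorithm}, and your argument fills in precisely the intended details.
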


\subsection{$\Psi$ Is the Right Evaluation Function}

Now we are going to prove that any possible swap on the graph changes the value of $\Psi$ by one.
We have 6 cases depending on the signs of swapped tokens and the vertices where the swap takes place.
Namely we discuss cases where the tokens are both non-negative (Lemma~\ref{lem:pospos}), where one is non-negative and the other is negative (Lemma~\ref{lem:posneg}) and where both are negative (Lemma~\ref{lem:negneg}).
Each case has two subcases depending on whether one of the tokens is on a negative vertex or not.
Lemmas~\ref{lem:swappositive} and~\ref{lem:resolution} are useful to prove those lemmas.

\begin{lemma}\label{lem:swappositive}
Let $(\vec{i};\vec{j})$ and $(\vecp{i};\vec{j})$ be pseudo configurations
 such that $\vec{i}$ contains two distinct non-negative numbers $a,b \ge 0$
and $\vecp{i} = \vec{i}[a/b,b/a]$.
Then
\[
	|\delta(\vec{i};\, \vec{j} ) - \delta(\vecp{i};\,\vec{j} )|=1
\,.\]
\end{lemma}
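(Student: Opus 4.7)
The plan is to prove the lemma by strong induction on $|\vec{j}|$. First I would observe that since $\vec{i}$ contains both non-negative values $a$ and $b$, the pseudo-configuration requirement that the range include $\{-m,\dots,-1\}$ forces $\vec{j}$ to contain at least two negative tokens; hence $|\vec{j}|\ge 2$ throughout, and the ``base'' case $\vec{j}=\lrangle{}$ never actually arises.

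Set $c=\max(\vec{i})=\max(\vecp{i})$ (the swap only permutes positions, so the max values agree) and let $p_a,p_b$ denote the positions of $a,b$ in $\vec{i}$. I would case-split on the first element $j_1$ of $\vec{j}$ according to the defining clauses of $\delta$. The easy cases all reduce cleanly to the induction hypothesis on a strictly shorter $\vec{j}$: if $j_1>c$, both recursions drop $j_1$; if $c>j_1\ge 0$ with $c\notin\{a,b\}$, the substitution $[j_1/c]$ affects the same position in both $\vec{i}$ and $\vecp{i}$, preserving the fact that they differ only by an $(a,b)$-swap; if $c>j_1\ge 0$ with $c\in\{a,b\}$ (say $c=a$, so $a>b$), then $\vec{i}[j_1/a]$ and $\vecp{i}[j_1/a]$ instead differ by swapping the non-negative pair $(j_1,b)$, and the induction hypothesis re-applies with this new pair; finally, if $j_1<0$ with $-j_1\notin\{p_a,p_b\}$, then $i_{-j_1}=i'_{-j_1}$, so the $-1$ discount triggers in lockstep on both sides and the induction hypothesis handles the rest.

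The delicate case is $j_1<0$ with $-j_1\in\{p_a,p_b\}$, say $-j_1=p_a$. Here $i_{p_a}=a$ but $i'_{p_a}=b$, so the two recursions become
\[
\delta(\vec{i};\,\vec{j}) = \delta(\vec{i}[j_1/a];\,\lrangle{a,j_2,\dots}) - [a=c],\qquad
\delta(\vecp{i};\,\vec{j}) = \delta(\vecp{i}[j_1/b];\,\lrangle{b,j_2,\dots}) - [b=c],
\]
where $[\cdot]$ denotes the Iverson bracket. The two sub-states no longer differ by a clean swap: they disagree in one value at position $p_b$ and in the leading entry of $\vec{j}$. My plan is to advance each side by one more step, processing $j_1'=a$ on the left and $j_1'=b$ on the right via the appropriate clause (either $j_1'>c'$ or $c'>j_1'\ge 0$), with a sub-case analysis depending on whether $c>\max(a,b)$, $c=a$, or $c=b$. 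In each sub-case the two computations re-synchronise: their $\vec{i}$-parts become either identical or again differ by a swap of two non-negative entries, while their $\vec{j}$-parts coincide. The induction hypothesis on the shorter suffix then closes the gap, and combining it with the Iverson discount accumulated during the diverging step yields a net difference of exactly $\pm 1$.

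The main obstacle is precisely this last case. One must carefully bookkeep the single $-1$ discount that one of the two computations may incur while the other does not during the diverging step, and verify, in each of the three sub-cases (on whether $c$ equals $a$, equals $b$, or is a third element), that the divergence contributes exactly one unit to the total difference before the trajectories merge. The other cases reduce cleanly to the induction hypothesis, so this analysis completes the proof.
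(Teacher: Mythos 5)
Your strategy is substantively the paper's: the paper also inducts on the recursion structure of $\delta$, case-splits on the first element of $\vec{j}$, dismisses the ``decoupled'' cases as trivial, and does a careful two-step unrolling when $-j_1$ is a position holding $a$ or $b$. The only stylistic difference is that the paper first reduces WLOG (via implicit position/token relabeling of the star) to the case where $a,b$ sit in positions $1,2$ and also assumes $a<b$, so it has two sub-cases (``$b>c$'' and ``$b<c$'') where you have three ($c>\max(a,b)$, $c=a$, $c=b$); your phrasing simply avoids the normalizing step.

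There is, however, one genuine technical gap in your write-up: \emph{strong induction on $|\vec{j}|$ does not work}. In your fourth easy case, $j_1<0$ with $-j_1\notin\{p_a,p_b\}$, the recursive clause of $\delta$ sends $(\vec{i};\,\vec{j})$ to $(\vec{i}[j_1/i_{-j_1}];\,\vec{j}[i_{-j_1}/j_1])$, and the new right-hand component has \emph{the same} length as $\vec{j}$ --- only a content substitution occurs. So $|\vec{j}|$ need not decrease and you cannot invoke the induction hypothesis there. The correct measure is exactly the one the paper uses to certify well-definedness of $\delta$: (number of misplaced tokens in $\vec{i}$) $+$ $|\vec{j}|$, which strictly decreases at every recursive call, this one included, since $j_1$ gets placed at its home $-j_1$. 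Replacing ``strong induction on $|\vec{j}|$'' with induction on that combined potential (equivalently, on the recursion tree of $\delta$) makes the argument go through. Two further small remarks: your assertion that the discount ``triggers in lockstep'' in case 4 is fine because $\max(\vec{i})=\max(\vecp{i})$ and $i_{-j_1}=i'_{-j_1}$; and in the delicate case your last sentence conflates two mechanisms --- when $c>\max(a,b)$ no Iverson discount is incurred and the whole $\pm1$ comes from the induction hypothesis applied to a swap at $p_b$ and the position of $c$, while when $c\in\{a,b\}$ the trajectories merge to \emph{identical} sub-states and the $\pm1$ comes solely from the one-sided discount, with no further induction needed. The computations in each branch are nonetheless correct.
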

Note that $\vec{j}$ cannot be empty, since $\vec{i}\cdot\vec{j}$ is a pseudo configuration.
\begin{proof}
It is enough to show that for any $a, b \ge 0$, $\vec{i},\vec{j}$ and $d$,
\[
	|\delta(\lrangle{a,b} \cdot \vec{i};\, {d} \cdot \vec{j} ) - \delta(\lrangle{b,a} \cdot \vec{i};\,{d} \cdot  \vec{j} )|=1
\,.\]
We show this claim by induction on the definition of $\delta$.
If $d \notin \{-1,-2\}$, the proof is trivial.
For the symmetry, we discuss the case where $d=-1$ only.
Without loss of generality we assume $a < b$.
Let $c = \max(\vec{i})$.

\prg{Case 1.} In the case where $b > c$, we have
\begin{align*}
 \delta(\lrangle{a,b} \cdot \vec{i};\, {-1} \cdot \vec{j} ) 
&= \delta(\lrangle{-1,b} \cdot \vec{i};\, {a} \cdot \vec{j} ) 
= \delta(\lrangle{-1,a} \cdot \vec{i};\, \vec{j} ) \, ,
\\
\delta(\lrangle{b,a} \cdot \vec{i};\, {-1} \cdot \vec{j} ) 
&= \delta(\lrangle{-1,a} \cdot \vec{i};\, \vec{j} ) -1
\,.\end{align*}
The claim holds.

\prg{Case 2.} 
In the case where $b < c$,
\begin{align*}
\delta(\lrangle{a,b} \cdot \vec{i};\, {-1} \cdot \vec{j} ) 
&= \delta(\lrangle{-1,b} \cdot \vec{i};\, {a} \cdot \vec{j} ) 
=  \delta(\lrangle{-1,b} \cdot \vec{i}[a/c];\, \vec{j} ) ,
\\
\delta(\lrangle{b,a} \cdot \vec{i};\, {-1} \cdot \vec{j} ) 
&= \delta(\lrangle{-1,a} \cdot \vec{i};\, {b}\cdot \vec{j} )
= \delta(\lrangle{-1,a} \cdot \vec{i}[b/c];\, \vec{j} ) 
\,.\end{align*}
The claim follows the induction hypothesis.
\end{proof}

Let $f = (\vec{i};\,\vec{j})$ be a pseudo configuration where $\vec{i}$ contains a negative token $a$.
The \emph{$a$-resolution} of $\vec{i}$ is defined by 
\[
\vecp{i} = \vec{i}[a/f(a),f(a)/f^2(a),\dots,f^{k-1}(a)/f^k(a),f^k(a)/a]
\]
where $k$ is the least natural number such that either $f^{k+1}(a)=a$ or $f^{k}(a) \ge 0$.
That is, we relocate tokens $a,f(a),\dots,f^{k-1}(a)$ on negative vertices to their respective goals
and push $f^k(a)$ out to $a$, which is actually its goal if $f^{k+1}(a)=a$. 
We also call $g=(\vecp{i};\,\vec{j})$ the \emph{$a$-resolution} of $f$.
If $\gamma(\vec{i};\,a) = (\vec{j};\,b)$ and $a < 0$, it is easy to see that $\vec{j}$ is the $a$-resolutions of $\vec{i}[a/b]$.
\begin{lemma}\label{lem:resolution}
	If\/ $g$ is the $a$-resolution of\/ $f$,
	then $\delta(f)=\delta(g)$.
\end{lemma}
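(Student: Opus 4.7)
The plan is to proceed by strong induction on the well-founded measure
$M(\vec{i};\vec{j}) = (\text{number of misplaced tokens in } \vec{i}) + |\vec{j}|$,
which, as already observed in the passage preceding the lemma, strictly decreases at every recursive call of $\delta$ when $c \ge 0$. The standing invariant is that the $a$-resolution formula only cyclically permutes the values $a, f(a), \dots, f^k(a)$ along the $a$-cycle inside $\vec{i}$, so $\vec{i}$ and $\vecp{i}$ share the same multiset; in particular $c := \max(\vec{i}) = \max(\vecp{i})$, which means the two $\delta$-recursions branch into the same top-level case whenever the branch depends only on $c$ and on $\vec{j}$.

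The inductive step mirrors the case split in the definition of $\delta$. The base case $c < 0$ gives $\delta(f) = \delta(g) = 0$. When $j_1 > c \ge 0$, both sides recurse with $\vec{j}$ shortened by one and with $\vec{i}, \vecp{i}$ unchanged, so the $a$-resolution relationship persists and the inductive hypothesis applies. When $c > j_1 \ge 0$ and $c$ does \emph{not} belong to the $a$-cycle, the unique occurrence of $c$ sits at the same position in $\vec{i}$ and in $\vecp{i}$, so $(\vec{i}[j_1/c], \vecp{i}[j_1/c])$ remains an $a$-resolution pair and the inductive hypothesis applies.

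The case $j_1 < 0$ is the cleanest once one makes the critical observation that $j_1$ cannot lie in the $a$-cycle $\{a, f(a), \dots, f^{k-1}(a)\}$. Indeed, if $j_1 = f^\ell(a)$ with $\ell \ge 1$, then $j_1$ would have to be the token on vertex $f^{\ell-1}(a)$, which is a negative vertex by the minimality of $k$, contradicting the fact that $j_1 = f(0)$ sits on vertex $0$. The possibility $\ell = 0$ would force the token $a$ to occur both in $\vec{i}$ (by the lemma's hypothesis) and at position $0$ of $\vec{j}$, violating injectivity of the pseudo configuration. Consequently $i_{-j_1} = \vecp{i}_{-j_1}$, the $\{0, -1\}$-valued correction chosen by $\delta$ is the same on both sides, and the recursion preserves the $a$-resolution relationship; the inductive hypothesis then closes the case.

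The main obstacle I anticipate is the residual subcase $c = f^k(a)$ of Case~3, which can arise only in Case~2 of the $a$-resolution. There the unique occurrence of $c$ sits at position $-f^{k-1}(a)$ in $\vec{i}$ but at position $-f^{-1}(a)$ in $\vecp{i}$, so the replacement $[j_1/c]$ acts at different positions on the two sides. To dispatch this, I would show that substituting the ``exit value'' $f^k(a)$ in the definition of the $a$-resolution by $j_1$ exhibits $\vecp{i}[j_1/c]$ as the $a$-resolution of $\vec{i}[j_1/c]$ in exactly the sense of the original definition: the negative cycle vertices $a, f(a), \dots, f^{k-1}(a)$ are still sent to their goals, and the new exit value $j_1$ is pushed into the old position of $a$. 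Once this bookkeeping is in place the inductive hypothesis again applies, and the proof is complete.
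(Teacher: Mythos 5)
Your strategy---induction on the well-founded measure that drives the $\delta$-recursion, commuting a single $\delta$-step past the $a$-resolution---is the right one, and your handling of the residual subcase $c=f^k(a)$ in the third branch is correct. There is, however, a genuine gap in the branch $j_1<0$.

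You correctly argue that the \emph{token} $j_1$ cannot equal any of $a,f(a),\dots,f^k(a)$. But the conclusion you draw from this, $i_{-j_1}=\vecp{i}_{-j_1}$, requires that the \emph{vertex} $j_1$ not be among those whose entry is altered by the $a$-resolution, namely $f^{-1}(a),a,f(a),\dots,f^{k-1}(a)$. These are different sets. Your injectivity/negativity argument rules out $j_1\in\{a,f(a),\dots,f^{k-1}(a)\}$, but it does not exclude $j_1=f^{-1}(a)$: when the orbit closes ($f^{k+1}(a)=a$) you have $f^{-1}(a)=f^k(a)$ and are safe, but when it exits at a non-negative token ($f^k(a)\ge 0$) the vertex $f^{-1}(a)$ is a genuinely distinct negative vertex, and nothing prevents the token on vertex $0$ from being $f^{-1}(a)$. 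In that situation $i_{-j_1}=a$ (negative, so never $\max(\vec{i})$) while $\vecp{i}_{-j_1}=f^k(a)$ (non-negative, and possibly equal to $\max(\vecp{i})$), so the $\{0,-1\}$ correction fires on the $g$-side but not on the $f$-side at this step, and the two recursive calls are no longer an $a$-resolution pair. For instance, $m=2$, $\vec{i}=\lrangle{1,-1}$, $\vec{j}=\lrangle{-2,0,2}$, $a=-1$, $k=1$ gives $j_1=-2=f^{-1}(a)$, $i_2=-1$, $\vecp{i}_2=1=\max(\vec{i})$; the $-1$ correction appears immediately on the $g$-side and only two recursive calls later on the $f$-side.

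The lemma still holds here, but showing it requires synchronizing the two recursions over an unequal number of steps: $\delta(f)$ unwinds through $j_1,a,f(a),\dots,f^k(a)$ in $k+1$ calls before the head of $\vec{j}$ turns non-negative, at which point its state coincides with the state $(\vecp{i}[j_1/f^k(a)];\,f^k(a)\cdot\lrangle{j_2,\dots})$ reached by $\delta(g)$ after a single call, and the $-1$ correction fires exactly once on each side, precisely when $f^k(a)=\max(\vec{i})$. Your write-up invokes the induction hypothesis pointwise and so does not contain this argument; as written, the $j_1=f^{-1}(a)$ subcase is open.
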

\begin{proof}
	By induction on the definition of $\delta$.
\end{proof}

\begin{lemma}\label{lem:pospos}
Suppose that $g$ is obtained from $f$ by swapping non-negative tokens.
Then $|\Psi(g)-\Psi(f)|=1$.
\end{lemma}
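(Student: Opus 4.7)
The plan is to mirror the proof of Lemma~\ref{lem:lp_positiveswap} from the lollipop case, adapted to the star-path setting where the evaluation function is $\Psi = \pi + \mu + \delta$ in place of $\pi + \nu$. First I would observe that since both swapped tokens are non-negative, no negative token changes position; in particular $N_f=N_g$ and the cycle structure of $f$ restricted to negative tokens is preserved, so $\Delta_f=\Delta_g$ and $\mu(f)=\mu(g)$. The remaining task is to show $|(\pi+\delta)(f)-(\pi+\delta)(g)|=1$. I would then split the analysis according to where the swap takes place: either a path edge $\{k,k+1\}$ with $k \ge 0$, or a star edge $\{k,0\}$ with $k<0$. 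Let $a$ and $b$ be the two swapped tokens, with $a>b\ge 0$ without loss of generality.

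For the path-edge case I would write $f=(\vec{i};\,\vec{j}_1 \cdot a \cdot b \cdot \vec{j}_2)$ and $g=(\vec{i};\,\vec{j}_1 \cdot b \cdot a \cdot \vec{j}_2)$ and split on $\mrm{Inv}(f,a)$, following the three subcases of Case~2 of the proof of Lemma~\ref{lem:lp_positiveswap}. When $\mrm{Inv}(f,a)\le a$, the sequence $\vec{i}\cdot\vec{j}_1$ contains at least $m$ tokens smaller than $a$, so Corollary~\ref{cor:sp_remove} would give $\delta(f)=\delta(g)=\delta(\vec{i};\,\vec{j}_1\cdot b\cdot \vec{j}_2)$, while $\pi(f,a)-\pi(g,a)=1$. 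When $\mrm{Inv}(f,a)=a+1$, Lemma~\ref{lem:sp_shorten} produces a common $\vecp{i}$ with $c'=\max(\vecp{i})>a$, and a routine unfolding of $\delta$ on both sides would collapse each to $\delta(\vecp{i}[b/c'];\,\vec{j}_2)+\alpha$, giving $\delta(f)=\delta(g)$; again $\pi$ supplies the difference of $1$. When $\mrm{Inv}(f,a)>a+1$, the sequence $\vecp{i}$ would contain at least two tokens larger than $a$, say $c'$ and $d$; unfolding $\delta$ on both sides would reduce $\delta(f)$ and $\delta(g)$ to $\delta$ applied to $\vecp{i}[a/c',b/d]$ and $\vecp{i}[b/c',a/d]$ respectively, after which Lemma~\ref{lem:swappositive} supplies the unit difference in $\delta$, with $\pi$ staying fixed.

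For the star-edge case the swap exchanges the token $a$ on a vertex $k<0$ with the token $b$ on vertex $0$. The plan here is to dissect $\pi$ in the style of Cases~2 and~3 of the proof of Lemma~\ref{lem:lp_borderswap}, thresholded by whether $\mrm{Inv}(g,a)\le a$: in the low-inversion regime $\pi(f,a)-\pi(g,a)=1$ and Corollary~\ref{cor:sp_remove} together with Lemma~\ref{lem:sp_shorten} would show $\delta$ is unchanged, whereas in the high-inversion regime $\pi$ is unchanged and the net effect on $\delta$ reduces to a single application of Lemma~\ref{lem:swappositive} between sequences that differ only by a swap of two non-negative entries.

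The hard part will be the bookkeeping for $\delta$ in the star-swap case, because vertex $0$ plays a distinguished role in the recursive definition of $\delta$: the update $\vec{i}\to\vec{i}[b/a]$ is coupled with the change $b\cdot\vec{j}'\to a\cdot\vec{j}'$, and the two sides must be unfolded in lockstep to identify a common canonical prefix to which Lemma~\ref{lem:sp_shorten} applies on both sides. Once this common reduction is set up, the remaining comparison is essentially the symmetric situation already handled by Lemma~\ref{lem:swappositive}, and the case analysis on $\mrm{Inv}$ will determine which of $\pi$ or $\delta$ absorbs the unit change.
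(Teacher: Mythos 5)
Your plan is correct and follows essentially the same route as the paper's proof of Lemma~\ref{lem:pospos}: observe that $\mu$ is unchanged because no negative token moves, split into a path-edge case and a star-joint case, and within each split on the inversion count of the larger token, comparing $\delta$'s via Lemma~\ref{lem:sp_shorten}, Corollary~\ref{cor:sp_remove} and Lemma~\ref{lem:swappositive}. The only cosmetic difference is that in the star-joint case the paper unfolds $\delta$ directly once from its recursive definition rather than detouring through Lemma~\ref{lem:sp_shorten}, but your route yields the same equalities.
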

\begin{proof}
Suppose that non-negative tokens $a$ and $b$ are swapped.
Obviously $\mu(g)=\mu(f)$.
We have two cases depending on where those tokens are swapped.

\prg{Case 1.}
The swap takes place on two non-negative vertices.  That is,
\begin{gather*}
f = (\vec{i};\, \vec{j}_1 \cdot \lrangle{a,b}\cdot \vec{j}_2),
\\
g = (\vec{i};\, \vec{j}_1 \cdot \lrangle{b,a}\cdot \vec{j}_2),
\end{gather*}
for some $a,b \ge 0$.
Without loss of generality we assume $a < b$.
We have $\pi(f,i)=\pi(g,i)$ for all $i \in \{0,\dots,n\}-\{b\}$, while $\mrm{Inv}(g,b)=\mrm{Inv}(f,b)+1$.
By Lemma~\ref{lem:sp_shorten}, there exists $\vecp{i}$ consisting of the $m$ smallest tokens from $\vec{i}\cdot\vec{j}_1$ such that
\begin{align*}
	\delta(f) &= \delta(\vecp{i};\, \lrangle{a,b}\cdot \vec{j}_2) + \alpha,
\\	\delta(g) &= \delta(\vecp{i};\, \lrangle{b,a}\cdot \vec{j}_2) + \alpha
\end{align*}
for some $\alpha \le 0$.

\prg{Case 1.1.}  $\mrm{Inv}(f,b) < b$ and $\mrm{Inv}(g,b) < b+1$.
In this case, we have $\pi(f,b)=\mrm{Inv}(f,b)$, $\pi(g,b)=\mrm{Inv}(g,b)$ and thus $\pi(g) = \pi(f)+1 $.
Corollary~\ref{cor:sp_remove} applies to both $f$ and $g$ and we obtain 
\[ \delta(g) = \delta(f) = \delta(\vec{i};\, \vec{j}_1 \cdot {a}\cdot \vec{j}_2)
\] and $\Psi(g)=\Psi(f)+1$.

\prg{Case 1.2.}  $\mrm{Inv}(f,b) = b$ and $\mrm{Inv}(g,b) = b+1$.
In this case, we have $\pi(g) = \pi(f)+1$.
Since $\vec{j}_2$ contains $b$ tokens smaller than $b$, on the other hand $\vec{i}\cdot\vec{j}_1$ contains exactly $m-1$ tokens smaller than $b$.
Let $c$ be the unique element of $\vecp{i}$ which is bigger than $b$.
Then
\begin{align*}
	\delta(f) &= \delta(\vecp{i}[a/c];\, {b}\cdot \vec{j}) + \alpha = \delta(\vecp{i}[a/c];\, \vec{j}) + \alpha,
\\	\delta(g) &= \delta(\vecp{i}[b/c];\, {a}\cdot \vec{j}) + \alpha = \delta(\vecp{i}[a/c];\, \vec{j}) + \alpha,
\end{align*}
since $b$ is the biggest in $\vecp{i}[b/c]$.
Hence $\delta(g) = \delta(f)$ and $\Psi(g)=\Psi(f)+1$.

\prg{Case 1.3.}  $\mrm{Inv}(f,b) > b$ and $\mrm{Inv}(g,b) > b+1$.
In this case, $\pi(f,b)=\pi(g,b)=b+1$ and $\pi(g) = \pi(f)$.
Since $\vec{j}_2$ contains at least $b+1$ tokens smaller than $b$, $\vec{i}\cdot\vec{j}_1$ contains at most $m-2$ tokens smaller than $b$.
Hence $\vecp{i}$ contains at least $2$ tokens bigger than $b$.
Let $c_1$ and $c_2$ be the biggest and second biggest in $\vecp{i}$, respectively.
Then
\begin{align*}
	\delta(f) &= \delta(\vecp{i}[a/c_1,b/c_2];\, {b}\cdot \vec{j}) + \alpha ,
\\	\delta(g) &= \delta(\vecp{i}[b/c_1,a/c_2];\, {a}\cdot \vec{j}) + \alpha .
\end{align*}
By Lemma~\ref{lem:swappositive}, we obtain  $|\delta(g) - \delta(f)|=1$ and $|\Psi(g)-\Psi(f)|=1$.

\prg{Case 2.}
The swap takes place between a negative vertex and $0$.
Without loss of generality we may assume that the negative position is $-1$ and $f(-1)<g(-1)$.
That is,
\begin{gather*}
f = (a \cdot \vec{i};\, b \cdot \vec{j}),
\\
g = (b \cdot \vec{i};\, a \cdot \vec{j}),
\end{gather*}
where $0 \le a<b$.
By definition $\pi(f,a)=a+1$ and $\pi(g,b)=b+1$.
Since there are $a+m$ tokens smaller than $a$, of which at most $m-1$ tokens can be in $b \cdot \vec{i}$,
we have $\mrm{Inv}(g,a) \ge a + 1$.  That is, $\pi(g,a)=\pi(f,a)$.
On the other hand, since there are $b+m$ tokens smaller than $b$, of which at most $m$ tokens can be in $a \cdot \vec{i}$,
we have $\mrm{Inv}(f,b) \ge b$.

\prg{Case 2.1.} $\mrm{Inv}(f,b) = b$, which means $\pi(f,b)=b = \pi(g,b)-1$.
In this case, all the elements of $\vec{i}$ must be smaller than $b$.
We have 
\begin{align*}
\delta(f) &= \delta(g)  =  \delta(a \cdot \vec{i};\, \vec{j})
\,.\end{align*}
Therefore, $\Psi(g)=\Psi(f)+1$.

\prg{Case 2.2.} $\mrm{Inv}(f,b) > b$, which means $\pi(f,b)=b+1=\pi(g,b)$.
In this case, there must be $c > b$ in $\vec{i}$. We have
\begin{align*}
\delta(f) &= \delta(a \cdot \vec{i};\, b \cdot \vec{j}) =  \delta(a \cdot \vec{i}[b/c];\, \vec{j})\,,
\\
\delta(g) &= \delta(b \cdot \vec{i};\, a \cdot \vec{j}) =  \delta(b \cdot \vec{i}[a/c];\, \vec{j})
\,.\end{align*}
Lemma~\ref{lem:swappositive} implies $|\delta(g)-\delta(f)|=1$.
Therefore, $\Psi(g)=\Psi(f)+1$.
\end{proof}

\begin{lemma}\label{lem:posneg}
Suppose that $g$ is obtained from $f$ by swapping a non-negative token and a negative one.
Then $|\Psi(g)-\Psi(f)|=1$.
\end{lemma}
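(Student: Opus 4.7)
The proof proceeds by case analysis on the type of edge at which the swap takes place. Write $e=\{u,v\}$, $a<0$ for the negative token and $b\ge 0$ for the non-negative one. There are two cases: (P) $u,v\ge 0$ (a path edge $\{j,j+1\}$), and (S) $\{u,v\}=\{0,k\}$ with $k<0$ (a star edge). In each case we track how the three summands of $\Psi=\pi+\mu+\delta$ change and verify that the net change is exactly $\pm 1$.

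In Case~P both endpoints are non-negative, so the negative token $a$ remains on a non-negative vertex and is therefore misplaced in both $f$ and $g$; likewise $b$ stays non-negative. Since the swap does not modify $f$ at any negative vertex, $f$ and $g$ agree on $\{-m,\dots,-1\}$, so the orbits of $f$ contained entirely in $\{-m,\dots,-1\}$ coincide with those of $g$. This gives $N_f=N_g$ and $\Delta_f=\Delta_g$, hence $\mu(f)=\mu(g)$. A short inversion check shows $\mrm{Inv}(g,c)=\mrm{Inv}(f,c)$ for every non-negative $c\ne b$, so $\pi(f,c)=\pi(g,c)$ for such $c$; on the other hand $\mrm{Inv}(g,b)-\mrm{Inv}(f,b)=\pm 1$, the sign determined by the direction in which $b$ moves. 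Applying Lemma~\ref{lem:sp_shorten} to absorb the common prefix reduces the $\delta$-comparison to that of $\delta(\vecp{i};\,\lrangle{a,b}\cdot\vec{j}_2)$ and $\delta(\vecp{i};\,\lrangle{b,a}\cdot\vec{j}_2)$; splitting on the three ranges $\mrm{Inv}(f,b)<b$, $\mrm{Inv}(f,b)=b$, and $\mrm{Inv}(f,b)>b$ (mimicking Case~1 of the proof of Lemma~\ref{lem:pospos}) and using Corollary~\ref{cor:sp_remove} to drop the negative entry $a$ whenever enough small tokens already precede it, we verify in each sub-case that $\pi+\delta$ changes by exactly $\pm 1$.

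In Case~S we split by which of $f(0)$ and $f(k)$ holds $a$. Here the negative token $a$ moves between the center $0$ and the negative vertex $k$, and all three summands of $\Psi$ can shift simultaneously. The change in $\mu$ depends jointly on whether $a=k$, which toggles the membership of $a$ in $N$, and on whether $0$ and $k$ lie in the same or different cycles of $f$, which controls whether a cycle entirely inside $\{-m,\dots,-1\}$ is created or destroyed in passing to $g$. The change in $\pi$ is concentrated on $b$ only and flips the clause used in the definition of $\pi(\cdot,b)$, since $b$ moves between a negative vertex and vertex~$0$; hence it is $0$ or $\pm 1$ according to the relation of $\mrm{Inv}(\cdot,b)$ to $b$. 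The change in $\delta$ is handled using Lemma~\ref{lem:mucon} to convert the star swap into the $\gamma$-action simulating the inner \textbf{while}-loop, and Lemma~\ref{lem:resolution} (together with Lemma~\ref{lem:swappositive} when two effectively non-negative tokens end up swapped inside the $\vec{i}$-block) takes care of the remaining $\delta$-bookkeeping.

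The main obstacle is Case~S, because $\pi$, $\mu$, and $\delta$ all move at once and the sub-cases proliferate: one must enumerate the combinations of (a) whether $a=k$, (b) whether the cycle of $f$ through $k$ lies entirely in $\{-m,\dots,-1\}$, (c) the relation of $b$ to $\max(\vec{i})$, and (d) the position of $\mrm{Inv}(\cdot,b)$ relative to $b$, then verify that the signed contributions telescope to $\pm 1$. The algebraic bookkeeping, though tedious, is fully controlled by the helper Lemmas~\ref{lem:mucon},~\ref{lem:sp_shorten},~\ref{lem:sp_remove},~\ref{lem:swappositive},~\ref{lem:resolution} and Corollary~\ref{cor:sp_remove}, and parallels the strategies already used in Lemmas~\ref{lem:pospos} and~\ref{lem:lp_positiveswap}.
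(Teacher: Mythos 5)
Your decomposition into path-edge and star-edge swaps, the observation that $\mu$ is untouched in the path case, and the reliance on Lemmas~\ref{lem:sp_shorten}, \ref{lem:swappositive}, \ref{lem:mucon}, \ref{lem:resolution} and Corollary~\ref{cor:sp_remove} match the paper's Cases~1 and~2 exactly, and the case-splitting factors you enumerate for the star swap are precisely the ones the paper operationalizes via $\gamma(b\cdot\vec{i};\,a)$ and the threshold $\pi(f,b)\in\{b,b+1\}$. One caution: your phrase \emph{whether $0$ and $k$ lie in the same or different cycles of $f$} is not quite the right $\mu$-criterion, since the two can lie in different cycles while the cycle through $k$ still contains a non-negative token other than $0$, so that no $\Delta$-member is created or destroyed; the correct test, which you do restate later as your criterion~(b), is whether the cycle of $f$ through $k$ lies entirely in $\{-m,\dots,-1\}$, encoded in the paper as $d=b$ in the $\gamma$-output and requiring, as the paper does via Lemma~\ref{lem:algorithm}, separate handling of $a=-1$, where $d=b$ holds vacuously yet $[a]_f\notin\Delta_f$.
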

\begin{proof}
{\ }
\prg{Case 1.} The swap takes place on non-negative vertices.
Let
\begin{gather*}
f = (\vec{i};\, \vec{j}_1 \cdot \lrangle{a,b}\cdot \vec{j}_2),
\\
g = (\vec{i};\, \vec{j}_1 \cdot \lrangle{b,a}\cdot \vec{j}_2),
\end{gather*}
where $a < 0 \le b$.
Obviously, $\mu(f)=\mu(g)$, $\mrm{Inv}(g,b) = \mrm{Inv}(f,b)+1$ and $\mrm{Inv}(g,k) = \mrm{Inv}(f,k)$ for any other $k \ge 0$.
By Lemma~\ref{lem:sp_shorten}, there exists $\vecp{i}$ consisting of the $m$ smallest tokens from $\vec{i}\cdot\vec{j}_1$ such that
\begin{align*}
	\delta(f) &= \delta(\vecp{i};\, \lrangle{a,b}\cdot \vec{j}_2) + \alpha,
\\	\delta(g) &= \delta(\vecp{i};\, \lrangle{b,a}\cdot \vec{j}_2) + \alpha
\end{align*}
for some $\alpha \le 0$.

\prg{Case 1.1.}
  $\mrm{Inv}(f,b) < b$ and $\mrm{Inv}(g,b) < b+1$.
In this case, we have $\pi(g) = \pi(f)+1 $.
Corollary~\ref{cor:sp_remove} applies to both $f$ and $g$ and we obtain $\delta(g) = \delta(f)$ and $\Psi(g)=\Psi(f)+1$.

\prg{Case 1.2.}
  $\mrm{Inv}(f,b) = b$ and $\mrm{Inv}(g,b) = b+1$.
In this case, we have $\pi(g) = \pi(f)+1$.
It is enough to show $\delta(f)=\delta(g)$.
Since $\vec{j}_2$ contains $b$ tokens smaller than $b$, $\vec{i}\cdot\vec{j}_1$ and $\vecp{i}$ contain exactly $m-1$ tokens smaller than $b$.
Let $c = \max(\vecp{i})$, which is the unique element of $\vecp{i}$ bigger than $b$.
Let $(\vecpp{i};\,d) = \gamma(\vecp{i};\,a)$.

Suppose $c=d$. We have $\gamma(\vecp{i}[b/d];\,a) = (\vecpp{i};\,b)$, where $b$ is the biggest in $\vecp{i}[b/d]$. Thus 
\begin{align*}
	\delta(f) &= \delta(\vecpp{i};\, {b}\cdot \vec{j}_2) - 1 + \alpha = \delta(\vecpp{i};\, \vec{j}_2) - 1+\alpha\,,
\\	\delta(g) &= \delta(\vecp{i}[b/d];\, {a}\cdot \vec{j}_2) + \alpha = \delta(\vecpp{i};\, \vec{j}_2) - 1+\alpha
\end{align*}
by Lemma~\ref{lem:mucon}.

If $c>d$, we have $\gamma(\vecp{i}[b/c];\,a) = (\vecpp{i}[b/c];\,d)$ and
\begin{align*}
	\delta(f) &= \delta(\vecpp{i};\, \lrangle{d,b}\cdot \vec{j}_2) + \alpha = \delta(\vecpp{i}[d/c];\, {b}\cdot \vec{j}_2) + \alpha =\delta(\vecpp{i}[d/c];\, \vec{j}_2) + \alpha\,,
\\	\delta(g) &= \delta(\vecp{i}[b/c];\, {a}\cdot \vec{j}_2) + \alpha = \delta(\vecpp{i}[b/c];\, d \cdot \vec{j}_2) + \alpha = \delta(\vecpp{i}[d/c];\, \vec{j}_2) + \alpha
\end{align*}
by Lemma~\ref{lem:mucon}.

\prg{Case 1.3.} $\mrm{Inv}(f,b) > b$ and $\mrm{Inv}(g,b) > b+1$.
In this case, we have $\pi(g) = \pi(f)$. It is enough to show $|\delta(g)-\delta(f)|=1$.
Since $\vec{j}_2$ contains at least $b+1$ tokens smaller than $b$, $\vec{i}\cdot\vec{j}_1$ contains at most $m-2$ tokens smaller than $b$.
Hence $\vecp{i}$ contains at least $2$ tokens bigger than $b$.
Let $c_1$ and $c_2$ be the biggest and second biggest in $\vecp{i}$, respectively.
Let $(\vecpp{i};d) = \gamma(\vecp{i};\,a)$.

If $d=c_1$, then by Lemma~\ref{lem:mucon},
\begin{align*}
	\delta(f) &= \delta(\vecpp{i};\, {b}\cdot \vec{j}_2) - 1 + \alpha \,,
\\	\delta(g) &= \delta(\vecp{i}[b/c_1];\, {a}\cdot \vec{j}_2) + \alpha =  \delta(\vecpp{i};\, {b}\cdot \vec{j}_2) + \alpha \,.
\end{align*}

If $d=c_2$, then by Lemma~\ref{lem:mucon},
\begin{align*}
	\delta(f) &= \delta(\vecpp{i};\, \lrangle{c_2,b}\cdot \vec{j}_2) + \alpha = \delta(\vecpp{i}[c_2/c_1];\, b \cdot \vec{j}_2) + \alpha = \delta(\vecpp{i}[b/c_1];\, \vec{j}_2) + \alpha \,,
\\	\delta(g) &= \delta(\vecp{i}[b/c_1];\, {a}\cdot \vec{j}_2) + \alpha =  \delta(\vecpp{i}[b/c_1];\, \vec{j}_2) - 1+ \alpha \,.
\end{align*}

If $d<c_2$, then by Lemma~\ref{lem:mucon},
\begin{align*}
	\delta(f) &= \delta(\vecpp{i};\, \lrangle{d,b}\cdot \vec{j}_2) + \alpha = \delta(\vecpp{i}[d/c_1][b/c_2];\, \vec{j}_2) + \alpha\,,
\\	\delta(g) &= \delta(\vecp{i}[b/c_1];\, {a}\cdot \vec{j}_2) + \alpha =  \delta(\vecpp{i}[b/c_1];\, d \cdot \vec{j}_2) + \alpha = \delta(\vecpp{i}[b/c_1][d/c_2];\, \vec{j}_2) + \alpha\,.
\end{align*}
Lemma~\ref{lem:swappositive} implies $|\delta(g)-\delta(f)|=|\delta(\vecpp{i}[b/c_1][d/c_2])-\delta(\vecpp{i}[d/c_1][b/c_2])|=1$.

\prg{Case 2.} 
The swap takes place on $0$ and a negative vertex.
Without loss of generality we may assume
\begin{gather*}
f = (a \cdot \vec{i};\, b \cdot \vec{j}),
\\
g = (b \cdot \vec{i};\, a \cdot \vec{j}),
\end{gather*}
where $a < 0 \le b$.
For the case where $a=-1$, we have already proved that $\Psi(g)=\Psi(f)+1$ in Lemma~\ref{lem:algorithm}.
Hereafter we assume that $a \neq -1$.
Clearly $\pi(g,i)=\pi(f,i)$ for all $i \in \{0,\dots,n\}-\{b\}$ and $\pi(g,b)=b+1$.
$\pi(f,b)=b$ if and only if every token in $\vec{i}$ is smaller than $b$.
Let $(\vecp{i};\,d) = \gamma(b \cdot \vec{i};\, a) $.

\prg{Case 2.1.} Suppose $\pi(f,b)=b$, in which case $\pi(g) = \pi(f)+1$.

If $d=b$, there is $k \ge 1$ such that $g^i(a) < -1$ for $i \in \{0,\dots,k-1\}$ and $g^k(a)=-1$.
Since $f(i)=g(i)$ for $i < -1$ and $f(-1)=a$, we have $[a]_f \in \Delta_f$.
Hence $\Delta_f = \Delta_g \cup \{[a]_f\}$ and thus $\mu(g)=\mu(f)-1$.
We have
\begin{align*}
	\delta(f) &= \delta(a \cdot \vec{i};\, b \cdot \vec{j})= \delta(a \cdot \vec{i};\, \vec{j})\, ,
\\	\delta(g) &= \delta(b \cdot \vec{i};\, {a} \cdot \vec{j}) = \delta(\vecp{i};\, \vec{j}) - 1\,.
\end{align*}
Since $\vecp{i}$ is the $a$-resolution of $a \cdot \vec{i}$, 
by Lemma~\ref{lem:resolution}, we have $\delta(\vecp{i};\, \vec{j}) =\delta(a \cdot \vec{i};\, \vec{j})$ and thus $\Psi(g)=\Psi(f)-1$.

If $d < b$, $[a]_f \notin \Delta_f$. In this case, $\mu(g)=\mu(f)$ and 
$\vecp{i}$ has the form $b \cdot \vecpp{i}$.
\begin{align*}
	\delta(f) &= \delta(a \cdot \vec{i};\, \vec{j}) \,,
\\	\delta(g) &= \delta(b \cdot \vecpp{i};\, d \cdot \vec{j})=\delta(d \cdot \vecpp{i};\, \vec{j})\,.
\end{align*}
Since $d \cdot \vecpp{i}$ is the $a$-resolution of $a \cdot \vec{i}$,
by Lemma~\ref{lem:resolution},
we have $\delta(a \cdot \vec{i};\, \vec{j}) = \delta(d \cdot \vecpp{i};\, \vec{j})$ and thus $\Psi(g)=\Psi(f)+1$.

\prg{Case 2.2.} Suppose $\pi(f,b)=b+1$, in which case $\pi(g) = \pi(f)$.
Since there are at least $b+1$ tokens in $\vec{j}$ smaller than $b$, there are at most $m-1$ tokens smaller than $b$ in $\vec{i}$.
Let $c = \max(\vec{i})$, which is therefore bigger than $b$.

If $d=c$, $[a]_f \notin \Delta_f$. In this case, $\mu(g)=\mu(f)$ and 
\begin{align*}
	\delta(f) &= \delta(a \cdot \vec{i}[b/c];\, \vec{j}) ,
\\	\delta(g) &= \delta(b \cdot \vec{i};\, {a} \cdot \vec{j}) = \delta(\vecp{i};\, \vec{j}) - 1\,.
\end{align*}
Since $\vecp{i}$ is the $a$-resolution of $a \cdot \vec{i}[b/c]$,
we have $ \delta(g) = \delta(f)-1$ and thus $\Psi(g)=\Psi(f)-1$.

If $d=b$, $[a]_f \in \Delta_f$ and $[b]_g \notin \Delta_g$ by the same reason as in Case~2.1.
In this case, $\mu(g)=\mu(f)-1$ and 
\begin{align*}
	\delta(f) &= \delta(a \cdot \vec{i};\, b \cdot\vec{j}) = \delta(a \cdot \vec{i}[b/c];\,  \vec{j}) ,
\\	\delta(g) &= \delta(b \cdot \vec{i};\, {a} \cdot \vec{j}) = \delta(\vecp{i};\, b \cdot \vec{j}) = \delta(\vecp{i}[b/c];\, \vec{j}) \,.
\end{align*}
Since $\vecp{i}[b/c]$ is the $a$-resolution of $a \cdot \vec{i}[b/c]$, 
 we have $\delta(f)=\delta(g) $ and thus $\Psi(g)=\Psi(f)-1$.

If $d \notin \{ b,c \}$, $[a]_f \notin \Delta_f$. In this case, $\mu(g)=\mu(f)$.
There is $\vecpp{i}$ such that $\vecp{i}=b \cdot \vecpp{i}$.
Let $h$ be obtained from $g$ by exchanging the tokens $b$ and $d$. Then $|\delta(h)-\delta(g)|=1$ and
\begin{align*}
	\delta(f) &= \delta(a \cdot \vec{i};\, b \cdot \vec{j}) = \delta(a \cdot \vec{i}[b/c];\,  \vec{j}) ,
\\	\delta(g) &= \delta(b \cdot \vec{i};\, {a} \cdot \vec{j}) = \delta(b\cdot\vecpp{i};\, d \cdot \vec{j}) = \delta(b\cdot\vecpp{i}[d/c];\, \vec{j})\,,
\\	\delta(h) &= \delta(d \cdot (\vec{i}[b/d]);\, {a} \cdot \vec{j}) = \delta(d\cdot\vecpp{i};\, b \cdot \vec{j}) = \delta(d\cdot\vecpp{i}[b/c];\, \vec{j})\,.
\end{align*}
Since $d\cdot\vecpp{i}[b/c]$ is the $a$-resolution of $a \cdot \vec{i}[b/c]$, we have $\delta(f)=\delta(h)$ and thus $|\delta(g)-\delta(f)|=1$. $|\Psi(g)-\Psi(f)|=1$.
\end{proof}

\begin{lemma}\label{lem:negneg}
Suppose that $g$ is obtained from $f$ by swapping negative tokens.
Then $|\Psi(g)-\Psi(f)|=1$.
\end{lemma}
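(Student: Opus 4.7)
The plan is to follow the case-analysis pattern of Lemmas~\ref{lem:pospos} and~\ref{lem:posneg}, splitting on where the swap takes place. Since both swapped tokens are negative, the swap lies either on a star edge $\{-k,0\}$ (WLOG $k=1$ by symmetry of the star center) or on a path edge $\{k,k+1\}$ with $k\ge 0$.

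\prg{Case 1.} Write $f=(a\cdot\vec{i};b\cdot\vec{j})$ and $g=(b\cdot\vec{i};a\cdot\vec{j})$ with $a,b<0$. First, $\pi(f)=\pi(g)$: no non-negative token moves, and for any non-negative token $j$ sitting at a vertex $v\ge 0$, the pair $\{a,b\}$ contributes $[-1>v]+[0>v]=0$ to $\mathrm{Inv}(\cdot,j)$ in either configuration. We split on whether $a=-1$, $b=-1$, or neither. The subcase $b=-1$ is exactly an algorithmic Type~A swap $f\to g$, so Lemma~\ref{lem:algorithm} gives $\Psi(g)=\Psi(f)-1$. Conversely, if $a=-1$, then $g\to f$ is an algorithmic Type~A swap (the token at vertex $0$ in $g$ is $-1$, moving to its home vertex $-1$), and applying Lemma~\ref{lem:algorithm} to $(g,f)$ yields $\Psi(g)=\Psi(f)+1$. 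For the remaining subcase $a,b\ne -1$, we have $|N_g|=|N_f|$; the swap merges or splits the $f$-orbits $[-1]_f$ and $[0]_f$, producing a change of $\pm 1$ or $0$ in $|\Delta|$, which is compensated by the corresponding change in $\delta$ (tracked via Lemmas~\ref{lem:mucon} and~\ref{lem:resolution}) so that $|\Psi(g)-\Psi(f)|=1$.

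\prg{Case 2.} Write $f=(\vec{i};\vec{j}_1\cdot\lrangle{a,b}\cdot\vec{j}_2)$ and $g=(\vec{i};\vec{j}_1\cdot\lrangle{b,a}\cdot\vec{j}_2)$ with $a,b<0$. Both $\pi$ and $\mu$ are preserved: no non-negative token moves; the set of positions occupied by $\{a,b\}$ remains $\{k,k+1\}$, so $\mathrm{Inv}(\cdot,j)$ is preserved for each non-negative $j$; both $a$ and $b$ remain misplaced; and the $f$-orbits meeting $\{k,k+1\}$ each already contain the non-negative vertex $k$ or $k+1$, so no orbit in $\Delta$ is affected by merging or splitting. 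It therefore suffices to show $|\delta(g)-\delta(f)|=1$. By Lemma~\ref{lem:sp_shorten}, there exist $\vecp{i}$ (the $m$ smallest entries of $\vec{i}\cdot\vec{j}_1$) and $\alpha\le 0$ with $\delta(f)=\delta(\vecp{i};\lrangle{a,b}\cdot\vec{j}_2)+\alpha$ and $\delta(g)=\delta(\vecp{i};\lrangle{b,a}\cdot\vec{j}_2)+\alpha$. Since $\vec{i}\cdot\vec{j}_1$ contains at most $m-2$ negative tokens, $c:=\max(\vecp{i})\ge 0$. Unfolding the recursion of $\delta$ for a few steps reduces the task to a swap of the entries at positions $-a$ and $-b$ within $\vecp{i}$, to which an analogue of Lemma~\ref{lem:swappositive} applies and yields the required $\pm 1$ difference.

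The main obstacle will be the final subcase of Case~1, in which $a,b\ne -1$, both $\pi$ and $|N|$ are unchanged, and the entire $\pm 1$ change in $\Psi$ must arise from a coordinated shift between $|\Delta|$ and $\delta$. Determining exactly when the $g$-orbit of $-1$ lands in $\Delta_g$ (namely, when the image sequence $b,f(b),\dots,f^q(b)=g^{-1}(-1)$ consists entirely of negative tokens) and matching this with the recursive behaviour of $\delta$ through Lemmas~\ref{lem:mucon} and~\ref{lem:resolution} is the most technically delicate step.
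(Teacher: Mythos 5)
Your decomposition into star-edge and path-edge swaps matches the paper's (you label them Case~1 and Case~2, the paper reverses the labels), and your handling of the easy $a=-1$ and $b=-1$ star subcases via Lemma~\ref{lem:algorithm} is exactly right, as is the observation that $\pi$ and $\mu$ are both invariant under a path-edge swap of two negative tokens. But the two hard subcases are not actually carried out, and in one of them the stated reduction is wrong.

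\textbf{Path-edge case.} You claim that unwinding $\delta$ ``reduces the task to a swap of the entries at positions $-a$ and $-b$ within $\vecp{i}$, to which an analogue of Lemma~\ref{lem:swappositive} applies.'' That is not how the recursion plays out. Processing the negative tokens $a$ and $b$ through $\delta$ is exactly an application of $\gamma$, which resolves them to non-negative values $a'$ and $b'$ and replaces entries of $\vecp{i}$ accordingly; the two configurations one must compare are then $\vec{i}_{a,b}$ with $\{a',b'\}$ written into the positions of its \emph{largest} entries $c_1,c_2$ (not at positions $-a,-b$). The paper then splits into three subcases on how $b'$ compares with $c_1$ and $c_2$: in the subcases $b'=c_1$ and $b'=c_2$ the $\delta$'s come out \emph{equal} and the $\pm 1$ comes from the discount term $[b'=c_1]$ of Lemma~\ref{lem:mucon}, and only in the subcase $b'<c_2$ does Lemma~\ref{lem:swappositive} literally apply. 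So it is the original Lemma~\ref{lem:swappositive} (on the non-negative values $a',b'$), after $\gamma$-processing, not an ``analogue'' applied directly at $-a,-b$; and two of the three branches do not use it at all.

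\textbf{Star-edge case with $a,b\neq -1$.} This is the heart of the lemma and you leave it at ``$|\Delta|$ shifts by $\pm 1$ or $0$ and $\delta$ compensates, tracked via Lemmas~\ref{lem:mucon} and~\ref{lem:resolution}.'' That sentence states the conclusion you need rather than deriving it. The paper's argument here introduces the cycle chains $a_0,\dots,a_l$ and $b_0,\dots,b_k$ and the substitutions $\theta_a$, $\theta_b$, and then splits on whether $-1$ occurs in the $a$-chain. In the subcase where $-1$ does occur one must show $\mu(f)=\mu(g)+1$ and $\delta(f)=\delta(g)$, by verifying that $\vec{i}_a$ is the $a_0$-resolution of $\vec{i}_b$ and invoking Lemma~\ref{lem:resolution}; in the subcase where $-1$ does not occur one must show $\mu(f)=\mu(g)$ and $|\delta(f)-\delta(g)|=1$, by routing through an intermediate configuration $h$ and a second application of Lemma~\ref{lem:swappositive}. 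None of that algebra is present in the proposal, and you flag precisely this as ``the most technically delicate step'' still to be done. As it stands the proposal is a correct plan with the right supporting lemmas identified, but the two central calculations are missing.
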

\begin{proof}
Clearly $\pi(f)=\pi(g)$.

\prg{Case 1.} The swap takes place on non-negative vertices.
Clearly $\mu(f)=\mu(g)$.
It is enough to show $|\delta(g)-\delta(f)|=1$.
We may assume by Lemma~\ref{lem:sp_shorten}
\begin{align*}
\delta(f) &= \delta(\vec{i};\, \lrangle{a,b}\cdot \vec{j})\,,
\\
\delta(g) &= \delta(\vec{i};\, \lrangle{b,a}\cdot \vec{j})\,,
\end{align*}
where $a,b <0$.
Let $(\vec{i}_a;a') = \gamma(\vec{i};\,a)$ and $(\vec{i}_b;b') = \gamma(\vec{i};\,b)$.
It is easy to see that there is $\vec{i}_{a,b}$ such that $\gamma(\vec{i}_a;\,b)=(\vec{i}_{a,b};\,b')$ and $\gamma(\vec{i}_b;\,a)=(\vec{i}_{a,b};\,a')$.
Without loss of generality we assume $0 \le a' < b'$.
Let $c_1$ and $c_2$ be the biggest and the second biggest in $\vec{i}$.

\prg{Case 1.1.} $b'=c_1$.
By Lemma~\ref{lem:mucon},
\begin{align*}
	\delta(f) &=  \delta(\vec{i}_a;\, \lrangle{a',b}\cdot \vec{j})
	= \delta(\vec{i}_a[a'/b'];\, {b}\cdot \vec{j}) 
	= \delta(\vec{i}_{a,b};\, {a'}\cdot \vec{j}) - [a'=c_2]
\,,
\\
	\delta(g) &= \delta(\vec{i}_b;\, {a}\cdot \vec{j}) - 1
 = \delta(\vec{i}_{a,b};\, {a'}\cdot \vec{j}) - 1 - [a'=c_2]
\,,\end{align*}
where $[a'=c_2] = 1$ if $a'=c_2$ and $[a'=c_2]=0$ otherwise.
Therefore, $|\delta(f)-\delta(g)|=1$.

\prg{Case 1.2.} $b' = c_2$.
\begin{align*}
	 \delta(f)
	 &= \delta(\vec{i}_{a};\, \lrangle{a',b} \cdot \vec{j})
	= \delta(\vec{i}_{a}[a'/c_1];\, b \cdot \vec{j})
	= \delta(\vec{i}_{a,b}[a'/c_1];\, \vec{j}) - 1\,,
\\
	\delta(g) &= \delta(\vec{i}_b;\, \lrangle{b',a}\cdot \vec{j})
= \delta(\vec{i}_{b}[b'/c_1];\, {a}\cdot \vec{j})
 = \delta(\vec{i}_{a,b}[b'/c_1];\, {a'}\cdot \vec{j})
\\ &
= \delta(\vec{i}_{a,b}[a'/c_1];\, \vec{j})
\,.\end{align*}

\prg{Case 1.3.} $b'<  c_2$.
\begin{align*}
 \delta(f)
 &= \delta(\vec{i}_{a,b}[a'/c_1][b'/c_2];\, \vec{j})\,,
\\
	\delta(g) &= \delta(\vec{i}_b;\, \lrangle{b',a}\cdot \vec{j})
 = \delta(\vec{i}_{b}[b'/c_1];\, {a}\cdot \vec{j})
 = \delta(\vec{i}_{a,b}[b'/c_1];\, {a'}\cdot \vec{j})
\\ &
= \delta(\vec{i}_{a,b}[b'/c_1][a'/c_2];\, \vec{j})
\,.\end{align*}
By Lemma~\ref{lem:swappositive}, $|\delta(g)-\delta(f)|=1$.

\prg{Case 2.} The swap takes place between a negative vertex and $0$.
Without loss of generality we may assume
\begin{gather*}
f = (a \cdot \vec{i};\, b \cdot \vec{j}),
\\
g = (b \cdot \vec{i};\, a \cdot \vec{j}),
\end{gather*}
where $a,b < 0$.
For the case where $a=-1$ or $b=-1$, we have already proved that $|\Psi(g) - \Psi(f)|=1$ in Lemma~\ref{lem:algorithm}.
So we assume $a,b \neq -1$.
Let $ (\vec{i}_b;\,b') = \gamma(a \cdot \vec{i};\,b)$.
There are negative tokens $b_0,\dots,b_k < 0$ in $a \cdot \vec{i}$ such that
$b_i=f^i(b) < 0$ for all $i \le k$ and $f(b_{k}) = b' \ge 0$.
Similarly for $ (\vec{i}_a;\,a') = \gamma(b \cdot \vec{i};\,a)$,
there are negative tokens $a_0,\dots,a_l < 0$ in $b \cdot \vec{i}$ such that
$a_i=g^i(a) < 0$ for all $i \le l$ and $g(a_{l}) = a' \ge 0$.
Let $\theta_a$ and $\theta_b$ be replacements $[a_0/a_1,\dots,a_{l-1}/a_l,a_l/a']$ and $[b_0/b_1,\dots,b_{k-1}/b_k,b_k/b']$, respectively.
Then $\vec{i}_a = (b \cdot \vec{i})\theta_a$ and $\vec{i}_b = (a \cdot \vec{i})\theta_b$.

\prg{Case 2.1.}
Suppose that the sequence $\lrangle{a_0,\dots,a_l}$ contains $-1$.
By $g(-1)=b$, we have
\[
	\lrangle{a_0,\dots,a_l,a'}=\lrangle{a_0,\dots,a_{l-k-2},-1,b_0,\dots,b_k,b'}\,,
\]
where $a_{l-k-1}=-1$, $a_{l-k} = b_0$ and $a'=b'$.
Since $f^{l-k-1}(a)=g^{l-k-1}(a)=-1$ and $f(-1)=a$, we have $[a]_f = \{a_0,\dots,a_{l-k-1}\} \in \Delta_f$.
On the other hand, $g^{k+1}(b)=f^{k+1}(b)=b' \ge 0$ means that $[b]_f \notin \Delta_f$ and $[b]_g=[a]_g \notin \Delta_g$.
Therefore, $\mu(f) = \mu(g) + 1$.
Observing that
\begin{align*}
	\vec{i}_b &= (a \cdot \vec{i})\theta_b\,,
\\	\vec{i}_a &= (b \cdot \vec{i})[a_0/a_1,\dots,a_{l-k-1}/a_{l-k}]\theta_b
\\ &= (a \cdot \vec{i})[b_0/a_0][a_0/a_1,\dots,a_{l-k-1}/a_{l-k}]\theta_b
\\ &= (a \cdot \vec{i})[a_{l-k-1}/a_0,a_0/a_1,\dots,a_{l-k-2}/a_{l-k-1}]\theta_b
\\	&= (a \cdot \vec{i})\theta_b[a_0/a_1,\dots,a_{l-k-2}/a_{l-k-1},a_{l-k-1}/a_0]
\\	&= \vec{i}_b [a_0/a_1,\dots,a_{l-k-2}/a_{l-k-1},a_{l-k-1}/a_0]\,,
\end{align*}
we see that $\vec{i}_a$ is the $a_0$-resolution of $\vec{i}_b$.
Therefore, by Lemmas~\ref{lem:mucon} and~\ref{lem:resolution},
\begin{align*}
\delta(f) &= \delta(\vec{i}_b;\, {b'}\cdot \vec{j}) - d = \delta(\vec{i}_a;\, {b'}\cdot \vec{j}) - d = \delta(g)
\,,\end{align*}
where $d=1$ if $b'=\max(\vec{i})$ and $d=0$ otherwise.
All in all, we have $|\Psi(f)-\Psi(g)|=1$.

The case where $\lrangle{b_0,\dots,b_k}$ contains $-1$ can be treated in the same way.

\prg{Case 2.2.} Suppose that $-1$ occurs neither in $\lrangle{a_0,\dots,a_l}$ nor $\lrangle{b_0,\dots,b_k}$.
It is easy to see that the two sequences $\lrangle{b_0,\dots,b_k,b'}$ and $\lrangle{a_0,\dots,a_l,a'}$ have no common elements.
Hence $[a_0]_f \notin \Delta_f$ and $[b_0]_g \notin \Delta_g$.
We obtain $\mu(f)=\mu(g)$.
Without loss of generality, we may assume $a' < b'$.
Let $h = f[a'/b',b'/a']$ be obtained from $f$ by exchanging the positions of the tokens $a'$ and $b'$.
Since Lemma~\ref{lem:swappositive} ensures $|\delta(f)-\delta(h)|=1$,
it is enough to show $\delta(g)=\delta(h)$.
By Lemma~\ref{lem:mucon} and the fact $a' < b' \le \max(\vec{i})$,
\begin{align*}
\delta(g) &= \delta(\vec{i}_a;\, a' \cdot \vec{j}) = \delta((b_0 \cdot \vec{i})\theta_a;\, a' \cdot \vec{j})
\end{align*}
The $b_0$-resolution of $\vec{i}_a$ is 
\begin{align*}
	&  (b_0 \cdot \vec{i})\theta_a [b_{0}/b_{1},\dots,b_{k-1}/b_k,b_k/b',b'/b_0]
	 = (b' \cdot \vec{i})\theta_a\theta_b
\,.\end{align*}
On the other hand,
\begin{align*}
\delta(h) &= \delta(f[a'/b',b'/a'])
=  \delta((a_0 \cdot \vec{i})[a'/b',b'/a'][b_0/b_1,\dots,b_{k-1}/b_k,b_k/a'];\, a' \cdot \vec{j})
 \\& =  \delta((a_0 \cdot \vec{i})[b_0/b_1,\dots,b_{k-1}/b_k,b_k/b',b'/a'];\, a' \cdot \vec{j}) 
  =  \delta((a_0 \cdot \vec{i})\theta_b[b'/a'];\, a' \cdot \vec{j}) \,.
\end{align*}
 The $a_0$-resolution of $(a_0 \cdot \vec{i})\theta_b[b'/a']$ is given as 
\begin{align*}
	& (a_0 \cdot \vec{i})\theta_b[b'/a'][a_0/a_1,\dots,a_{l-1}/a_{l},a_l/b',b'/a_0]
	\\ &= (a_0 \cdot \vec{i})\theta_b[a_l/a'][a_0/a_1,\dots,a_{l-1}/a_{l},b'/a_0]
	\\ &= (b' \cdot \vec{i})\theta_b [a_0/a_1,\dots,a_{l-1}/a_{l},a_l/a']
	\\ &= (b' \cdot \vec{i})\theta_b \theta_a= (b' \cdot \vec{i}) \theta_a\theta_b
\,,\end{align*}
since $\theta_a$ and $\theta_b$ are independent.
Therefore,
\(\delta(g) =\delta(h) 
\) by Lemma~\ref{lem:resolution}.
\end{proof}

\begin{theorem}\label{thm:gerbera}
Token swapping on star-path graphs can be solved in polynomial time.
\end{theorem}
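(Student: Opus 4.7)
The plan is to mirror the structure of the proof of Theorem~\ref{thm:lollipop}: exhibit a potential function $\Psi$ on configurations of $Q_{m,n}$ such that (i) $\Psi(f)\ge 0$ with equality iff $f$ is the identity, (ii) every single swap alters $\Psi$ by exactly $1$, and (iii) every swap chosen by Algorithm~\ref{alg:gerbera} decreases $\Psi$ by $1$. Combined, (i)--(iii) give $\OPT(Q_{m,n},f)\ge \Psi(f)$ as a universal lower bound, and the algorithm realizes this bound, hence it is optimal and runs in polynomial time.

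The function $\Psi(f)=\pi(f)+\mu(f)+\delta(f)$ is already motivated in the text: $\pi$ counts the swaps needed to route the non-negative tokens to their goals along the path, $\mu$ gives the well-known cost $|N_f|+|\Delta_f|$ for rearranging the negative tokens inside the star (this is the Pak/Cayley style formula for star graphs), and $\delta$ is a nonpositive ``discount'' that captures the savings available when an inner-\textbf{while}-loop swap on the star simultaneously nudges a misplaced positive token toward its path destination. The groundwork needed for the case analysis consists of a handful of structural lemmas about $\delta$: a shortening lemma analogous to Lemma~\ref{lem:lp_shorten} (replace a prefix $(\vec{i};\vec{j}_1\cdot\vec{k})$ by $(\vecp{i};\vec{k})$ with the $m$ smallest tokens in front and an additive constant $\alpha\le 0$); a removal lemma analogous to Lemma~\ref{lem:remove} allowing deletion of a non-negative token $k$ from $\vec{j}$ whenever $m$ tokens smaller than $k$ already lie to its left, with the corollary that $\delta(f)=0$ once $\mrm{Inv}(f,k)\le k$ for the relevant entries; a ``$\gamma$ simulates the \textbf{while} loop'' lemma relating $\delta(\vec{i};a\cdot\vec{j})$ to $\delta(\gamma(\vec{i};a)\cdot\vec{j})$ with a possible $-1$ adjustment exactly when the resolved head becomes the maximum of $\vec{i}$; a resolution lemma stating $\delta$ is invariant under the $a$-resolution operation; and a swap lemma saying that exchanging two non-negative entries of $\vec{i}$ changes $\delta$ by exactly $\pm 1$ (the core algebraic fact, proved by induction on the definition of $\delta$).

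With these tools in hand, the bulk of the argument is a case analysis establishing $|\Psi(f)-\Psi(fe)|=1$ for every edge $e$ of $Q_{m,n}$, split into three families according to the signs of the two swapped tokens (positive/positive, positive/negative, negative/negative), each subdivided by whether the swap takes place on two non-negative vertices of the path, between vertex $0$ and a negative leaf, or internally to the star. Within each subcase the values of $\pi$, $\mu$, and $\delta$ are compared by invoking the corresponding lemma: e.g., for a path swap of two positives one tracks $\mrm{Inv}$ in the three regimes $\mrm{Inv}(f,b)<b$, $=b$, $>b$ and applies the removal/shortening/swap lemmas to $\delta$; for a bridge swap involving a negative token and a positive one, one applies the $\gamma$-simulation lemma and the resolution lemma to relate the $a$-resolutions of both sides; for negative/negative swaps inside the star the key identity is that the two iterated resolutions $\gamma(\vec{i};a)$ and $\gamma(\vec{i};b)$ commute up to the same $\theta_a\theta_b$ replacement.

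The proof then concludes with a short lemma (essentially the statement analogous to the final lemma in Appendix~\ref{sec:lollipop}) verifying that the two kinds of swaps the algorithm performs are precisely the ones that land in the $\Psi$-decreasing branches of the case analysis: a Type~A swap always falls into the case handled by the $\gamma$-simulation with the $-1$ adjustment, and a Type~B swap on the path always lies in the $\mrm{Inv}(f,k)\le k$ regime so Corollary~\ref{cor:sp_remove} keeps $\delta$ constant. The main obstacle is genuinely combinatorial: verifying case (ii) for \emph{every} conceivable edge, because the interaction between $\mu$ (which counts orbits in the star) and $\delta$ (which counts potential path-savings obstructed by misplaced positive tokens on star vertices) produces many subcases that must be glued together via the resolution and $\gamma$-simulation lemmas. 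Getting the bookkeeping right for the bridge swaps, in particular, is where the argument is most delicate and where the design of $\delta$ as a recursive function rather than a static count becomes essential.
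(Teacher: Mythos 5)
Your plan reproduces the paper's argument essentially verbatim: the potential $\Psi=\pi+\mu+\delta$ with its three properties, the auxiliary lemmas (shortening, removal, $\gamma$-simulation, resolution, and the $\pm1$-swap lemma for non-negative entries of $\vec{i}$), the three-way case split by token signs with the commuting $\theta_a\theta_b$ identity for the negative/negative star case, and the closing lemma showing Algorithm~\ref{alg:gerbera} only takes $\Psi$-decreasing steps. One small slip: in $Q_{m,n}$ every star edge is incident to vertex $0$, so there is no ``internal to the star'' sub-case to distinguish from the bridge case---the location dichotomy is just path edge versus star edge.
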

\begin{proof}
By Lemmas~\ref{lem:pospos}, \ref{lem:posneg}, \ref{lem:negneg} and~\ref{lem:algorithm}, the number of swaps needed is exactly $\Psi(f)$.
Obviously $\Psi$ is computable in polynomial time.
\end{proof}

\section{Proof that the PPN-Separable 3SAT Is NP-hard}\label{sec:SAT}
	We show the NP-hardness of \SAT{} by a reduction from the (usual) 3SAT~\cite{Cook71}.
	For a given CNF ${F} $ on $X$, we may without loss of generality assume that for each $x \in X$,
	the positive literal $x$ and the negative one $\neg x$ occur exactly the same number of times in ${F}$.
	Otherwise, if $x$ occurs $k$ more times than $\neg x$ does,
	we add clauses $\{\neg x,y_i,\neg y_i\}$ to ${F}$ for all $i \in \{1,\dots, k\}$ where $y_i$ are new Boolean variables.
	Now, for a given CNF ${F}$ on $X = \{ x_1 ,\dots, x_m \}$
	such that the positive and negative literals $x_i$ and $\neg x_i$ occur exactly the same number of times for each Boolean variable $x_i \in X$,
	we construct ${F}'={F}_1 \cup {F}_2 \cup {F}_3$ on $X'$ such that
	\begin{itemize}
	\item ${F} $ is satisfiable if and only if ${F} '$ is satisfiable,
	\item each positive literal $x_i$ occurs just once in each of ${F}_1$ and ${F}_2$,
	\item each negative literal $\neg x_i$ occurs just once in ${F}_3$.
	 \end{itemize}
	Let $n_i$ be the number of occurrences of the positive literal $x_i$ in ${F}$
	(thus of the negative literal $\neg x_i$)
	for each $x_i \in X$.
	\begin{enumerate}
		\item Let $X' =
			\{\, {x}_{i,j} , \bar{x}_{i,j} \mid 1 \le i \le m,\, 1 \le j \le n_i \,\}$.
		\item Let ${F}_1$ be obtained from ${F}$ by replacing the $j$-th occurrence of the positive literal ${x}_i$
		 with ${x}_{i,j}$,
		and the $j$-th occurrence of the negative literal $\neg {x}_i$ with $\bar{x}_{i,j}$ for $ j \in \{1,\dots, n_i\}$.
		\item Let ${F}_2 = \{\, \{x_{i,j}, \bar{x}_{i,j} \} \mid 1 \le i \le m \text{ and } 1 \le j \le n_i\,\}$.
		\item Let ${F}_3 = \{\, \{ \neg x_{i,j}, \neg \bar{x}_{i,j+1} \} \mid 1 \le i \le m \text{ and } 1 \le j < n_i\,\} \cup
			 \{\, \{ \neg {x}_{i,n_i}, \neg \bar{x}_{i,1} \} \mid 1 \le i \le m \,\}$.
	\end{enumerate}
	Clearly ${F}'$ is an instance of \SAT{}.
	If a map $\phi: X \to \{0,1\}$ satisfies $F$, then $\phi': X' \to \{0,1\}$ satisfies $F'$ where $\phi'(x_{i,j})=1-\phi'(\bar{x}_{i,j}) =\phi(x_i)$ for each $i$ and $j$.
	Conversely, suppose that $F'$ is satisfied by $\phi': X' \to \{0,1\}$.
	The clauses of $F_2$ and $F_3$ ensure that $\phi'(x_{i,j})=1-\phi'(\bar{x}_{i,j})=\phi'(x_{i,1})$ for all $j \in \{1,\dots,n_i\}$.
	Then it is now clear that $\phi$ defined by $\phi(x_{i}) = \phi'(x_{i,1})$ satisfies $F$.

\end{document}